\documentclass[a4paper,UKenglish,cleveref, autoref, thm-restate]{lipics-v2021}
\usepackage{tikz-cd}
\usepackage{bbm}
\usepackage{stmaryrd}
\usepackage{comment}
\usepackage{bussproofs}

\DeclareMathOperator{\op}{op}
\DeclareMathOperator{\alg}{Alg}
\DeclareMathOperator{\psh}{PSh}
\DeclareMathOperator{\id}{id}
\DeclareMathOperator{\app}{app}
\def\cat{\EuScript}
\def\ii{\mathrm{i}}
\def\bb{\mathrm{b}}
\def\cc{\mathrm{c}}

\title{What is a Model of the Linear Lambda Calculus?}

\author{Arturo De Faveri}{Université Paris Cité, CNRS, IRIF, Paris, France}{defaveri@irif.fr}{https://orcid.org/0009-0004-0115-4500}{}

\authorrunning{Arturo De Faveri} 

\Copyright{Arturo De Faveri} 

\ccsdesc[500]{Theory of computation~Lambda calculus}
\ccsdesc[300]{Theory of computation~Categorical semantics}
\keywords{Lambda calculus, Operads, Monoidal Categories, Linearity} 

\nolinenumbers 

\EventEditors{}
\EventNoEds{2}
\EventLongTitle{}
\EventShortTitle{}
\EventAcronym{}
\EventYear{}
\EventDate{}
\EventLocation{}
\EventLogo{}
\SeriesVolume{}
\ArticleNo{}

\begin{document}

\maketitle

\begin{abstract}
We investigate the notion of model of the linear $\lambda$-calculus from an algebraic perspective. 
Our starting point is the operad of linear $\lambda$-terms, whose algebras provide a natural candidate.
We prove that this notion of model is equivalent to two other structures: a linear analogue of Curry's $\lambda$-algebras,
and semiclosed operads, a class of operads equipped with an internal abstraction operation. 
The equivalence between these three approaches unifies three complementary answers to the question of what should be regarded as a model of the linear $\lambda$-calculus.
As a second contribution, we give a finite equational presentation for the linear variant of $\lambda$-algebras using the linear combinators $\mathbf{B}$, $\mathbf{C}$, and $\mathbf{I}$.  
Finally, exploiting the equivalence with semiclosed operads, we establish a linear analogue of Scott's representation theorem by showing that every model arises as a reflexive object in a natural monoidal closed category of presheaves.
\end{abstract}

\section{Introduction}
The $\lambda$-calculus \cite{B84,BM22} was introduced by Church in the 1930s as part of his research into the foundations of mathematics.
The problem of its denotational semantics has historically been central and somehow elusive.
The turning point is credited to Dana Scott \cite{Scott}, which proved that models of the $\lambda$-calculus can be elegantly understood in terms of reflexive objects in cartesian closed categories. 
Scott's approach started from a monoid of composition associated with a specific model, more directly a $\lambda$-algebra, and constructed a cartesian closed category by taking the category of retracts generated by the monoid. 
This pioneering work subsequently prompted Koymans \cite{K82} and the duo Lambek and Philip Scott \cite[Section I.15]{LS} to analyse in detail what minimal algebraic structure on the monoid of composition was necessary and sufficient to guarantee that the corresponding category would be cartesian closed. 
However, this cartesian closed category is obtained by an essentially artificial construction, and it remains an ongoing question whether models of the $\lambda$-calculus can instead be realised as reflexive objects in naturally occurring cartesian closed categories. 
In this direction, some works have sought to reinterpret classical models of the $\lambda$-calculus from a more abstract categorical perspective \cite{H06}; early contributions to this line of research include \cite{O82,J92,J93}.
In more recent times, Hyland \cite{H17, H14} proposed a paradigm shift, offering a modern ``dress'' to the semantics of the $\lambda$-calculus through the tools of categorical logic and algebraic theories.
In the framework developed by Hyland, a model of the $\lambda$-calculus is just a clone equipped with a \emph{semiclosed} structure. 
The main result of this setup establishes, exploiting a categorical equivalence between semiclosed clones and $\lambda$-algebras, how Scott's representation theorem emerges naturally from the Yoneda lemma.
Recently, categorifications of various aspects of both the typed and untyped $\lambda$-calculus have become an active area of research, leading to a growing body of work \cite{MZ15,Mazza,GuerrieriO21,KerinecMO23,Saville,Amorim}.

Our contribution can be viewed as a counterpart of Hyland's perspective by shifting our focus on the \emph{linear} fragment of the $\lambda$-calculus. 
Unlike the ordinary $\lambda$-calculus, where variables may be duplicated or discarded implicitly, the linear $\lambda$-calculus enforces the principle that each variable is used exactly once. 
Following Girard's introduction of Linear Logic \cite{G87}, the notion of linearity has become a central theme in theoretical computer science \cite{EGRS04,M25}, providing both a semantic and a syntactic discipline for reasoning about resources. 
Since then, linearity has influenced proof theory and categorical semantics \cite{Mellies}, while also inspiring reduction strategies \cite{Guerrini}, type systems \cite{Pfenning} and the design of programming languages \cite{Wadler,Ahmed}.
From both the categorical and the algebraic viewpoints, the linear $\lambda$-calculus calls for replacing cartesian structures with monoidal ones. 
Symmetric monoidal closed categories provide the appropriate semantics for linear abstraction, while operads \cite{Operads} offer the corresponding algebraic language, since their substitution mechanism does not implicitly duplicate or discard variables. 
The collection of linear $\lambda$-terms carries an operad structure, which constitutes the starting point of the present paper. 
Moreover, just as every clone determines a Lawvere theory, every operad gives rise to a symmetric monoidal category, known as its associated PROP. 
The presheaf category over this PROP inherits a symmetric monoidal closed structure via a well-known construction (Day convolution). 
This provides a natural categorical environment in which the algebraic models of the linear $\lambda$-calculus can be represented as reflexive objects, yielding a linear analogue of Hyland's angle on Scott's representation theorem. 

\subparagraph*{Content of the paper}
The paper is organised as follows. 
After reviewing some preliminaries (Section 2), we study the category of algebras of the operad $L$ of linear $\lambda$-terms, which provides our first notion of model for the linear $\lambda$-calculus (Section 3). 
In Section 4, we show that these algebras admit an equational description as a distinguished subvariety of $\mathbf{BCI}$-algebras---a variety generated by the linear combinators $\mathbf B$, $\mathbf C$, and $\mathbf I$---thereby introducing a linear analogue of Curry's notion of $\lambda$-algebra. 
In particular, we identify a finite set of equations characterising precisely those $\mathbf{BCI}$-algebras that are sound and complete with respect to the $\beta$-theory of the linear $\lambda$-calculus (Theorem \ref{thm:equiv}). 
In Section 5, we study semiclosed operads and prove that the operad $L$ is initial among them. 
Semiclosed operads provide the third notion of model for the linear $\lambda$-calculus.
The main technical result of the paper is the equivalence between the three resulting categories: $L$-algebras, linear $\lambda$-algebras, and semiclosed operads.
This is developed in Section 6 and culminates in Theorem \ref{thm:final}. 
We exploit this correspondence to establish a linear analogue of Scott's representation theorem, proving that every model of the linear $\lambda$-calculus can be represented as a reflexive object in a natural monoidal closed category of presheaves (Theorem \ref{thm:scott}). 

\subparagraph*{Realted work}
The linear $\lambda$-calculus has attracted sustained attention over the past decades. 
Already in the early 1990s, Jacobs \cite{J93} recognised symmetric monoidal closed categories as the natural categorical setting for substructural $\lambda$-calculi and introduced the corresponding notion of reflexive object. 
Categorical methods have been employed to provide a uniform treatment of abstract syntax with binding and substitution across the spectrum of substructural calculi \cite{FR25}.
Recently, Hasegawa has investigated a \emph{braided} variant of the linear $\lambda$-calculus \cite{H20,H23}.
To draw a comparison with the present paper, \cite{H23} introduces an equivalent formulation of the notion of semiclosed operad and establishes a finite equational axiomatisation of the variety of $\mathbf{BCI}$-algebras that is sound and complete for the $\beta\eta$-theory of the linear $\lambda$-calculus.
It is also possible to define an analogue of $\mathbf{BCI}$-algebra for the \emph{planar} $\lambda$-calculus, called $\mathbf{BI}(-)^{\bullet}$-algebra and developed by Tomita \cite{Tomita,Tomita2}.  
The study of the linear $\lambda$-calculus has also uncovered unexpected links with seemingly distant areas. 
A notable example is the work of Zeilberger \cite{ZG15,Z16}, who established surprising connections between linear $\lambda$-terms and geometric combinatorics. 
A major source of inspiration for the present paper is Hyland's algebraic-theoretic approach to the semantics of the $\lambda$-calculus \cite{H17}. 
However, we depart from Hyland by establishing a direct equivalence between $L$-algebras and semiclosed operads, avoiding the introduction of an intermediate notion analogous to Hyland's $\lambda$-monoid \cite{H14}. 
Finally, our definition of linear $\lambda$-algebra is indebted to Selinger's conceptual analysis of Curry's notion of $\lambda$-algebra given in his surveys \cite{S02,S13}.

\section{Preliminaries}
We recall some basic notions and results concerning monoidal categories \cite[Chapters VII, XI]{ML}, operads \cite{Operads}, and the $\lambda$-calculus \cite{B84,BM22}.
We assume that the reader is familiar with end and coend calculus \cite{L21}. 
Throughout the paper, we denote with $\mathbb{N}$ the set of nonnegative integers. 

\subsection{Monoidal categories}
\label{sec:monoidal}
A \emph{monoidal} category is a category $\EuScript{C}$ together with a bifunctor $-\otimes- : \EuScript{C}^2 \to \EuScript{C}$, called \emph{tensor}, an object $I \in \EuScript{C}$, and natural isomorphisms: 
\begin{equation*}
    \alpha_{a,b,c}: (a \otimes b) \otimes c  \to a \otimes (b \otimes c), \quad \lambda_a: I \otimes a \to a, \quad \rho_a : a \otimes I \to a,
\end{equation*}
for all $a,b,c \in \EuScript{C}$, 
subject to some coherence conditions that can be found for instance in \cite[p. 151]{R20}. 
If $\alpha, \lambda, \rho$ are all identities, the monoidal category $\EuScript{C}$ is said to be \emph{strict}. 

Let $\EuScript{C}$ and $\EuScript{D}$ be two monoidal categories. 
A functor $F: \EuScript{C} \to \EuScript{D}$ is a \emph{(lax) monoidal functor} if for each pair $a, b \in \EuScript{C}$ there is a morphism $\phi_{a,b} : F(a) \otimes F(b) \to F(a \otimes b)$, natural in $a,b$ and a morphism $\eta: I \to F(I)$, 
well-behaved with respect to $\alpha, \lambda$ and $\rho$ \cite[p. 161-162]{R20}. 
We say that $F$ is \emph{strong monoidal} if $\phi_{a,b}$ and $\eta$ are natural isomorphisms.  
We say that $F$ is \emph{strictly monoidal} if $\phi_{a,b}$ and $\eta$ are identities.  

A monoidal category $(\cat C, \otimes, I)$ is \emph{symmetric} if for all $a,b \in \cat C$ there is a natural isomorphism $\tau_{a,b} : a \otimes b \simeq b \otimes a$ 
satisfying some compatibility conditions \cite[p. 158]{R20}. 


\begin{definition}
    A (symmetric) monoidal category $\EuScript{C}$ is \emph{closed} if for every $c \in \EuScript{C}$ the functor $- \otimes c : \EuScript{C} \to \EuScript{C}$ admits a right adjoint, i.e. a functor $c \multimap -: \EuScript{C} \to \EuScript{C}$ witnessing a natural isomorphism
        $\EuScript{C}(b \otimes c, d) \simeq \EuScript{C}(b, c \multimap d)\text{,}$
    for every $b,d \in \cat C$. 
\end{definition}

An important example of symmetric monoidal products is given by cartesian products. 
In this case the monoidal category $\cat C$ is called \emph{cartesian}, and \emph{cartesian closed} if moreover $\cat{C}$ is closed.   

\begin{definition} [{\cite[Definition 5.3]{J93}}]
    \label{def:reflexive}
    Le $\cat C$ be a closed symmetric monoidal category. 
    An object $x \in \cat C$ is \emph{linear reflexive} if $x \multimap x$ is a retract of $x$, i.e., there are morphisms
    $r:x \to (x \multimap x )$ and $s:(x \multimap x) \to x$ such that $r \circ s = \id_{x \multimap x}$. 
\end{definition}

Let $ (\cat C, \otimes, I)$ be a symmetric monoidal category. 
The category of presheaves $\psh(\cat C)$ on $\cat C$ canonically inherits a monoidal structure via \emph{Day's convolution product} \cite{Day}:
given $F, G : \cat C^{\op} \to \mathbf{Set}$ we define 
\begin{equation}
    \label{eq:day}
    (F \otimes G)(c):=\int^{x,y \in \cat C} \cat C(c,x \otimes y) \times F(x) \times G(y)
\end{equation}
for every $c \in \cat C$. 
The unit of this product is given by the representable presheaf $J(c):=\cat C(c,I)$ for every $c \in \cat C$. 
This product makes the Yoneda embedding $\cat C \to \psh (\cat C)$ strong monoidal, i.e., $\cat C(-,a) \otimes \cat C(-,b) \simeq \cat C(-, a \otimes b)$. 
Finally, this structure is actually monoidal \emph{closed} with exponentials given by 
\begin{equation}
    (F \multimap G)(c) = \int_{x \in \cat C} \mathbf{Set}(F(x),G(c \otimes x))
\end{equation}
for every $c \in \cat C$.

\subsection{Operads and PROPs}
Operads, defined for the first time by May \cite{M72}, originated in algebraic topology as a tool to describe algebraic structures that are invariant under homotopy \cite{BV73}.

Let $S_n$ be the symmetric group on $n$ elements. 
Given $\sigma_i \in S_{k_i}$ for $1 \le i \le n$, let $[\sigma_1, \ldots, \sigma_n] \in S_{k_1 + \cdots + k_n}$ be the permutation that permutes the first $k_1$ elements by $\sigma_1$, 
the next $k_2$ elements by $\sigma_2$, etc., and keeps the order of the blocks intact.
Given $\sigma \in S_n$, let $\hat{\sigma} \in S_{k_1 + \cdots +k_n}$ be the permutation that acts by breaking $\{k_1 + \cdots +k_n\}$ into 
$n$ blocks, the first of size
$k_1$, the second of size 
$k_2$, through the 
$n$-th block of size 
$k_n$, and then permutes these 
blocks by $\sigma$, keeping each block intact. 

\begin{definition}
    \label{def:operad}
    An \emph{operad} $O$ is a sequence of sets $\{O(n) : n \in \mathbb N\}$ together with an element $1 \in O(1)$ called the \emph{identity} and \emph{composition} maps
    \begin{equation*}
        \circ: O(n) \times O(k_1) \times \cdots \times O(k_n) \to O(k_1 + \cdots + k_n), \, (f, g_1, \ldots, g_n) \mapsto f \circ (g_1, \ldots, g_n), 
    \end{equation*}
    satisfying, writing $f(g_1, \ldots, g_n)$ in place of $f \circ (g_1, \ldots, g_n)$, the following identites: 
    \begin{enumerate}
        \item $f(1, \ldots, 1) = f$;
        \item $1(f)=f$; 
        \item $f(g_1(h^1_1, \ldots, h^1_{k_1}), \ldots, g_n(h^n_1, \ldots, h^n_{k_n})) =f(g_1, \ldots, g_n) \circ (h^1_1, \ldots h^1_{k_1}, \ldots, h^n_1, \ldots, h^n_{k_n})$. 
    \end{enumerate}
    Moreover, each $O(n)$ is endowed with a right action of $S_n$, $(\sigma, f) \mapsto f\sigma$, satisfying: 
    \begin{enumerate}
        \setcounter{enumi}{3}
        \item $(f\sigma )(g_1, \ldots, g_n) = f(g_{\sigma^{-1}1}, \ldots, g_{\sigma^{-1}n})\hat{\sigma}$; 
        \item $f(g_1\sigma_1, \ldots,  g_n\sigma_n) = f(g_1, \ldots, g_n)[\sigma_1, \ldots, \sigma_n]$. 
    \end{enumerate}
    A morphism of operad $\phi: O \to O'$ is given by a sequence of maps $\phi_n : O(n) \to O'(n)$ such that $\phi_1(1)=1$, $\phi_n(f\sigma)=\phi_n(f)\sigma$ and 
    $\phi_{k_1 + \cdots + k_n}(f(g_1, \ldots, g_n)) = \phi_n(f) (\phi_{k_1}(g_1), \ldots, \phi_{k_n}(g_n))\text{.}$
\end{definition}

\begin{example}
    \label{ex:endo}
    Let $\cat C$ be a symmetric monoidal category, and let $x \in \cat C$. 
    Then the sequence of sets $O(n):=\cat C(x^{\otimes n}, x)$, $n \in \mathbb{N}$, is an operad called the \emph{endomorphism operad} of $x$. 
\end{example}

\begin{definition}
    \label{def:algebra}
    Let $O$ be an operad. 
    An \emph{algebra} or a \emph{model} for $O$ is a set $A$ together with maps (called actions) $\cdot: O(n) \times A^n \to A$ for all $n \in \mathbb{N}$ that satisfy the following conditions: 
    \begin{enumerate}
        \item $f(g_1, \ldots, g_n) \cdot (a^1_1, \ldots a^1_{k_1}, \ldots, a^n_1, \ldots, a^n_{k_n}) = f \cdot (g_1 \cdot (a^1_1, \ldots, a^1_{k_1}), \ldots ,g_n \cdot (a^n_1, \ldots, a^n_{k_n}))$; 
        \item $1 \cdot a = a$; 
        \item $f \cdot (a_1, \ldots, a_n) = (f \sigma) \cdot (a_{\sigma^{-1} 1}, \ldots, a_{\sigma^{-1} n})$ for every $\sigma \in S_n$. 
    \end{enumerate}
    Given an operad $O$, we write $\alg(O)$ for the category of algebras of $O$. 
\end{definition}

\begin{definition}
    A \emph{PROP} is a strict symmetric monoidal category $\cat P$ whose objects are the nonnegative integers, with unit $I:=0$ and monoidal product $n \otimes m := n+m$ for every $n, m \in \mathbb{N}$. 
\end{definition}


Given a group $G$ that acts on the right on a set $X$ and on the left on a set $Y$ we denote by $X \times_G Y$ the quotient $X \times Y / \sim$ such that 
$(xg, y) \sim (x,gy)$ for every $x \in X, y \in Y$ and $g \in G$. 
Every operad $O$ gives rise to a PROP $\cat O$ in the following way. 
    For every $n,m \in \mathbb N$, we define 
    \begin{equation}
        \cat O(n,m):= \coprod_{k_1 + \cdots + k_m = n} (O(k_1) \times \cdots \times O(k_m)) \times_{S_{k_1}\times \cdots \times S_{k_m}}  S_n 
    \end{equation}
where we observe that $S_{k_1}\times \cdots \times S_{k_m}$ acts on $S_n$ on the left by
multiplication by the image of the natural inclusion of $S_{k_1}\times \cdots \times S_{k_m}$
into $S_n$.

Let $O$ be an operad and let $\cat O$ be the corresponding PROP. 
The category of algebras for $O$ can be equivalently be described as the category of strict monoidal functors $\cat O \to \mathbf{Set}$. 

From the fact that the Yoneda functor is strong monoidal and from the Yoneda lemma we have the following. 
\begin{lemma}
    \label{lem:endo}
    Let $O$ be an operad and $\cat O$ be the associated \emph{PROP}.  
    Then $O$ is isomorphic to endomorphism operad of the preasheaf $\cat O(-,1)$ in $\psh(\cat O)$. 
\end{lemma}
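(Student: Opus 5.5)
Write $y:\cat O \to \psh(\cat O)$ for the Yoneda embedding and set $X:=y(1)=\cat O(-,1)$. The plan is to show, for each $n$, a bijection $O(n)\cong \psh(\cat O)(X^{\otimes n},X)$, and then check that these bijections respect identities, composition and the symmetric group actions.

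First, since $y$ is strong monoidal for Day convolution, and $\otimes$ on objects of $\cat O$ is addition, induction on $n$ gives natural isomorphisms
\[X^{\otimes n}=\cat O(-,1)^{\otimes n}\cong \cat O(-,1\otimes\cdots\otimes 1)=\cat O(-,n).\]
The base case is $X^{\otimes 0}=J=\cat O(-,0)$, the Day unit.

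Next, the Yoneda lemma gives
\[\psh(\cat O)(\cat O(-,n),\cat O(-,1))\cong \cat O(n,1).\]
From the defining formula of the PROP with $m=1$ we have $k_1=n$, and so
\[\cat O(n,1)=O(n)\times_{S_n}S_n\cong O(n).\]
Composing these isomorphisms yields $O(n)\cong \psh(\cat O)(X^{\otimes n},X)$. Equivalently, the endomorphism operad of $X$ in $\psh(\cat O)$ is identified with the endomorphism operad of $1$ in $\cat O$, because $y$ is fully faithful and strong monoidal.

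The remaining step is to check that these bijections form an operad morphism, and this is where the real content sits.

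\emph{Identities and composition.} The identity $1\in O(1)$ corresponds to $\id_1$ in $\cat O$, hence to $\id_X$. Operadic composition $f(g_1,\ldots,g_n)$ corresponds in $\cat O$ to $f\circ(g_1\otimes\cdots\otimes g_n)$, by the construction of the PROP. The functor $y$ preserves composition strictly, and it preserves $\otimes$ up to the coherent structure isomorphisms $\phi$. So the composite in the endomorphism operad, $y(f)\circ(y(g_1)\otimes\cdots\otimes y(g_n))$ transported along the $\phi$'s, equals $y(f\circ(g_1\otimes\cdots\otimes g_n))$. This follows from naturality of $\phi$ together with the coherence of a strong monoidal functor.

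\emph{Symmetric actions.} The action of $S_n$ on $O(n)$ corresponds to precomposition with the symmetry $\sigma:n\to n$ in $\cat O$. A strong symmetric monoidal functor carries the symmetry of $\cat O$ to the Day symmetry, so the actions agree.

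I expect the main obstacle to be this coherence bookkeeping: one must make sure that $y$ is strong \emph{symmetric} monoidal, and that the chosen isomorphisms $X^{\otimes n}\cong y(n)$ are compatible with the bracketing used to define composition in the endomorphism operad. I would handle this by invoking the coherence theorem, which allows one to work as though $\psh(\cat O)$ were strict.
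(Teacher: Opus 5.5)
Your proposal is correct and follows the same route as the paper, which justifies the lemma in one line: strong monoidality of the Yoneda embedding gives $\cat O(-,1)^{\otimes n}\simeq\cat O(-,n)$, and the Yoneda lemma gives $\psh(\cat O)(\cat O(-,n),\cat O(-,1))\simeq\cat O(n,1)\simeq O(n)$. Your further check that these bijections respect identities, composition and the symmetric actions, using that the Yoneda embedding is fully faithful and strong symmetric monoidal, makes explicit what the paper leaves implicit.
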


\subsection{The lambda calculus}
The set $\Lambda$ of \emph{$\lambda$-terms} over a countably infinite set $X$ of variables is defined by the grammar: 
$M,N ::= x\mid \lambda x.M\mid MN$, for $x \in X$. 
We assume that application associates on the left, and has higher precedence than abstraction.
For instance, $\lambda xyz.xyz$ stands for $\lambda x.(\lambda y.(\lambda z.(xy)z))$.
From now on, $\lambda$-terms are considered up to renaming of bound variables. 
A $\lambda$-term without free variables is called a \emph{combinator}.
The set of combinators is denoted with $\Lambda(0)$. 
The \emph{$\beta$-reduction} $\to_\beta$ and \emph{$\eta$-reduction} $\to_\eta$ are defined as the closures of the rules: 
\begin{equation*}
    (\lambda x.M)N \to_\beta M[N/x], \qquad \qquad \lambda x.Mx \to_\eta M, \,   \text{if $x$ does not occur free in $M$}
\end{equation*}
where we denote by $M[N/x$] the \emph{capture-free substitution} of $N$ for all free occurrences of $x$ in $M$. 
We define \emph{$\beta\eta$-reduction} as $\to_{\beta \eta}\ =\ \to_\beta \cup \to_\eta$. 
We denote with $=_\beta$ and $=_{\beta \eta}$ the smallest equivalence relation on $\Lambda$ containing $\to_\beta$ and $\to_{\beta \eta}$, respectively. 

\begin{example}
    \label{ex:lincomb}
       The combinators below are used throughout the paper: 
    \begin{equation*}
        \mathbf{I}= \lambda x.x, \qquad \mathbf{1}=\lambda xy.xy, \qquad \mathbf{B} = \lambda fgx. f(gx),\qquad \mathbf{C} = \lambda fxy.fyx \text{.}
    \end{equation*} 
    We recall that $\mathbf{I}$ is the identity, $\mathbf{1}$ the 1st Church numeral, $\mathbf{B}$ stands for the composition and $\mathbf{C}$ for the swapping. 
\end{example}

\section{The linear lambda calculus}
\begin{definition}
    A $\lambda$-term $M$ is called \emph{linear} if every free variable has exactly one occurence in $M$, and every bound variable has exactly one occurence in the scope of its abstraction.
    We denote by $\Lambda_{lin}$ the set of linear $\lambda$-terms. 
\end{definition}

Observe that $\to_{\beta}$ induces a reduction on $\Lambda_{lin}$ since if $M \in \Lambda_{lin}$ and $M \to_{\beta} N$, then $N \in \Lambda_{lin}$.
The combinators defined in Example \ref{ex:lincomb} are all linear. 
If $X \subseteq \Lambda(0)$ is a set of combinators, we denote by $X^+$ the least set $Y \subseteq \Lambda(0)$ such that 
$X \subseteq Y$ and if $M,N \in Y$ then $MN \in Y$. 
A set $B \subseteq \Lambda(0)$ is a \emph{basis} for a set of combinators $X$ (supposed to be closed under $=_{\beta}$) if for every $M \in X$ there is $N \in B^+$ such that $M =_{\beta} N$. 
For example, $\{\mathbf{K}, \mathbf{S}\}$ is a basis for $\Lambda(0)$ \cite[Proposition 8.1.2]{B84}. 

\begin{lemma}
    The set $\{\mathbf{B}, \mathbf{C}, \mathbf{I}\}$ is a basis for the linear combinators $\Lambda_{lin}(0)$. 
\end{lemma}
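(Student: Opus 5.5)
We prove the statement by a linear version of the usual bracket-abstraction algorithm. We define, for every linear $\lambda$-term $M$ built from variables, application and the constants $\mathbf{B},\mathbf{C},\mathbf{I}$ (a ``linear combinatory term''), and every variable $x$ occurring exactly once in $M$, a term $\lambda^* x.M$ in which $x$ does not occur. The clauses are:
\begin{itemize}
\item $\lambda^* x.x := \mathbf{I}$;
\item $\lambda^* x.(PQ) := \mathbf{C}(\lambda^* x.P)Q$ if $x$ occurs in $P$;
\item $\lambda^* x.(PQ) := \mathbf{B}P(\lambda^* x.Q)$ if $x$ occurs in $Q$.
\end{itemize}
By linearity, $x$ occurs in exactly one of $P$ and $Q$, and the case $M$ a constant or a variable different from $x$ cannot arise. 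Hence the definition is exhaustive and unambiguous. Each clause preserves linearity: the free variables of $\lambda^* x.M$ are exactly those of $M$ minus $x$, each occurring once.

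The key step is the $\beta$-correctness property $(\lambda^* x.M)N =_\beta M[N/x]$, proved by induction on $M$.
\begin{itemize}
\item For $M = x$: $\mathbf{I}N =_\beta N$.
\item For $M = PQ$ with $x$ in $P$: $\mathbf{C}(\lambda^*x.P)QN =_\beta (\lambda^*x.P)NQ =_\beta P[N/x]\,Q$. Since $x\notin Q$, this is $(PQ)[N/x]$.
\item For $M = PQ$ with $x$ in $Q$: symmetrically, $\mathbf{B}P(\lambda^*x.Q)N =_\beta P((\lambda^*x.Q)N) =_\beta P\,Q[N/x]$.
\end{itemize}
Taking $N$ to be a fresh variable $y$ and using $\lambda y.(\lambda^*x.M)y$, we only get $\eta$, not $\beta$. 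So instead we use correctness in the form needed below, which is purely $\beta$.

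Next we define a translation $(-)_{CL}$ from $\Lambda_{lin}$ to linear combinatory terms:
\begin{itemize}
\item $x_{CL} := x$;
\item $(MN)_{CL} := M_{CL}N_{CL}$;
\item $(\lambda x.M)_{CL} := \lambda^* x.M_{CL}$.
\end{itemize}
This is well defined because $M_{CL}$ has the same free variables as $M$, each occurring once. The main obstacle is that $\lambda^*x.M_{CL} =_\beta \lambda x.M$ does not hold in general: for instance, $\lambda^* x.x = \mathbf{I}$ is fine, but $\lambda^*x.(yx) = \mathbf{B}y\mathbf{I}$, which $\beta$-reduces to $\lambda z.y(\mathbf{I}z) \to_\beta \lambda z.yz$, and this is $\eta$- but not $\beta$-equal to $y$. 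For closed terms, however, we only need each combinator to be $\beta$-equal to some element of $\{\mathbf{B},\mathbf{C},\mathbf{I}\}^+$.

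To handle this, we prove the stronger statement that for every $M\in\Lambda_{lin}$, $M_{CL} =_\beta M$ when the constants are read as their $\lambda$-terms. We argue by induction on $M$, reducing to the claim that $\lambda^*x.P =_\beta \lambda x.P$ for every linear combinatory term $P$. We check this claim directly by induction on $P$:
\begin{itemize}
\item $\mathbf{I} = \lambda x.x$.
\item $\mathbf{C}(\lambda^*x.P)Q =_\beta \lambda z.(\lambda^*x.P)\,z\,Q =_\beta \lambda z.(\lambda x.P)zQ =_\beta \lambda z.P[z/x]Q$, which is $\lambda x.PQ$ up to $\alpha$-renaming. This works because $\mathbf{C}$ applied to two arguments $\beta$-reduces to an abstraction.
\item $\mathbf{B}P(\lambda^*x.Q) =_\beta \lambda z.P((\lambda^*x.Q)z) =_\beta \lambda z.P(Q[z/x])$, which equals $\lambda x.PQ$ up to $\alpha$-renaming.
\end{itemize}
So the earlier worry disappears: $\mathbf{B}y\mathbf{I} =_\beta \lambda z.yz$, which is exactly $\lambda x.yx$, the term being abstracted. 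Hence $M_{CL}=_\beta M$.

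Finally, if $M\in\Lambda_{lin}(0)$, then $M_{CL}$ has no variables, so it lies in $\{\mathbf{B},\mathbf{C},\mathbf{I}\}^+$ and is $\beta$-equal to $M$. This proves that $\{\mathbf{B},\mathbf{C},\mathbf{I}\}$ is a basis for $\Lambda_{lin}(0)$.

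The delicate points are the bookkeeping of free variables under $\lambda^*$, which guarantees the translation stays linear and closed terms go to closed terms, and the use of $\beta$-reduction under $\lambda$ in the last two clauses of the induction.
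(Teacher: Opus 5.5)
Your argument is correct, and its key computation is the paper's: split on whether $x$ occurs in $P$ or in $Q$, and use $\mathbf{C}(\lambda x.P)Q =_\beta \lambda x.PQ$, respectively $\mathbf{B}P(\lambda x.Q) =_\beta \lambda x.PQ$. What differs is the organisation.

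The paper runs a single induction on the closed term and splits the body of $\lambda x.N$ only into $N=x$ and $N=PQ$. The case where $N$ is itself an abstraction (already $\mathbf{C}=\lambda fxy.fyx$) is left implicit. Handling it means translating the open body first and then abstracting $x$ from a combinatory term, which need not be smaller than the original.

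Your two-stage version handles exactly this case, at the cost of a little more bookkeeping. First you do bracket abstraction on linear combinatory terms and show it $\beta$-equal to genuine abstraction by induction on the combinatory term. Then you apply the compositional translation $(-)_{CL}$ on open terms. Your $\lambda^*$ also coincides with the operator the paper later introduces for $\mathbf{BCI}$-algebras.

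Two things to remove. The paragraph claiming that $\lambda^*x.M_{CL} =_\beta \lambda x.M$ fails in general is false, and your own later claim refutes it: you compared $\mathbf{B}y\mathbf{I}$ with $y$ instead of with $\lambda x.yx$. The $\beta$-correctness property $(\lambda^*x.M)N =_\beta M[N/x]$ is never used.
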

\begin{proof}
    Let $M \in \Lambda_{lin}(0)$. 
    The proof is by induction on the structure of $M$. 
    If $M=PQ$ then the result follows from induction. 
    Let $M=\lambda x. N$. 
    If $N=x$ is a variable, then $\lambda x.x = \mathbf{I}$. 
    Otherwise $N=PQ$, and there are two cases. 
    Either $x$ is free in $P$ or $x$ is free in $Q$. 
    We prove that 
    $\lambda x . PQ =_{\beta} \mathbf{C}(\lambda x.P) Q$ or $\lambda x . PQ =_{\beta} \mathbf{B}P(\lambda x.Q)$
    if  $x$ is free in $P$ or $x$ is free in $Q$, respectively. In the first case:  
        $\mathbf{C}(\lambda x.P) Q = (\lambda fzy. fyz)(\lambda x. P) Q =_\beta  \lambda y. ((\lambda x. P)y Q) =_\beta \lambda x. PQ$. 
    A similar calculation proves the second case.
    Then, in both cases, we conclude by induction.   
\end{proof}

Inspired by the previous result one can give the following definition.

\begin{definition} [{\cite[Definition 2]{Hoshino}}]
    \label{def:bci}
    A set $A$ together with a binary operation $\cdot : A^2 \to A$ and three nullary operations $\bb,\cc,\ii \in A$ satisfying
    \begin{equation*}
        \bb xyz = x(yz), \quad  \cc xyz = xzy, \quad \ii x = x
    \end{equation*}
    is called a \emph{$\mathbf B \mathbf C \mathbf I$-algebra}. 
    As usual, we followed the convention of writing $xy$ for $x \cdot y$ and of associating to the left.
    Moreover, we let $1:=\bb \ii$. 
    We will denote by $\mathbf{BCI}$ the category of $\mathbf{BCI}$-algebras and $\mathbf{BCI}$-algebras homomorphisms. 
\end{definition}

\subsection{The operad of linear lambda terms}
A \emph{context} is a finite sequence of pairwise distinct variables. 
We write $\Gamma, \Delta$ for the concatenation of two contexts $\Gamma$ and $\Delta$, with the implicit assumption that their support are disjoint. 
Let $\Gamma \vdash M$ be the relation between contexts and $\lambda$-terms defined inductively by the following type system:
\vspace{-0.4\baselineskip}
  \begin{center}
    \AxiomC{}
    \RightLabel{$\qquad \qquad$}
    \UnaryInfC{$x \vdash x$}
    \DisplayProof
    \AxiomC{$\Gamma \vdash M$}
    \AxiomC{$\Delta \vdash N$}
    \RightLabel{$\qquad \qquad$}
    \BinaryInfC{$\Gamma, \Delta \vdash MN$}
    \DisplayProof
    \AxiomC{$\Gamma, x \vdash M$}
    \RightLabel{$\qquad \qquad$}
    \UnaryInfC{$\Gamma \vdash \lambda x. M$}
    \DisplayProof
    \AxiomC{$\Gamma, x, y , \Delta \vdash M$}
    \UnaryInfC{$\Gamma, y, x, \Delta \vdash M$}
    \DisplayProof
  \end{center}
  \vspace{-0.3\baselineskip}
Linear $\lambda$-terms are precisely those $\lambda$-terms that can be typed in this type system. 
For each $n \in \mathbb{N}$, let $L(n)$ be the set of judgements $\Gamma \vdash M$ such that the support of $\Gamma$ has cardinality $n$,
up to renaming of free variables and up to $\beta$-equivalence. 

\begin{lemma}
   The sequence of sets $\{L(n) : n \in \mathbb{N}\}$ is an operad with:
    \begin{itemize}
        \item identity $1:=x_1 \vdash x_1 \in L(1)$;
        \item if $x_1, \ldots, x_n \vdash M \in L(n)$ and $\Delta_i \vdash N_i \in L(k_i)$ for $1 \le i \le n$, 
        $M(N_1, \ldots, N_n) := \Delta_1 , \ldots, \Delta_n \vdash M[N_1/x_1, \ldots, N_n/x_n]\text{;}$
        \item if $\sigma \in S_n$ and $x_1, \ldots, x_n \vdash M$, $M\sigma:= x_{\sigma 1}, \ldots, x_{\sigma n} \vdash M$. 
    \end{itemize}
\end{lemma}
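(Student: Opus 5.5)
We first settle a normal form for the elements of $L(n)$. Since renaming of free variables is allowed, every class in $L(n)$ has a representative whose context is exactly $x_1, \ldots, x_n$ in this order. Alternatively, we view an element as a judgement $\Gamma \vdash M$ with the $n$ positions of $\Gamma$ recording an ordering of the free variables of $M$. We will check well-definedness with respect to the three identifications: $\alpha$-equivalence of bound variables, renaming of free variables, and $=_\beta$.

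For the composition $M(N_1,\ldots,N_n)$, we first rename apart so that the $\Delta_i$ are pairwise disjoint and disjoint from the bound variables of $M$. The result is again typable: by induction on the derivation of $\Gamma, x, \Gamma' \vdash M$, if $\Delta \vdash N$ then $\Gamma, \Delta, \Gamma' \vdash M[N/x]$. The exchange rule lets us put $\Delta_1, \ldots, \Delta_n$ in the stated order, and linearity guarantees that each $x_i$ occurs exactly once, so no duplication or erasure occurs. Independence of the choice of representatives then follows from two standard facts. Simultaneous substitution is compatible with $\alpha$-renaming. And $=_\beta$ is a congruence for substitution in both arguments: if $M =_\beta M'$ then $M[\vec N/\vec x] =_\beta M'[\vec N/\vec x]$ (the substitution lemma), and if $N_i =_\beta N_i'$ then $M[\vec N/\vec x] =_\beta M[\vec N'/\vec x]$ (compatibility of $=_\beta$). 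One more point needs care here. A $\beta$-conversion chain between linear terms could in principle pass through nonlinear terms. We sidestep this by taking $=_\beta$ restricted to the linear judgements: since $\to_\beta$ preserves linearity and the free variables, each step of the chain stays in $L(n)$. The symmetric action is clearly well defined, since it only reorders $\Gamma$.

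Next we verify the axioms of Definition~\ref{def:operad}. Axioms (1) and (2) are immediate: $M[x_1/x_1,\ldots,x_n/x_n]=M$ and $x_1[M/x_1]=M$. Axiom (3) is the associativity of simultaneous substitution, $M[\vec N/\vec x][\vec P/\vec y] = M[N_1[\vec P/\vec y]/x_1,\ldots]$. This holds because the variables of different $\Delta_i$ are disjoint and each $N_i$'s context only meets its own block of $\vec P$; the order of the resulting context is the concatenation $\Delta^1_1,\ldots$ on both sides. Axiom (4): the term $(M\sigma)(N_1,\ldots,N_n)$ substitutes $N_j$ for the variable in position $j$ of $x_{\sigma1},\ldots,x_{\sigma n}$, that is, for $x_{\sigma j}$. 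Hence $x_i$ receives $N_{\sigma^{-1}i}$, and the contexts are concatenated in the order $\Delta_1,\ldots,\Delta_n$, which is the block permutation $\hat\sigma$ applied to the order $\Delta_{\sigma^{-1}1},\ldots,\Delta_{\sigma^{-1}n}$. Axiom (5) holds because permuting each $\Delta_i$ inside its own block gives the block-diagonal permutation $[\sigma_1,\ldots,\sigma_n]$. Finally, we check that the formula for $M\sigma$ is a right action: $(M\sigma)\tau$ reorders $x_{\sigma1},\ldots$ by $\tau$, giving $x_{\sigma\tau 1},\ldots$.

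The main obstacle is bookkeeping rather than depth. The direction conventions for $\sigma$, $\sigma^{-1}$ and $\hat\sigma$ in axioms (4) and (5) must match the chosen action, and if they do not, the action must be adjusted (for example, by using the inverse). The $\beta$-compatibility is the only genuinely semantic point, and it is handled by the substitution lemma.
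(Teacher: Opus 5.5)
The paper states this lemma without proof, so there is nothing to compare against. Your verification is correct and is the natural way to fill the gap. Two small remarks follow.

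\textbf{The $\beta$ paragraph.} The justification you give there is slightly off. That $\to_\beta$ preserves linearity does not keep a conversion zigzag inside $\Lambda_{lin}$, because an expansion step can leave it (e.g.\ $(\lambda y.x)z \to_\beta x$). However, the detour is unnecessary. Substitutivity and compatibility of $=_\beta$ hold on all of $\Lambda$, so composition is well defined for the paper's relation as it stands. If you do want the relation generated inside linear judgements, Church--Rosser shows it agrees with $=_\beta$ on $\Lambda_{lin}$: a common reduct of two linear terms is reached through linear terms only.

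\textbf{The closing hedge.} You can drop your hedge about conventions. Take $(\sigma\tau)(i)=\sigma(\tau(i))$. Then your checks of the right action and of axiom (4) match the paper's $\hat\sigma$ exactly: it carries the $j$-th block, of size $k_j$, onto position $\sigma(j)$, turning $\Delta_{\sigma^{-1}1},\ldots,\Delta_{\sigma^{-1}n}$ into $\Delta_1,\ldots,\Delta_n$. So the action as stated needs no adjustment.
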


The algebras $\alg(L)$ for the operad $L$ can be seen as $\mathbf{BCI}$-algebras. 
Indeed, given a $L$-algebra $A$, we endow $A$ with the structure of $\mathbf{BCI}$-algebra as follows. 
Let $\app:=x_1, x_2 \vdash x_1x_2 \in L(2)\text{.}$
We define the application $a_1 a_2:=\app \cdot (a_1,a_2)$ as the action of the linear term $x_1x_2$ on the pair $(a_1, a_2) \in A^2$. 
The nullary operations $\bb,\cc, \ii \in A$ are given by the action of the combinators $\mathbf{B}, \mathbf{C}, \mathbf{I}$ on the empty sequence $() \in A^0$: 
\begin{equation*}
    \bb:= \mathbf{B} \cdot (), \quad \cc:= \mathbf{C} \cdot (), \quad \ii:= \mathbf{I} \cdot ()\text{.}
\end{equation*}

Then one has to check that the identities of Definition \ref{def:bci} are satisfied. 
For the sake of an example, observe that, by definition of the action, for all $a_1, a_2, a_3 \in A$ 
\begin{align*}
    \app \cdot (\app \cdot (\app \cdot (\mathbf{B} \cdot (), a_1), a_2), a_3) & = 
    \app \cdot (\app \cdot (\lambda x_2 x_3. x_1 (x_2 x_3) \cdot a_1, a_2 ), a_3) \\
    & = \app \cdot ( \lambda x_3. x_1 (x_2 x_3) \cdot (a_1, a_2), a_3) \\
    & = x_1 (x_2 x_3) \cdot (a_1, a_2, a_3)\\
    & = \app \cdot (a_1, \app \cdot (a_2, a_3))\text{.}
\end{align*}

\begin{proposition}
    \label{prop:ff}
    There is a fully faithful functor $G:\alg(L) \to \mathbf{BCI}$. 
\end{proposition}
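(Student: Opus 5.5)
The functor $G$ is the construction described just before the statement. An $L$-algebra $A$ is sent to the same underlying set, with application $a_1a_2:=\app\cdot(a_1,a_2)$ and constants $\bb:=\mathbf B\cdot()$, $\cc:=\mathbf C\cdot()$, $\ii:=\mathbf I\cdot()$. On morphisms, $G$ sends an $L$-algebra homomorphism $f:A\to A'$ to the same underlying function.

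\textbf{Well-definedness.} First we check that $G(A)$ satisfies the three axioms of Definition~\ref{def:bci}. The axiom for $\bb$ was verified above. The axioms for $\cc$ and $\ii$ follow by the same computation, using axiom~1 of Definition~\ref{def:algebra} together with the $\beta$-equalities $\mathbf{C}x_1x_2x_3=_\beta x_1x_3x_2$ and $\mathbf I x_1=_\beta x_1$, which hold in $L$. Next, an $L$-homomorphism commutes with the action of every operation, in particular with $\app$ and with the nullary terms $\mathbf B,\mathbf C,\mathbf I$, so it is a $\mathbf{BCI}$-homomorphism. Functoriality is immediate because $G$ is the identity on underlying functions. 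Faithfulness is equally immediate, since $G$ commutes with the forgetful functors to $\mathbf{Set}$.

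\textbf{Fullness (main step).} Let $f:G(A)\to G(A')$ preserve application and $\bb,\cc,\ii$. We must show that $f(M\cdot(a_1,\dots,a_n))=M\cdot(f a_1,\dots,f a_n)$ for every $M\in L(n)$. The idea is to show that each action $M\cdot-$ is expressible through application and the constants alone.

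\emph{Step (a): closed terms.} For $M\in L(0)$, the basis lemma gives $M=_\beta N$ for some $N\in\{\mathbf B,\mathbf C,\mathbf I\}^+$. Induction on $N$, using axiom~1, shows that $N\cdot()$ is the corresponding applicative word in $\bb,\cc,\ii$: the case $N=PQ$ is $\app(P,Q)\cdot()=\app\cdot(P\cdot(),Q\cdot())$. Since $M$ and $N$ are the same element of $L(0)$, $M\cdot()$ is this word, and $f$ preserves it.

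\emph{Step (b): arbitrary arity.} For $M$ with context $x_1,\dots,x_n$, set $\lambda\vec x.M\in L(0)$. In $L$ we have
\[M=_\beta \app(\cdots\app(\lambda\vec x.M,x_1)\cdots,x_n),\]
that is, $M$ is the operadic composite of iterated $\app$ with the nullary element $\lambda\vec x.M$ and identities. Axiom~1 then gives
\[M\cdot(a_1,\dots,a_n)=(\lambda\vec x.M\cdot())\,a_1\cdots a_n.\]
By Step~(a), $f$ preserves the right-hand side, so $f$ is an $L$-homomorphism.

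\textbf{Expected obstacle.} The delicate point is the bookkeeping in Step~(b). One must write the iterated application $x_1\vdash\ \cdots$ as an honest operadic composite with the correct arities: $\app(\app(\dots),1)$, with the constant plugged into the first slot. One must also check that the permutation action causes no mismatch, which holds because the variables appear in context order. Once this identity is established in $L$, it transfers to every algebra by axiom~1, and the remainder of the argument is routine.
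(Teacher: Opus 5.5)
Your proof is correct, but your fullness argument takes a different route from the paper's.

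The paper proves $f(M\cdot\vec a)=M\cdot(f\vec a)$ by structural induction directly on the open term $M\in L(n)$. Applications are split as $\app(P,Q)$. Abstractions are handled by rewriting $\lambda x.x$ as $\mathbf I$ and $\lambda x.PQ$ as $\mathbf C(\lambda x.P)Q$ or $\mathbf BP(\lambda x.Q)$. In effect it re-runs the bracket-abstraction argument of the basis lemma inside the homomorphism check.

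You instead first collapse every $n$-ary operation to a nullary one through the operadic identity $M=\app(\cdots\app(\lambda\vec x.M,1)\cdots,1)$ in $L(n)$. By axioms 1 and 2 of Definition~\ref{def:algebra}, this yields $M\cdot(a_1,\dots,a_n)=\bigl((\lambda\vec x.M)\cdot()\bigr)\,a_1\cdots a_n$. You then invoke the basis lemma only once, and only for closed terms.

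Your route buys modularity. It isolates the general fact that an $L$-algebra structure is determined by its application together with its constants $M\cdot()$, $M\in L(0)$. It never has to split a context between $P$ and $Q$; in the paper that step silently needs the permutation axiom when the variables of $P$ and $Q$ interleave. It also sidesteps the paper's case analysis on abstractions, which as written treats only bodies of the form $x$ or $PQ$ and omits the case $\lambda x.\lambda y.N$.

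The price is that all the combinatorial content is delegated to the basis lemma. Its proof in the paper has the same omission: the lemma is true, but a complete proof needs bracket abstraction on open terms. The paper's direct induction keeps everything in a single argument, at the cost of that case analysis on open terms.
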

\begin{proof}
    If $A$ is a $L$-algebra we define $G(A)$ to be the $\mathbf{BCI}$-algebra with underlying set $A$ defined above. 
Let $f: A \to B$ be a morphism of $L$-algebras. 
We show that $Gf:=f$ is a homomorphism of $\mathbf{BCI}$-algebras. 
Since the following diagrams are commutative, 
 \begin{center}
    \hspace*{-4cm}
        \begin{tikzcd}
            L(0)  \arrow[r, "\cdot"] \arrow[d, equal] & A \arrow[d, "f"] & & L(2) \times A^2  \arrow[r, "\cdot"] \arrow[d, "\id \times f^2"] & A \arrow[d, "f"] \\
            L(0)  \arrow[r, "\cdot"]  & B & &  L(2) \times B^2  \arrow[r, "\cdot"]  & B \\
        \end{tikzcd}
    \end{center}
    \vspace{-0.6\baselineskip}
$f$ preserves the nullary operations and $f(\app \cdot(a_1, a_2)) = \app \cdot (f(a_1), f(a_2))$ for all $a_1, a_2 \in A$. 
We now show that $G$ is fully faithful. 
The faithfulness is obvious since $G$ acts as the identity on the arrows. 
The functor $G$ is full, that is: if $f:A \to B$ preserves the binary operation of application and the nullary operations $\mathbf{B}, \mathbf{C}, \mathbf{I}$, then $f$ preserves the whole action of $L$. 
Let $f$ be such a map. 
We prove by induction on $M \in L(n)$ that 
\begin{equation}
    \label{eq:claim}
    f(M \cdot (a_1, \ldots, a_n) )= M \cdot (f(a_1), \ldots , f(a_n))
\end{equation}
for every $a_1, \ldots a_n \in A$. 
For $n=1$, $x_1 \in L(1)$ and $a \in A$: $f(x_1 \cdot a) = f(a) = x_1 \cdot f(a)$.
If $M=PQ$ with $P \in L(i)$, $Q \in L(j)$, and $i+j=n$, then
\begin{align*}
    f((PQ) \cdot (a_1,\ldots,a_n)) & = f(\app \cdot (P \cdot (a_1, \ldots, a_i)), (Q \cdot (a_{i+1}, \ldots, a_n))) \\
    & = \app\cdot (f(P \cdot (a_1, \ldots, a_i)),f(Q \cdot (a_{i+1}, \ldots, a_n))) \\
    & = \app \cdot (P \cdot (fa_1, \ldots, fa_i)),(Q \cdot (fa_{i+1}, \ldots, fa_n)) \\
    & = (PQ) \cdot (fa_1, \ldots, fa_n) \text{.}
\end{align*} 
When $M$ is an abstraction we have two cases. 
If $M = \lambda x.x$, then the claim \eqref{eq:claim} follows from the fact that $f$ preserves $\mathbf{I}$. 
If $M= \lambda x.PQ$, then either $x$ is free in $P$, and $\lambda x.PQ =_{\beta} \mathbf{C}(\lambda x.P)Q$, or $x$ is free in $Q$, and $\lambda x.PQ =_{\beta} \mathbf{B}P(\lambda x.Q)$. 
If $x$ is free in $P$, then we have:  
    \begin{align*}
    f((\lambda x.PQ) \cdot (a_1,\ldots,a_n)) & = f(\mathbf{C}(\lambda x.P) Q \cdot (a_1, \ldots, a_n)) \\
    & = f(\app \cdot (\app \cdot (\mathbf{C}, \lambda x.P \cdot (a_1, \ldots, a_i)), Q \cdot (a_{i+1}, \ldots, a_n)))\\
    & = \app \cdot (\app \cdot (\mathbf{C}, f(\lambda x.P \cdot (a_1, \ldots, a_i))), f(Q \cdot (a_{i+1}, \ldots, a_n))) \\
    & = \app \cdot (\app \cdot (\mathbf{C}, \lambda x.P \cdot (fa_1, \ldots, fa_i)), Q \cdot (fa_{i+1}, \ldots, fa_n)) \\
    & = \mathbf{C}(\lambda x.P) Q \cdot (fa_1, \ldots, fa_n) \\
    & = (\lambda x.PQ) \cdot (fa_1,\ldots,fa_n)\text{.}
\end{align*} 
If $x$ is free in $Q$ the proof is similar. 
\end{proof}

\section{Linear lambda algebras}
In this section we study the essential image of the functor of Proposition \ref{prop:ff}. 
We characterise this category as a subvariety of the variety of $\mathbf{BCI}$-algebras. 

\subsection{An equational characterisation}
Given a $\mathbf{BCI}$-algebra $A$, let $A[x]$ be the free extension of $A$ by one variable $x$. 
Let $A_{lin}[x]$ be the subalgebra of $A[x]$ generated by the linear terms $p,q::=_{lin}x \, \mid \, a \, \mid \, pq $. 
Inductively, one can define $A_{lin}[x_1, \ldots, x_{n}]$. 
If a variable $x_i$ occurs in $p \in A_{lin}[x_1, \ldots, x_n]$, then $x_i$ occurs exactly once in $p$. 
For each $t \in A_{lin}[x_1, \ldots, x_n,x]$, we define $\lambda^*x.t \in A_{lin}[x_1, \ldots, x_n]$ as follows by induction on $t$:
\begin{align*}
    \lambda^*x.x &:= \ii, & \\
    \lambda^*x.pq &:= \cc(\lambda^* x.p)q & \text{if $x$ is free in $p$,}\\
    \lambda^* x .pq & := \bb p (\lambda^* x. q) & \text{if $x$ is free in $q$.}
\end{align*}

The operator $\lambda^*$ satisfies the $\beta$-equivalence rule, i.e., $(\lambda^*x.t)a = t[a/x]$, for all $a \in A$ and $t \in A_{lin}[x]$.
Indeed, by induction on $t$: 
\begin{align*}
    (\lambda^* x.x)a & = \ii a = a, &\\
    (\lambda^*x.pq)a & = \cc (\lambda^*x.p)qa = (\lambda^* x.p) a q = p[a/x] q &\text{if $x$ is free in $p$,}\\
    (\lambda^*x.pq)a & = \bb p(\lambda^*x.q)a = p((\lambda^*x.q)a)= p (q[a/x])  &\text{if $x$ is free in $q$.}
\end{align*}

Given a $\mathbf{BCI}$-algebra $A$, we can interpret any linear $\lambda$-term $M$ with free variables $x_1, \ldots, x_n$ in $A_{lin}[x_1, \ldots, x_n]$
by inductively defining $\llbracket M \rrbracket \in A_{lin}[x_1,\ldots,x_n]$:
\begin{equation*}
    \llbracket x_i \rrbracket := x_i, \quad \llbracket M N \rrbracket := \llbracket M \rrbracket \llbracket N \rrbracket, \quad \llbracket \lambda x.M \rrbracket := \lambda^* x. \llbracket M \rrbracket. 
\end{equation*}
We shall omit the brakcets $\llbracket - \rrbracket$ if it does not create any confusion. 

\begin{definition}[cf. {\cite[Lemma 5.2.3]{B84}}]
    A $\mathbf{BCI}$-algebra $A$ is a \emph{linear $\lambda$-algebra} if 
    \begin{equation}
        \label{eq:bc-derived}
        \lambda^* xyz. x(yz) = \bb, \quad \lambda^* xyz. xzy = \cc
    \end{equation}
    and if for every $M,N \in \Lambda_{lin}$,
    \begin{equation}
        \label{eq:def-linear}
        M =_{\beta} N \quad \text{implies that} \quad \llbracket M \rrbracket = \llbracket N \rrbracket\text{.}
    \end{equation} 
\end{definition}

\begin{table}
  \caption{Equations whose $\lambda^*$-closure defines linear $\lambda$-algebras.}
    \label{tab:linear}
\centering
\begin{tabular}{c c c}
$1\bb = \bb$  & $1(\bb x)=\bb x$ & $1(\bb xy)=\bb xy$  \\
$1\cc = \cc$  & $1(\cc x)=\cc x$ & $1(\cc xy)=\cc xy$ \\
$1\ii = \ii$  &  $\bb \ii x = \bb x \ii$ &  \\
$\cc(\cc(\bb \bb x)y)z = \cc x(yz)$ & $\cc(\bb(\bb x)y)z = \bb \cc x(\cc yz)$ & $\bb (\bb xy)z=\bb x(\bb yz)$ \\
$\cc(\cc(\bb \cc x)y)z = \cc (\cc xz)y$ & $\cc (\bb (\cc x)y)z = \bb (xz)y$ & $\bb (\cc xy)z=\cc (\bb xz)y$ 
\end{tabular}
\end{table}

Given an identity of Table \ref{tab:linear}, its $\lambda^*$-\emph{closure} is the identity obtained by $\lambda^*$-abstracting every free variable in it.

\begin{lemma}
    \label{lem:table}
    Any linear $\lambda$-algebra satisfies the identities of Table \ref{tab:linear}. 
    Moreover, any linear $\lambda$-algebra satisfies their $\lambda^*$-closure.  
\end{lemma}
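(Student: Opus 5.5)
The plan is to reduce every identity in Table~\ref{tab:linear} to an instance of condition \eqref{eq:def-linear}, using \eqref{eq:bc-derived} to identify the constants $\bb,\cc$ with interpretations of $\lambda$-terms. I work with the closed forms first, since the $\lambda^*$-closure of an identity $s=t$ is an equation between elements of $A$. Each side of each identity is a linear expression built from $\bb,\cc,\ii$ and the free variables by application. I translate each side $s$ into a linear $\lambda$-term $\widehat{s}$ by replacing $\bb,\cc,\ii$ with $\mathbf B,\mathbf C,\mathbf I$. The first step is to check that $\llbracket \widehat s\rrbracket = s$ in $A_{lin}[x_1,\ldots,x_n]$. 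This follows by induction on $s$: application is interpreted as application, $\llbracket\mathbf I\rrbracket=\lambda^*x.x=\ii$ by definition, and $\llbracket \mathbf B\rrbracket = \lambda^*xyz.x(yz)=\bb$ and $\llbracket\mathbf C\rrbracket=\cc$ by \eqref{eq:bc-derived}.

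Next, for each identity $s=t$ in the table, I verify by a direct computation that $\widehat s =_\beta \widehat t$ as linear $\lambda$-terms. For instance, $\mathbf 1\mathbf B$ and $\mathbf B$ are both $\beta$-equal to $\lambda xyz.x(yz)$ up to $\beta$, since $\mathbf 1 M=_\beta \lambda y.My$ and $\lambda y.\mathbf B y=_\beta\lambda yzw.y(zw)$. Similarly $\mathbf B\mathbf I x=_\beta \lambda y.xy =_\beta \mathbf B x\mathbf I$. For the six bottom equations, both sides reduce to the same $\beta$-normal form, e.g.\ $\lambda w.x w(yz)$ for the first one and $\lambda w.x(y(zw))$ for associativity. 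These are routine normalisations, which I will carry out once per line. Note that $\eta$ is never needed, which is why the $1$-equations appear only in their $\eta$-expanded-safe form.

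Then, for the closures, I take $M=\lambda x_1\ldots x_n.\widehat s$ and $N=\lambda x_1\ldots x_n.\widehat t$. These are closed linear terms, and $M=_\beta N$ because $\beta$-equality is a congruence. By \eqref{eq:def-linear}, $\llbracket M\rrbracket=\llbracket N\rrbracket$. By the definition of $\llbracket-\rrbracket$ on abstractions and the first step, $\llbracket M\rrbracket=\lambda^*x_1\ldots x_n.s$, and likewise for $N$. This is exactly the $\lambda^*$-closure of $s=t$, which gives the second claim.

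For the unclosed identities, I take arbitrary $a_1,\ldots,a_n\in A$ and apply both sides of the closed equation to them. Using the $\beta$-rule for $\lambda^*$ established above, $(\lambda^*x.t)a=t[a/x]$, iterated $n$ times, I get $s[\vec a/\vec x]=t[\vec a/\vec x]$. One subtlety must be handled with care: the iterated abstraction lives in $A_{lin}[x_1,\ldots,x_k]$, so the $\beta$-rule must be applied in the extended algebra, substituting only closed elements. Alternatively, I can apply \eqref{eq:def-linear} directly to the open terms $\widehat s,\widehat t$, which avoids this issue, since $\llbracket-\rrbracket$ lands in $A_{lin}[\vec x]$ and equality there specialises to $A$.

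The main obstacle is the bookkeeping of the $\beta$-normalisations for the six structural equations, together with making sure that $\mathbf 1$ is handled via $\beta$ only.
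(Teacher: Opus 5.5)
For the table identities themselves your argument is the paper's. You translate $\bb,\cc,\ii$ into $\mathbf B,\mathbf C,\mathbf I$, use \eqref{eq:bc-derived} to identify $\llbracket\widehat s\rrbracket$ with $s$, check $\widehat s=_\beta\widehat t$ by hand, and apply \eqref{eq:def-linear} to the open terms. Your ``alternative'' in the last step is exactly what the paper does.

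For the $\lambda^*$-closures you take a shorter route. The paper only asserts that they follow from a lengthier calculation, presumably computing the closed combinators of Appendix~\ref{app:linear} and checking them. You instead apply \eqref{eq:def-linear} once to $\lambda\vec x.\widehat s=_\beta\lambda\vec x.\widehat t$ and unfold $\llbracket\lambda x.M\rrbracket=\lambda^*x.\llbracket M\rrbracket$, so no closure ever has to be written out.

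You should say explicitly why $\lambda^*\vec x.\llbracket\widehat s\rrbracket=\lambda^*\vec x.s$. The operator $\lambda^*$ is defined on syntax, and its invariance under arbitrary equalities in $A_{lin}[\vec x]$ is Lemma~\ref{lem:abstr}, which presupposes the present lemma. Here it is harmless: the two terms differ only in closed subterms ($\llbracket\mathbf B\rrbracket$ versus $\bb$, $\llbracket\mathbf C\rrbracket$ versus $\cc$) that are equal in $A$, and $\lambda^*$ passes such subterms through unchanged.

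The one step that actually fails is your claim that all six bottom identities reduce to a common normal form. As printed, $\cc(\bb(\bb x)y)z=\bb\cc x(\cc yz)$ does not:
\begin{itemize}
\item the left side is $\beta$-equal to $\lambda w.x(ywz)$;
\item the right side is $\beta$-equal to $\mathbf C(x(\mathbf Cyz))=_\beta\lambda ab.x(\lambda c.ycz)ba$.
\end{itemize}
These are distinct normal forms. The identity is in fact false in the term model of closed linear terms modulo $\beta$: at $x=y=z=\mathbf I$ the two sides are $\lambda w.w\mathbf I$ and $\lambda ab.b\mathbf I a$.

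This is a misprint in Table~\ref{tab:linear}, not a flaw in your method. The tenth item of Appendix~\ref{app:linear} is the closure of the misprinted version. The paper's ``the other identities are proved similarly'' overlooks it just as your sketch does. The identity actually needed, in the case $\bb uvw=u(vw)$ of Lemma~\ref{lem:abstr} with the abstracted variable in $v$, is $\cc(\bb(\bb x)y)z=\bb x(\cc yz)$. Both of its sides reduce to $\lambda w.x(ywz)$, and with that correction your proof goes through.
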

\begin{proof}
    By the definition of $\llbracket - \rrbracket$, and by \eqref{eq:def-linear}, the proof of the first statement is a
    lengthy but straightforward calculation. 
    For example, as 
        \begin{align*}
        \mathbf{B}(\mathbf{C}xy)z  =_{\beta} \lambda u. \mathbf{C}xy(zu) 
         =_{\beta} \lambda u. x(zu) y 
         =_{\beta} \lambda u. (\mathbf{B}xz) uy 
         =_{\beta} \mathbf{C}(\mathbf{B}xz) y\text{,}
    \end{align*}
    any linear $\lambda$-algebra satisfies $\bb(\cc xy)z=\cc(\bb xz)y$. 
    The other identities are proved similarly.
    The second statement follows from a lengthier, but still straightforward calculation. 
\end{proof}

\begin{lemma}
    \label{lem:closure}
    If $A$ is a $\mathbf{BCI}$-algebra that satisfies the $\lambda^*$-closure of the identities of Table \ref{tab:linear}, then $A$ satisfies also the identities of Table \ref{tab:linear}. 
\end{lemma}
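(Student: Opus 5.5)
If $s=t$ is an identity of Table \ref{tab:linear} with free variables $x_1,\ldots,x_n$, its $\lambda^*$-closure is $\lambda^*x_1\ldots x_n.s = \lambda^*x_1\ldots x_n.t$. The plan is to apply both sides to arbitrary elements $a_1,\ldots,a_n\in A$ and use the $\beta$-rule for $\lambda^*$ proved above, namely $(\lambda^*x.u)a = u[a/x]$. Applying this $n$ times, $(\lambda^*x_1\ldots x_n.s)a_1\cdots a_n = s[a_1/x_1,\ldots,a_n/x_n]$, and likewise for $t$. So the closed equation implies $s[\vec a/\vec x] = t[\vec a/\vec x]$ for all $\vec a\in A^n$, which is exactly the statement that $A$ satisfies $s=t$. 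Identities with no free variables, such as $1\bb=\bb$, are their own closures and need nothing.

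The only point needing care is the iterated application. The $\beta$-rule was stated for $a\in A$ and $t\in A_{lin}[x]$, but the same induction on $t$ works verbatim when $t\in A_{lin}[x_1,\ldots,x_n,x]$ and we substitute an element of $A$ for the last variable $x$. That induction only uses the $\mathbf{BCI}$ equations, which hold in the polynomial algebra $A[x_1,\ldots,x_n]$ because it is a $\mathbf{BCI}$-algebra.

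There is an ordering subtlety. We have $\lambda^*x_1\ldots x_n.s = \lambda^*x_1.(\lambda^*x_2.\cdots)$, so applying to $a_1$ first yields $(\lambda^*x_2\ldots x_n.s)[a_1/x_1]$, not $\lambda^*x_2\ldots x_n.(s[a_1/x_1])$. We therefore need the substitution lemma $(\lambda^*y.u)[a/x] = \lambda^*y.(u[a/x])$ for $a\in A$ and $x\neq y$. This is immediate by induction on $u$ from the defining clauses of $\lambda^*$: substituting a constant for $x$ does not change which subterm contains $y$. Alternatively, we can use the $\beta$-rule in $A[x_2,\ldots,x_n]$ with $a_1$ as a constant and then specialise each remaining variable in turn. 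Either route gives $s[\vec a/\vec x]$.

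The expected obstacle is precisely this bookkeeping: the interplay of substitution with $\lambda^*$, and the validity of the $\beta$-rule over polynomial algebras. Both are routine inductions on linear polynomials, so no genuine difficulty is anticipated.
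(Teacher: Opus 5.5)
Your proof is correct and takes the same approach as the paper: apply both sides of the closed identity to arguments and use the $\beta$-rule $(\lambda^*x.u)a=u[a/x]$. The paper applies the closure to the indeterminate itself, writing $t=(\lambda^*x.t)x=(\lambda^*x.t')x=t'$ in the polynomial algebra, which avoids the substitution lemma $(\lambda^*y.u)[a/x]=\lambda^*y.(u[a/x])$ you need; your version also spells out the multi-variable iteration that the paper leaves implicit.
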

\begin{proof}
    Assume that $A$ satisfies $\lambda ^*x.t(x) = \lambda^* x.t'(x)$ for some $t(x)$ and $t'(x)$. 
    Then $A$ satisfies $t=(\lambda ^*x.t)x = (\lambda^* x.t')x=t'$. 
\end{proof}

\begin{lemma}
    \label{lem:abstr}
    Let $A$ be a $\mathbf{BCI}$-algebra that satisfies the $\lambda^*$-closure of the identities of Table \ref{tab:linear}. 
    Let $t, t' \in A_{lin}[x_1, \ldots, x_n, x]$. 
    If $t=t'$ in $A_{lin}[x_1, \ldots, x_n, x]$, then $\lambda^* x. t = \lambda^* x. t'$ in $A_{lin}[x_1, \ldots, x_n]$. 
\end{lemma}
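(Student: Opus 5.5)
The plan is to describe equality in $A_{lin}[x_1,\ldots,x_n,x]$ by derivations and then check that $\lambda^*x$ respects each elementary derivation step. First I would fix the description of $A[x_1,\ldots,x_n,x]$ as the quotient of the set of formal terms built from elements of $A$ and the variables. The congruence is generated by three kinds of equation: equalities $ab = c$ holding in $A$ between constants, and instances of the $\mathbf{BCI}$ axioms $\bb pqr = p(qr)$, $\cc pqr = prq$, $\ii p = p$ with arbitrary terms substituted for $p,q,r$. It is then closed under reflexivity, symmetry, transitivity and application. Every generating axiom uses each metavariable exactly once on each side, so a one-step rewrite of a term in which $x$ occurs exactly once produces a term in which $x$ again occurs exactly once. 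The remaining issue is that an arbitrary derivation might pass through terms in which $x$ occurs several times or not at all. To avoid this I would use the normal form (reduction) description of the free extension, under which two linear polynomials are equal iff they are connected by a zigzag of one-step rewrites inside contexts. This reduces the statement to the case where $t'$ arises from $t$ by a single rewrite at some position.

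Next, I would handle the congruence cases by induction on the context in which the rewrite happens. If the rewritten position is inside $p$ in $t = pq$, then:
\begin{itemize}
\item if $x$ is free in $p$, we have $\lambda^*x.pq = \cc(\lambda^*x.p)q$, and we conclude by the induction hypothesis applied to $p$;
\item if $x$ is free in $q$, we have $\lambda^*x.pq = \bb p(\lambda^*x.q)$, and the equality $p = p'$ already holds in $A_{lin}[x_1,\ldots,x_n]$ because $x$ does not occur there.
\end{itemize}
The case of a rewrite inside $q$ is symmetric. Rewrites between constants of $A$, or inside subterms not containing $x$, are trivial for the same reason.

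The real work is the base case, where the rewrite is a $\mathbf{BCI}$ redex at the root of a subterm containing $x$. Here I would split according to which of $p,q,r$ (or the argument of $\ii$) contains $x$, and compute both abstractions. For example, for $\bb pqr = p(qr)$ with $x$ in $r$, we get $\lambda^*x.\bb pqr = \bb(\bb pq)(\lambda^*x.r)$ and $\lambda^*x.p(qr) = \bb p(\bb q(\lambda^*x.r))$. These agree by the associativity identity $\bb(\bb xy)z = \bb x(\bb yz)$ of Table~\ref{tab:linear}, instantiated at $p$, $q$ and $\lambda^*x.r$. Similarly:
\begin{itemize}
\item the case $x \in p$ of the $\bb$-redex needs $\cc(\cc(\bb\bb x)y)z = \cc x(yz)$;
\item the case $x \in q$ of the $\bb$-redex needs $\cc(\bb(\bb x)y)z = \bb\cc x(\cc yz)$;
\item the three cases of the $\cc$-redex use the remaining three identities of the last two rows of Table~\ref{tab:linear};
\item the case $\ii p = p$ with $x \in p$ requires $\lambda^*x.\ii p = \bb\ii(\lambda^*x.p)$ to equal $\lambda^*x.p$; I would argue this by a subinduction on $p$, using $\bb\ii x = \bb x\ii$ and the identities $1\bb = \bb$, $1(\bb x) = \bb x$, $1\cc = \cc$, $1\ii = \ii$, and so on, which say that $1$ acts as the identity on every value of $\lambda^*$.
\end{itemize}

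The main obstacle is that these instantiations take place at polynomials such as $\lambda^*x.r \in A_{lin}[x_1,\ldots,x_n]$, not at elements of $A$. Knowing the table identities merely for all elements of $A$ is not enough for this. This is exactly where the $\lambda^*$-closure hypothesis is used. From $\lambda^*yzw.s = \lambda^*yzw.s'$ in $A$ and the $\beta$-rule for $\lambda^*$, we get $s[p/y,q/z,r/w] = (\lambda^*yzw.s)pqr = (\lambda^*yzw.s')pqr = s'[p/y,q/z,r/w]$ for arbitrary polynomials $p,q,r$. Hence the table identities hold in every $A_{lin}[x_1,\ldots,x_n]$. I would also double-check the case analysis for the $\ii$-redex, since it is the one that does not follow from a single table identity and needs the subinduction on the shape of $\lambda^*x.p$.
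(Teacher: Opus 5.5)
Your proposal is correct and follows essentially the paper's proof. Both argue by induction over the derivation of $t=t'$; your zigzag-of-rewrites formulation just makes explicit the preservation of linearity in $x$, which the paper leaves tacit. The only substantive case is a $\mathbf{BCI}$-axiom instance, which is discharged by the last two rows of Table~\ref{tab:linear} instantiated at polynomials of $A_{lin}[x_1,\ldots,x_n]$; these hold there by the $\lambda^*$-closure hypothesis and the $\beta$-rule, as in Lemma~\ref{lem:closure}.

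Two minor corrections.

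\textbf{The $\ii$-redex.} It needs no subinduction. Indeed $\bb\ii(\lambda^*x.p)=1(\lambda^*x.p)$, and $\lambda^*x.p$ is syntactically of the form $\ii$, $\cc uv$ or $\bb uv$. So $1\ii=\ii$, $1(\cc xy)=\cc xy$ and $1(\bb xy)=\bb xy$ conclude directly, and $\bb\ii x=\bb x\ii$ is not used.

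\textbf{The $\bb$-redex with $x$ in $q$.} The two sides abstract to $\cc(\bb(\bb p)(\lambda^*x.q))r$ and $\bb p(\cc(\lambda^*x.q)r)$. So the identity actually needed, and the one valid in linear $\lambda$-algebras, is $\cc(\bb(\bb x)y)z=\bb x(\cc yz)$. The $\bb\cc x$ in the table entry you quoted is a misprint for $\bb x$.
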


\begin{proof}
    If $A$ satisfies the $\lambda^*$-closure of the identities of Table \ref{tab:linear}, then so does $A_{lin}[x_1, \ldots, x_n]$. 
    If $t=t'$ in $A_{lin}[x_1, \ldots, x_n, x]$, then the identity (in $x$) $t=t'$ holds in $A_{lin}[x_1, \ldots, x_n]$.
    We reason by induction on the structure of the derivation that leads to $t=t'$. 
    If $t=t'$ is obtained by reflexivity, simmetry, or transitivity the thesis follows trivially. 
    The case when $t=t'$ is obtained by application is also easy to handle. 
    It remains the case when $t=t'$ is an instance of one the identities defining $\mathbf{BCI}$-algebras (it cannot be an instance of the $\lambda^*$-closure of the equations of Table \ref{tab:linear} since they do not contain free variables).
    We assume, for the sake of an example, that $t=t'$ is $\bb uvw = u(vw)$ with $u,v\in A_{lin}[x_1, \ldots, x_n]$ and $w \in A_{lin}[x_1, \ldots, x_n,x]$. 
    Then in $A_{lin}[x_1, \ldots, x_n]$: 
    $$\lambda^* x. \bb uvw = \bb(\bb uv)(\lambda^* x.w) = \bb u(\bb v (\lambda^* x.w))=\bb u(\lambda^* x.vw) =\lambda^*x.u(vw)\text{,}$$
    where, in the second last equality, we used that $A_{lin}[x_1, \ldots, x_n]$ satisfies $\bb(\bb xy)z=\bb x(\bb yz)$ by Lemma \ref{lem:closure}. 
    The other cases are treated similarly. 
\end{proof}

Observe that if $t(x_1, \ldots, x_n) = t'(x_1, \ldots, x_n)$ in $A_{lin}[x_1, \ldots, x_n]$, then $A$ satisfies the equation $t=t'$ but the converse is not true. 

\begin{theorem}
    \label{thm:equiv}
    A $\mathbf{BCI}$-algebra $A$ is a linear $\lambda$-algebra iff $A$ satisfies the $\lambda^*$-closure of the identities of Table \ref{tab:linear}. 
\end{theorem}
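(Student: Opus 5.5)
The forward direction is Lemma \ref{lem:table}: a linear $\lambda$-algebra satisfies the $\lambda^*$-closures of the identities of Table \ref{tab:linear}. So all the work is in the converse. Assume $A$ satisfies the closures. We must prove \eqref{eq:bc-derived}, and then \eqref{eq:def-linear}.

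For \eqref{eq:bc-derived} we compute directly. For instance,
\[
\lambda^*z.x(yz)=\bb x(\lambda^*z.yz)=\bb x(\bb y\ii),
\]
and by the table $\bb y\ii=\bb\ii y=1y$. Since $1(\bb x)=\bb x$ and $1(\bb xy)=\bb xy$ are instances of the table, we use Lemma \ref{lem:closure} and Lemma \ref{lem:abstr} to move under the binders, peeling off the outer abstractions one at a time. Continuing this way, $\lambda^*xyz.x(yz)$ reduces to $\bb$ after using $1\bb=\bb$. The case of $\cc$ is handled in the same way. The $1$-equations in the table are designed exactly for this ``$\eta$ on constants'' bookkeeping, so this step should be a finite computation.

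For \eqref{eq:def-linear} we first show that $\llbracket-\rrbracket$ is compatible with substitution and with renaming of bound variables. Concretely, we prove by induction on $t\in A_{lin}[\vec x,x]$ that
\[
(\lambda^*x.t)[u/y]=\lambda^*x.(t[u/y]) \quad\text{and}\quad \llbracket M[N/x]\rrbracket=\llbracket M\rrbracket[\llbracket N\rrbracket/x].
\]
These are immediate, since $\lambda^*$ is defined syntactically and linearity decides which clause applies. Next we reduce $=_\beta$ to its generating steps. It is the congruence generated by $(\lambda x.M)N\to M[N/x]$, so it suffices to check three things:
\begin{enumerate}
\item The $\beta$-step: $(\lambda^*x.\llbracket M\rrbracket)\llbracket N\rrbracket=\llbracket M\rrbracket[\llbracket N\rrbracket/x]$. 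This is the $\beta$-rule for $\lambda^*$, which holds in $A_{lin}[\vec x]$ for elements $a$ of that algebra, together with the substitution lemma.
\item Closure under application. This is trivial.
\item Closure under abstraction: if $\llbracket M\rrbracket=\llbracket M'\rrbracket$ in $A_{lin}[\vec x,x]$, then $\lambda^*x.\llbracket M\rrbracket=\lambda^*x.\llbracket M'\rrbracket$. This is exactly Lemma \ref{lem:abstr}.
\end{enumerate}
One subtlety is that equality in \eqref{eq:def-linear} must be read in $A_{lin}[\vec x]$, meaning as polynomials, and not merely as validity in $A$. Lemma \ref{lem:abstr} is stated precisely for that notion, so the induction closes.

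The main obstacle is hidden inside Lemma \ref{lem:abstr}, whose proof leaves many cases to the reader. We must check that each $\mathbf{BCI}$ axiom instance $\bb uvw=u(vw)$, $\cc uvw=uwv$, $\ii u=u$ survives $\lambda^*$-abstraction. The abstracted variable $x$ may lie in any one of $u,v,w$, which gives roughly seven cases. Each case must be matched with one of the six compound equations in the bottom rows of Table \ref{tab:linear}, plus $\bb\ii x=\bb x\ii$ for the $\ii$ axiom. If I were writing this out, I would verify these cases first, for example $x\in u$ in $\cc uvw$:
\[
\lambda^*x.\cc uvw=\cc(\cc(\cc(\bb\cc(\lambda^* x.u))v)w)\ \text{versus}\ \cc(\cc(\lambda^*x.u)w)v.
\]
Matching it requires the row-five equation applied under Lemma \ref{lem:closure}. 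The remaining work is routine.
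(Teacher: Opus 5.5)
Your proposal is correct and follows the paper's proof essentially step for step. It uses Lemma~\ref{lem:table} for the forward direction and a direct computation of \eqref{eq:bc-derived} from the $1$-equations, with Lemma~\ref{lem:abstr} to rewrite under binders; it then reduces \eqref{eq:def-linear} to the root $\beta$-step, application and abstraction, the last being Lemma~\ref{lem:abstr}. The paper's route for \eqref{eq:bc-derived} is cleaner than stopping at $\bb x(1y)$, which matches no table identity: it first replaces $x(yz)$ by $\bb xyz$, so that $\lambda^*z.\bb xyz=\bb(\bb xy)\ii=1(\bb xy)=\bb xy$.

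Two small corrections to your aside on Lemma~\ref{lem:abstr}. The $\ii u=u$ case requires $\bb\ii(\lambda^*x.u)=\lambda^*x.u$. This follows from $1\ii=\ii$, $1(\bb xy)=\bb xy$ or $1(\cc xy)=\cc xy$, depending on the shape of $\lambda^*x.u$, and not from $\bb\ii x=\bb x\ii$; that identity is what yields \eqref{eq:mistery}. Also, your displayed $\lambda^*x.\cc uvw$ has a spurious outer $\cc$; it should be $\cc(\cc(\bb\cc(\lambda^*x.u))v)w$.
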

\begin{proof}
    One direction follows from Lemma \ref{lem:table}. 
    Conversely, to prove \eqref{eq:bc-derived}, observe first of all that for every $a \in A$, 
    \begin{equation}
        \label{eq:mistery}
        \lambda^* x. ax = \bb a \ii = \bb \ii a = 1 a\text{,} 
    \end{equation}
    where in the second last equality we used $\bb \ii x = \bb x \ii$. Then by \eqref{eq:mistery}, by $1(\bb xy)=\bb xy$, $1(\bb x)=\bb x$ and $1\bb=\bb$:  
    \begin{align*}
        \lambda^* xyz. x(yz) = \lambda^* xyz. \bb xyz = \lambda^* xy. 1(\bb xy) = \lambda^* xy. \bb xy = \lambda^* x. 1(\bb x) = \lambda^* x. \bb x = 1 \bb =\bb
    \end{align*}
    and the other identity is proved similarly. 
    We have to show \eqref{eq:def-linear}.
    We have already proved that if $M \to_{\beta} N$, then $\llbracket M \rrbracket = \llbracket N \rrbracket$. 
    Moreover, it is easy to see that if $M=_{\beta} M'$ and $N=_{\beta} N'$, then $\llbracket MN \rrbracket = \llbracket M'N' \rrbracket$
    for all $M', N' \in \Lambda_{lin}$. 
    It remains to prove that if $M=_{\beta} N$, then $\llbracket \lambda x.M \rrbracket = \llbracket \lambda x.N \rrbracket$. 
    In light of the definition of $\llbracket - \rrbracket$, this follows from Lemma \ref{lem:abstr}. 
\end{proof}

\begin{corollary}
    \label{cor:subvariety}
    The category of linear $\lambda$-algebras is a subvariety of the variety of $\mathbf{BCI}$-algebras. 
\end{corollary}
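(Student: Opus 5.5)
The corollary follows from Theorem \ref{thm:equiv} once we observe that the $\lambda^*$-closures of the identities of Table \ref{tab:linear} are themselves equations in the signature of $\mathbf{BCI}$-algebras. The plan is to compute these closures explicitly as closed terms. Then the class of linear $\lambda$-algebras is the class of $\mathbf{BCI}$-algebras satisfying a fixed finite set of equations, and hence a subvariety.

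\textbf{Step 1: the closures are constants.} Take an identity $t = t'$ of Table \ref{tab:linear}, with $t, t'$ built from variables $x,y,z$ and the constants $\bb,\cc,\ii$. Both sides are linear terms, and they have the same variables. Hence $t, t' \in A_{lin}[x,y,z]$ for every $\mathbf{BCI}$-algebra $A$. Look at the defining clauses of $\lambda^*$: each clause outputs a term built only from $\bb$, $\cc$, $\ii$, application and the subterms, and the choice of clause depends only on where the abstracted variable occurs. So $\lambda^* xyz.t$ is given by a \emph{uniform} closed term $s_t$ in the signature $(\cdot,\bb,\cc,\ii)$. Uniform means the same term for every algebra $A$, computed syntactically from $t$.

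\textbf{Step 2: the equational description.} By Step 1, the $\lambda^*$-closure of $t=t'$ is the closed equation $s_t = s_{t'}$. The main point to check is that ``$A$ satisfies the $\lambda^*$-closure'' in Theorem \ref{thm:equiv} means exactly that $A$ validates these closed equations. It does, since the closure has no free variables and its interpretation in $A$ is the evaluation of $s_t$ and $s_{t'}$. Let $E$ be the finite set of these equations together with the three $\mathbf{BCI}$ axioms. Theorem \ref{thm:equiv} then says that linear $\lambda$-algebras are precisely the models of $E$.

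\textbf{Step 3: conclusion.} Homomorphisms of linear $\lambda$-algebras are by definition $\mathbf{BCI}$-homomorphisms, so the category of linear $\lambda$-algebras is the full subcategory of $\mathbf{BCI}$ cut out by the equations $E$. By Birkhoff's theorem, or simply by the definition of a subvariety, it is a subvariety. In particular it is closed under products, subalgebras and homomorphic images.
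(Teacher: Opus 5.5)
Your proposal is correct and is exactly the argument the paper intends: the corollary is stated as an immediate consequence of Theorem \ref{thm:equiv}, and the $\lambda^*$-closures are written out as closed $\mathbf{BCI}$-equations in Appendix \ref{app:linear}. You also make explicit a point the paper leaves implicit, namely that $\lambda^*$ must be read as a syntactic operation on terms that does not depend on the algebra, so that each closure is a genuine equation between closed terms in the signature $(\cdot,\bb,\cc,\ii)$; you handle this correctly.
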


The explicit list of equations axiomatising linear $\lambda$-algebras can be found in Appendix \ref{app:linear}. 

\begin{remark}
    \label{rem:smart}
    If $A$ is a linear $\lambda$-algebra, there is, by virtue of Lemma \ref{lem:abstr} and Theorem \ref{thm:equiv}, a well-defined function $\lambda^*.(-) : A_{lin}[x_1, \ldots, x_n,x] \to A_{lin}[x_1, \ldots, x_n]$. 
    If $f: A \to B$ is a homomorphism of linear $\lambda$-algebras, there is an obvious induced homomorphism $f: A_{lin}[x_1, \ldots, x_n] \to B_{lin}[x_1, \ldots, x_n]$ and for every $t \in A_{lin}[x_1, \ldots, x_n,x]$, 
    $f(\lambda^* x.t) = \lambda^* x. f(t)$. 
\end{remark}

\subsection{A categorical characterisation}
\label{sec:cat}
Let $\mathbf{LinAlg}$ be the category of linear $\lambda$-algebras and homomorphisms. 
We have seen that this category is a subvariety of the variety of $\mathbf{BCI}$-algebras. 
We define a functor $F: \mathbf{LinAlg} \to \alg(L)$ and prove that it is part of an equivalence.  
Given a linear $\lambda$-algebra $A$, we define $F(A)$ to be the $L$-algebra with underlying set $A$ and action 
\begin{equation*}
    M \cdot (a_1, \ldots, a_n) = \llbracket M \rrbracket [a_1 /x_1 , \ldots, a_n/x_n]
\end{equation*}
for $M \in L(n)$ and for every $a_1, \ldots, a_n \in A$. 
The action is well defined by condition \eqref{eq:def-linear} of the definition of linear $\lambda$-algebra. 
Observe that $\llbracket M \rrbracket \in A_{lin}[x_1, \ldots, x_n]$ and that $\llbracket M \rrbracket [a_1 /x_1 , \ldots, a_n/x_n] \in A$. 
To prove that $F(A)$ is a $L$-algebra, one has to verify the conditions of Definition \ref{def:algebra} and this is done in Appendix \ref{app:proofs}. 

\begin{lemma}
    \label{lem:functorf}
    If $f: A \to B$ is a linear $\lambda$-algebra homomorphism, then $f: F(A) \to F(B)$ is a morphism of $L$-algebras. 
\end{lemma}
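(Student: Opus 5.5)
We have to show that for every $M \in L(n)$ and all $a_1,\ldots,a_n \in A$,
\[
f\bigl(\llbracket M \rrbracket[a_1/x_1,\ldots,a_n/x_n]\bigr) = \llbracket M \rrbracket[f(a_1)/x_1,\ldots,f(a_n)/x_n],
\]
where on the right the interpretation is computed in $B$. The idea is to compare the two interpretations of $M$ via the induced map on polynomial algebras from Remark \ref{rem:smart}. We write $\llbracket M\rrbracket_A \in A_{lin}[x_1,\ldots,x_n]$ and $\llbracket M\rrbracket_B \in B_{lin}[x_1,\ldots,x_n]$. The homomorphism $f$ extends to a homomorphism $\bar f : A_{lin}[x_1,\ldots,x_n] \to B_{lin}[x_1,\ldots,x_n]$ that fixes the variables and acts as $f$ on constants. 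Since $f$ preserves $\bb,\cc,\ii$ and application, $\bar f$ is a $\mathbf{BCI}$-homomorphism.

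The key claim is $\bar f(\llbracket M\rrbracket_A) = \llbracket M\rrbracket_B$, which we prove by induction on the structure of a representative term $M$.
\begin{itemize}
\item For a variable, both sides are $x_i$.
\item For an application, it follows from $\bar f$ preserving application and the induction hypothesis.
\item For an abstraction $\lambda x.N$, we use $\bar f(\lambda^* x.t) = \lambda^* x.\bar f(t)$ (Remark \ref{rem:smart}). This can also be checked directly by induction on $t$ from the clauses defining $\lambda^*$, because $\bar f$ sends $\ii,\bb,\cc$ to $\ii,\bb,\cc$ and does not change which variables occur where. Combined with the induction hypothesis, it gives $\bar f(\lambda^*x.\llbracket N\rrbracket_A)=\lambda^*x.\llbracket N\rrbracket_B$.
\end{itemize}
Independence of the choice of representative modulo $\beta$ and renaming is already guaranteed by condition \eqref{eq:def-linear} in both $A$ and $B$. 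So it suffices to work with a fixed raw term.

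Next we need compatibility of $\bar f$ with substitution of constants. For $t \in A_{lin}[x_1,\ldots,x_n]$,
\[
f\bigl(t[a_1/x_1,\ldots,a_n/x_n]\bigr) = \bar f(t)[f(a_1)/x_1,\ldots,f(a_n)/x_n].
\]
This is a routine induction on $t$ using only that $f$ preserves application. Alternatively, both sides are $\mathbf{BCI}$-homomorphisms $A_{lin}[x_1,\ldots,x_n]\to B$ that agree on $A$ and on the variables, so the universal property of the free extension gives equality. Combining this with the key claim yields the desired equation, so $f$ commutes with every action of $L$ and is therefore a morphism of $L$-algebras.

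The main obstacle is the abstraction step. $\lambda^*$ is defined on syntactic linear polynomials, while elements of $A_{lin}[x_1,\ldots,x_n]$ are equivalence classes. We rely on Lemma \ref{lem:abstr} and Theorem \ref{thm:equiv} to ensure $\lambda^*$ is well defined on classes in a linear $\lambda$-algebra. Then the syntactic induction showing that $\bar f$ commutes with $\lambda^*$ transfers to the classes, since $\bar f$ is defined on representatives compatibly with the equations.
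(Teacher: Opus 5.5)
Your proof is correct, but it is organised differently from the paper's. The paper runs a single induction on $M$ directly on the statement $f(\llbracket M\rrbracket[\vec a/\vec x])=\llbracket M\rrbracket[f\vec a/\vec x]$. Its abstraction step does three things in turn. It pushes the substitution of constants inside $\lambda^*$, i.e.\ it uses $(\lambda^*x.t)[\vec a/\vec x]=\lambda^*x.(t[\vec a/\vec x])$. It then applies Remark~\ref{rem:smart}. Finally it invokes the induction hypothesis for $N$ with the bound variable $x$ left unsubstituted, which tacitly needs the hypothesis in a stronger form, with $f$ acting on $A_{lin}[x]$ rather than only on closed instances.

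You instead factor the statement into two parts. The first is naturality of the interpretation, $\bar f(\llbracket M\rrbracket_A)=\llbracket M\rrbracket_B$, proved by induction inside the polynomial algebras. The second is naturality of evaluation, which you get from the universal property of the free extension. Because your induction hypothesis already concerns open polynomials, the abstraction step needs only the commutation of $\bar f$ with $\lambda^*$. You need neither the commutation of $\lambda^*$ with substitution nor the strengthened hypothesis.

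The paper's version is shorter on the page but leaves those two points implicit; yours makes every step explicit. Both arguments rest on the same fact: homomorphisms commute with $\lambda^*$, which in turn relies on $\lambda^*$ being well defined on classes (Lemma~\ref{lem:abstr}).
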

\begin{proof}
    One has to prove that for every $x_1, \ldots, x_n \vdash M \in L(n)$ and every $a_1, \ldots, a_n \in A$
    \begin{equation*}
        f(\llbracket M \rrbracket [a_1/x_1, \ldots, a_n/x_n]) = \llbracket M \rrbracket [fa_1/x_1, \ldots, fa_n/x_n]\text{.}
    \end{equation*}
    The proof, by induction on the structure of $M$, is deferred to Appendix \ref{app:proofs}.
    The step involving $\lambda$-abstraction makes use of Remark \ref{rem:smart}. 
\end{proof}

By abuse, we indicate with $G$ also the corestriction $G: \alg(L) \to \mathbf{LinAlg}$ of the functor $G: \alg(L) \to \mathbf{BCI}$. 

\begin{proposition}
    \label{prop:equivalence}
    The pair of functors 
    $F: \mathbf{LinAlg} \to \alg(L)$ and $G: \alg(L) \to \mathbf{LinAlg}$
    is an adjoint equivalence.
\end{proposition}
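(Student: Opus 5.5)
We will show that $GF$ is the identity functor on $\mathbf{LinAlg}$ and that $FG$ is the identity on $\alg(L)$. Both functors act as the identity on underlying sets and on morphisms, so it suffices to check that the structures match on objects. The unit and counit can then be taken to be identities, and the triangle identities hold trivially. Before this, one point needs checking: the corestriction $G:\alg(L)\to\mathbf{LinAlg}$ must be well defined. That is, for an $L$-algebra $A$, the $\mathbf{BCI}$-algebra $G(A)$ must satisfy \eqref{eq:bc-derived} and \eqref{eq:def-linear}.

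\emph{Step 1: $G(A)$ is a linear $\lambda$-algebra.} The key lemma is that for every $M\in L(n)$ and $a_1,\ldots,a_n\in A$,
\[
M\cdot(a_1,\ldots,a_n)=\llbracket M\rrbracket^{G(A)}[a_1/x_1,\ldots,a_n/x_n].
\]
More precisely, we prove that the polynomial $\llbracket M\rrbracket\in G(A)_{lin}[x_1,\ldots,x_n]$ evaluates to the $L$-action. The proof is by induction on $M$, following the case split of Proposition \ref{prop:ff}.
\begin{itemize}
\item For variables the claim is immediate.
\item Applications are handled by the operad-algebra axiom applied to $\app$.
\item For abstractions, use $\lambda^*x.x=\ii=\mathbf I\cdot()$, $\lambda^*x.pq=\cc(\lambda^*x.p)q$ and $\lambda^*x.pq=\bb p(\lambda^*x.q)$, together with the $\beta$-equalities $\lambda x.PQ=_\beta\mathbf C(\lambda x.P)Q$ and $\lambda x.PQ=_\beta\mathbf B P(\lambda x.Q)$ in $L$.
\end{itemize}
The subtle point is that $\llbracket M\rrbracket$ is an element of the free extension, not a function. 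So we have to prove the stronger statement that the polynomial equals the one obtained from the action $L(n+k)\times A^{k}$, with the variables kept free. To do this we work in the $L$-algebra structure that the extension $A_{lin}[x_1,\ldots,x_n]$ inherits. Alternatively, we work directly with the $\lambda^*$-closures, which are closed terms and hence genuine elements of $A$.

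Given the lemma, \eqref{eq:def-linear} follows. Indeed, $L(n)$ is taken modulo $=_\beta$, so $M=_\beta N$ gives equal actions, and hence equal closed interpretations $\llbracket\lambda\vec x.M\rrbracket=\llbracket\lambda\vec x.N\rrbracket$. Applying to the variables, as in Lemma \ref{lem:closure}, then gives $\llbracket M\rrbracket=\llbracket N\rrbracket$. Equations \eqref{eq:bc-derived} are the special case $\lambda xyz.x(yz)=\mathbf B$ and $\lambda xyz.xzy=\mathbf C$ in $L(0)$. We expect this step, working with indeterminates rather than elements, to be the main obstacle. The cleanest fallback is to note that it suffices to prove the equalities for the closed combinators $\lambda\vec x.M$, where everything lives in $A$.

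\emph{Step 2: $FG=\id$.} For an $L$-algebra $A$, the action of $FG(A)$ is $\llbracket M\rrbracket[\vec a/\vec x]$, and this equals the original action $M\cdot\vec a$ by the lemma of Step 1.

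\emph{Step 3: $GF=\id$.} For a linear $\lambda$-algebra $A$, application in $GF(A)$ is $\app\cdot(a_1,a_2)=\llbracket x_1x_2\rrbracket[a_1,a_2]=a_1a_2$. The constants are $\mathbf B\cdot()=\llbracket\mathbf B\rrbracket=\lambda^*xyz.x(yz)=\bb$, and likewise $\cc$, with $\llbracket\mathbf I\rrbracket=\ii$ by definition; here exactly \eqref{eq:bc-derived} is used. On morphisms both composites are the identity by Lemma \ref{lem:functorf} and Proposition \ref{prop:ff}. Hence $F$ and $G$ are mutually inverse isomorphisms of categories, and in particular an adjoint equivalence.
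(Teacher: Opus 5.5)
Your Steps 2 and 3 are the paper's proof. Unit and counit are identities, $GF(A)=A$ is checked on $\app$ and on $\bb,\cc,\ii$ using \eqref{eq:bc-derived}, and $FG(A)=A$ is the interpretation lemma $\llbracket M\rrbracket[\vec a/\vec x]=M\cdot\vec a$. The paper never proves your Step 1. It takes the corestriction $G:\alg(L)\to\mathbf{LinAlg}$ for granted, and its inductive proof of the interpretation lemma relies on this: it rewrites $\llbracket\lambda x.PQ\rrbracket$ as $\llbracket\mathbf C(\lambda x.P)Q\rrbracket$, which needs $\beta$-invariance of $\llbracket-\rrbracket$ in $G(A)$. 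You are right to flag this point, but your Step 1 does not close it.

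Without \eqref{eq:def-linear} you must work with raw terms. The case split of Proposition \ref{prop:ff} (variable, application, $\lambda x.x$, $\lambda x.PQ$) then misses $M=\lambda x.\lambda y.PQ$. With $x,y$ free in $P$ one gets $\llbracket M\rrbracket=\lambda^*x.(\cc\,\llbracket\lambda y.P\rrbracket)\llbracket Q\rrbracket$, and the defining clause of $\lambda^*$ asks for $\lambda^*x.(\cc\,\llbracket\lambda y.P\rrbracket)$. That is not $\lambda^*$ of the interpretation of a $\lambda$-term, so an inductive hypothesis about polynomials of the form $\llbracket R\rrbracket$ does not apply. Reading $\cc$ as $\llbracket\mathbf C\rrbracket$ would fix this, but that is part of what you are proving. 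What is missing is a statement about $\lambda^*$ applied to arbitrary polynomials.

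Neither remedy you suggest supplies it. There is no $L$-algebra structure on the polynomial extension of $G(A)$ available at this point; the natural one is $F$ of that extension, which presupposes that it is a linear $\lambda$-algebra, so that route is circular. Passing to closed terms $\lambda\vec x.M$ changes nothing, since $\llbracket\lambda\vec x.M\rrbracket=\lambda^*\vec x.\llbracket M\rrbracket$ is still computed by iterating $\lambda^*$ on open polynomials.

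The fix is to translate polynomials back into $\lambda$-terms. Every polynomial produced by $\llbracket-\rrbracket$ has only $\bb,\cc,\ii$ as constants. For such a linear $t$, let $\hat t$ be the linear $\lambda$-term obtained by reading $\bb,\cc,\ii$ as $\mathbf B,\mathbf C,\mathbf I$. Two inductions on $t$ give:
\begin{enumerate}
\item $t[\vec a/\vec x]=\hat t\cdot\vec a$ in any $L$-algebra, from $\bb=\mathbf B\cdot()$ etc.\ and the algebra axioms for $\app$;
\item $\widehat{\lambda^*x.t}=_\beta\lambda x.\hat t$, from $\mathbf I=\lambda x.x$, $\mathbf C(\lambda x.P)Q=_\beta\lambda x.PQ$ and $\mathbf BP(\lambda x.Q)=_\beta\lambda x.PQ$.
\end{enumerate}
Then $\widehat{\llbracket M\rrbracket}=_\beta M$ by induction on the raw term $M$, the abstraction case being exactly the second item. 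Hence $\llbracket M\rrbracket[\vec a/\vec x]=\widehat{\llbracket M\rrbracket}\cdot\vec a=M\cdot\vec a$.

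With this lemma, your derivation of \eqref{eq:def-linear} and \eqref{eq:bc-derived} via closed terms is correct, and Steps 2 and 3 go through. The resulting proof is more complete than the paper's, since it also shows that $G$ lands in $\mathbf{LinAlg}$.
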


\begin{proof}
    Let $A$ be a linear $\lambda$-algebra. 
    We prove that the identity function $A \to GF(A)$ is a homomorphism of $\mathbf{BCI}$-algebras, which is enough by Corollary \ref{cor:subvariety} for showing that it is a morphism in $\mathbf{LinAlg}$.  
    For every $a_1, a_2 \in A$, $\app \cdot (a_1, a_2) = \llbracket x_1 x_2 \rrbracket [a_1/x_1, a_2/x_2] = a_1 a_2$. 
    Moreover, $\bb=\llbracket \mathbf{B} \rrbracket$ and $\cc=\llbracket \mathbf{C} \rrbracket$ by \eqref{eq:bc-derived} and $\ii=\llbracket \mathbf{I} \rrbracket$ by the definition of $\llbracket - \rrbracket$. 
    Conversely, let $A$ be a $L$-algebra with action $\cdot :L(n) \times A^n \to A$. 
    We need to show that the identity function $FG(A) \to A$ is a morphism of $L$-algebras, i.e. that for every $x_1, \ldots, x_n \vdash M \in L(n)$ and every $a_1, \ldots, a_n \in A$ 
    $$ \llbracket M \rrbracket [a_1/x_1, \ldots, a_n/x_n] = M \cdot (a_1,\ldots, a_n)\text{.}$$
    This is proved by induction on the structure of $M$ and can be found in Appendix \ref{app:proofs}. 
\end{proof}

\section{Semiclosed operads}
The operad of linear $\lambda$-terms $L$ enjoys a very important property, that we set out to study in this section. 

\begin{definition} [cf. {\cite[Definition 3.1]{H17}}]
    Let $O$ be an operad. 
    We say that $O$ is \emph{semiclosed} if, for every $n \in \mathbb{N}$, there are two maps $r: O(n) \to O(n+1)$ and $s: O(n+1) \to O(n)$ such that $r \circ s = \id_{O(n+1)}$, i.e., a retraction.  
    When $r$ and $s$ are clear from the context we write the retraction as $O(n+1) \triangleleft O(n)$.  
    The maps $r$ and $s$ have to be natural in $n \in \mathbb{N}$ and have to satisfy two compatibility conditions: 
       \begin{equation}
        \label{eq:comp1}
      \begin{tikzcd}
          O(n) \times O(k_1) \times \cdots \times O(k_n) \arrow[r, "\circ"]  \arrow[d, "r \times \id \times \cdots \times \id"] & O(k_1 + \cdots +k_n)  \arrow[d, "r"] \\
          O(n+1) \times O(k_1) \times \cdots \times O(k_n) \arrow[r, "\circ"] & O(k_1 + \cdots +k_n +1)
      \end{tikzcd}
    \end{equation}
    and 
    \begin{equation}
        \label{eq:comp2}
      \begin{tikzcd}
          O(n+1) \times O(k_1) \times \cdots \times O(k_n) \arrow[r, "\circ"]  \arrow[d, "s \times \id \times \cdots \times \id"] & O(k_1 + \cdots +k_n+1)  \arrow[d, "s"] \\
          O(n) \times O(k_1) \times \cdots \times O(k_n) \arrow[r, "\circ"] & O(k_1 + \cdots +k_n)
      \end{tikzcd}
    \end{equation}
    where by abuse we wrote $\circ: O(n+1) \times O(k_1) \times \cdots \times O(k_n) \to O(k_1 + \cdots +k_n +1)$ for\footnote{An operad $O$ can be viewed as a monoid in the category of species (i.e., functors $\mathbf{Bij} \to \mathbf{Set}$); 
    if $\partial O$ is the derivative of species, then and $r,s$ are a section and a retraction of the canonical map
    $\partial O \to O$ induced by the monoid structure. We do not
    pursue this more conceptual viewpoint here. See \cite{F05,FGHW08}.}
    \begin{equation*}
        O(n+1) \times O(k_1) \times \cdots \times O(k_n) \to O(n+1) \times O(k_1) \times \cdots \times O(k_n) \times O(1) \xrightarrow{\circ}  O(k_1 + \cdots +k_n +1)\text{.}
    \end{equation*}
    A morphism of semiclosed operads is a morphism of operads $\phi: O \to O'$ such that 
    $$\phi \circ r_O = r_{O'} \circ \phi \quad \text{ and } \quad \phi \circ s_O = s_{O'} \circ \phi\text{.}$$
\end{definition}

\begin{lemma}
    The operad $L$ is semiclosed. 
\end{lemma}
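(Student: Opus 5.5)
We define the two maps explicitly as abstraction and application of a fresh last variable, and then check the required equations by $\beta$-conversion. For $x_1,\ldots,x_n \vdash M \in L(n)$ set
\begin{equation*}
r(M) := x_1,\ldots,x_n,x_{n+1} \vdash M x_{n+1} \in L(n+1), \qquad s(N) := x_1,\ldots,x_n \vdash \lambda x_{n+1}.N \in L(n)
\end{equation*}
for $x_1,\ldots,x_{n+1} \vdash N \in L(n+1)$. Both maps are well defined on $L$. Renaming the free variables commutes with the constructions, since $x_{n+1}$ is fresh. $\beta$-equivalence is a congruence, so $M=_\beta M'$ gives $Mx_{n+1} =_\beta M'x_{n+1}$ and $\lambda x_{n+1}.N =_\beta \lambda x_{n+1}.N'$. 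Both results are again linear and are derivable in the type system: the application rule with $x_{n+1}\vdash x_{n+1}$ in the first case, the abstraction rule in the second.

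The retraction identity is $r(s(N)) = (\lambda x_{n+1}.N)x_{n+1} =_\beta N$, which is one $\beta$-step. The identity in the other direction, $s(r(M)) = \lambda x_{n+1}.Mx_{n+1}$, would need $\eta$. It is not required, and indeed fails in $L$, which is quotiented only by $\beta$.

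Naturality in $n$ is with respect to the symmetric group actions: for $\sigma\in S_n$ we need $r(M\sigma) = r(M)(\sigma\oplus 1)$ and $s(N(\sigma\oplus 1)) = s(N)\sigma$. Both hold because the permutation only reorders the context $x_1,\ldots,x_n$ and leaves the last variable $x_{n+1}$ in place. This is a direct unfolding of the definition of the action $M\sigma$.

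For the compatibility square \eqref{eq:comp1}, take $M\in L(n)$ and $N_i \in L(k_i)$, and rename so that the contexts $\Delta_i$ are disjoint and do not contain the new last variable $y$. Going down then right gives $(My)[N_1/x_1,\ldots,N_n/x_n, y/y] = M[\vec N/\vec x]\,y$. Going right then down gives $r(M[\vec N/\vec x])$, which is the same term. For \eqref{eq:comp2}, take $N\in L(n+1)$. Going down then right gives $(\lambda x_{n+1}.N)[N_1/x_1,\ldots,N_n/x_n]$. Going right then down gives $\lambda y.\,N[N_1/x_1,\ldots,N_n/x_n,y/x_{n+1}]$. These are equal, up to $\alpha$-renaming of the bound variable, by the definition of capture-free substitution, because $y$ is not free in any $N_i$.

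The main point needing care is the second square. One must check that the ``padding by the identity $1\in O(1)$ in the last slot'' convention corresponds exactly to substituting the variable $y$ for $x_{n+1}$, with $y$ placed last in the resulting context. One must also check that the operad composition in $L$, defined on representatives, respects the quotient by $\beta$, which again follows from $\beta$ being a congruence stable under substitution. Everything else is bookkeeping of contexts.
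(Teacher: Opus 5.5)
Your proof is correct and follows the paper's argument. It uses the same maps $r(M)=Mx_{n+1}$ and $s(N)=\lambda x_{n+1}.N$, obtains the retraction $r\circ s=\id$ from a single $\beta$-step, settles the first square by direct inspection, and settles the second because capture-free substitution commutes with abstraction of a fresh variable. Your extra checks of well-definedness on $\beta$-classes and of $S_n$-equivariance are details the paper leaves implicit.
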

\begin{proof}
        We define two maps $r_L: L(n) \to L(n+1)$ and $s_L: L(n+1) \to L(n)$:  
    \begin{equation*}
        r_L : x_1, \ldots, x_{n} \vdash M \quad \mapsto \quad Mx_{n+1} \quad \text{and} \quad  s_L : x_1, \ldots, x_n, x_{n+1} \vdash M \quad \mapsto \quad \lambda x_{n+1}. M\text{.} 
    \end{equation*}
    The retraction $r_L \circ s_L = \id$ follows from $\beta$-equivalence: $(\lambda x_{n+1}. M)x_{n+1} =_{\beta} M$. 
    The first compatibility condition \eqref{eq:comp1} is obvious and the second \eqref{eq:comp2} is the remark that, if $x_1, \ldots, x_n,x \vdash M$, then 
    $(\lambda x. M) [N_1/x_1, \ldots, N_n/ x_n] = \lambda x. M[N_1/x_1, \ldots, N_n/x_n]$. 
\end{proof}

Let $O$ be a semiclosed operad with retraction $r$ and section $s$. 
We define a function $\iota: L \to O$ as follows. 
Let $M \in L(n)$; we define $\iota(M)$ by induction on $M$. 
If $n=1$ and $M=x$, $\iota(x)=1 \in O(1)$. 
Let $\app \in O(2)$ be the image of $1 \in O(1)$ through the map $r: O(1) \to O(2)$. 
If $M=PQ$, we define $\iota(PQ):= \app(\iota(P), \iota(Q))$ and if $M=\lambda x.N$, $\iota(M):=s(\iota(N))$. 
We now prove that if $(\lambda x. M) N \to_{\beta} M[N/x]$, then $\iota((\lambda x. M) N) = \iota(M[N/x])$. 
To show this we need the following lemma. 

\begin{lemma}
    \label{lem:subst}
    For every $x_1, \ldots, x_n \vdash M$ in $L(n)$, $\iota(M[N/x_n]) = \iota(M)(1, \ldots, 1,\iota(N))$. 
\end{lemma}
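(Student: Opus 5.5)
We prove the statement by structural induction on the linear term $M$, generalising it first so that the induction goes through. Here $\iota$ is defined on raw linear terms, inductively on syntax. Terms are taken up to renaming of free variables and reordering of contexts, so $\iota$ should respect the symmetric action. We therefore prove the more flexible claim: for $x_1,\ldots,x_n \vdash M$ and $\Delta \vdash N$ with $|\Delta| = k$, substituting $N$ for the variable $x_i$ gives $\iota(M[N/x_i]) = \iota(M)(1,\ldots,1,\iota(N),1,\ldots,1)$, with $\iota(N)$ in position $i$. The resulting context $x_1,\ldots,x_{i-1},\Delta,x_{i+1},\ldots,x_n$ is read in the evident order. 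The case $i=n$ is the lemma. Along the way we use the operad axioms of Definition~\ref{def:operad}, in particular associativity and the equivariance axioms 4 and 5, to move identities and permutations around.

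The three cases are as follows.
\begin{enumerate}
\item \emph{Variable.} If $M=x$, then $n=1$ and $\iota(x)(\iota(N)) = 1(\iota(N)) = \iota(N)$ by the unit axiom.
\item \emph{Application.} If $M=PQ$ with $P$ using the first $j$ variables and $Q$ the rest, then $x_i$ lies in exactly one of $P$ or $Q$ by linearity; say in $P$. Then $\iota(PQ[N/x_i]) = \app(\iota(P[N/x_i]),\iota(Q))$. By induction this equals $\app(\iota(P)(1,\ldots,\iota(N),\ldots,1),\ \iota(Q)(1,\ldots,1))$. Associativity (axiom 3) rewrites it as $\app(\iota(P),\iota(Q))(1,\ldots,\iota(N),\ldots,1)$. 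The case where $x_i$ lies in $Q$ is symmetric. When the variables of $P$ and $Q$ are interleaved in the context, a permutation must be commuted past the composite using axioms 4 and 5.
\item \emph{Abstraction.} If $M=\lambda y.M'$ with $x_1,\ldots,x_n,y \vdash M'$, then $\iota(M[N/x_i]) = s(\iota(M'[N/x_i]))$, since $y$ can be chosen fresh for $N$. By induction applied to $M'$, which has $n+1$ variables, this equals $s(\iota(M')(1,\ldots,\iota(N),\ldots,1,1))$, where the last slot is the $y$-slot and is filled with $1$. The second compatibility condition \eqref{eq:comp2} says precisely that $s(f(g_1,\ldots,g_n,1)) = s(f)(g_1,\ldots,g_n)$. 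So the expression equals $s(\iota(M'))(1,\ldots,\iota(N),\ldots,1) = \iota(M)(1,\ldots,\iota(N),\ldots,1)$.
\end{enumerate}
The abstraction step is exactly where the generalisation from the last position $x_n$ to an arbitrary position $x_i$ is needed. The bound variable $y$ occupies the final slot of $M'$, so the substituted variable is no longer last.

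The main obstacle is bookkeeping of contexts and permutations. Two points need care. First, $\iota$ must be invariant under the exchange rule, meaning it commutes with the symmetric group action; naturality of $r$ and $s$ in $n$ is what handles the abstraction case of that invariance. Second, in the application case, the composite $\app(\iota(P),\iota(Q))$ must be matched with the context ordering of $PQ$. I would settle both by first proving, by the same induction, that $\iota(M\sigma) = \iota(M)\sigma$, and then carrying out the main induction only for contexts in which $P$'s variables precede $Q$'s.
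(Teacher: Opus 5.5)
Your proof is correct and follows the same route as the paper's: structural induction on $M$, with associativity (together with $g(1,\ldots,1)=g$) handling the application case and the compatibility condition \eqref{eq:comp2} handling the abstraction case. Your strengthening to an arbitrary position $i$, together with the side lemma $\iota(M\sigma)=\iota(M)\sigma$, makes explicit what the paper's write-up leaves implicit: under the binder, $x_n$ is no longer the last variable of $M'$, and \eqref{eq:comp2} needs a $1$ in the final slot.
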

\begin{proof}
    See Appendix \ref{app:proofs}. 
\end{proof}

Coming back to the proof that if $(\lambda x. M) N \to_{\beta} M[N/x]$, then $\iota((\lambda x. M) N) = \iota(M[N/x])$, observe that 
\begin{align*}
    \iota(M) & = r(s(\iota(M))) & \text{because $r \circ s= \id$} \\
    & = r(1)(s(\iota(M)),1) & \text{by \eqref{eq:comp1}} \\
    & = \app(s(\iota(M)),1) & \text{by definition of $\app$}
\end{align*}
so that 
$\iota(M)(1, \ldots, 1, \iota(N)) = \app(s(\iota(M)), \iota(N))\text{.}$ Therefore, by Lemma \ref{lem:subst}
$$\iota((\lambda x. M) N) = \app(s(\iota(M)), \iota(N)) = \iota(M)(1, \ldots, 1, \iota(N)) = \iota(M[N/x])\text{.}$$

\begin{proposition}
    \label{prop:initial}
   The operad $L$ is the initial semiclosed operad.  
\end{proposition}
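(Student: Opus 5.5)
We must show that for every semiclosed operad $O$ there is a unique morphism of semiclosed operads $L \to O$. The candidate is the map $\iota$ defined above by induction on terms. The plan has three steps: show $\iota$ is well defined on $L(n)$, show it is a morphism of semiclosed operads, and show uniqueness.

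\textbf{Well-definedness.} Elements of $L(n)$ are judgements taken up to renaming of free variables and up to $=_\beta$. First we make $\iota$ depend only on the judgement $x_1,\ldots,x_n \vdash M$, with variables ordered by the context. For the exchange rule we set $\iota(M\sigma) := \iota(M)\sigma$. In the application case $\iota(PQ)=\app(\iota(P),\iota(Q))$, the context of $PQ$ is a permutation of the concatenation of the contexts of $P$ and $Q$, and we act by that permutation. Invariance under $\beta$ then follows from what was just proved: $\iota$ respects a top-level $\beta$-redex, by Lemma \ref{lem:subst} and $r\circ s=\id$. It remains to check that $\iota$ is compatible with the term constructors, so that it respects $\to_\beta$ inside contexts. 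This holds because $\iota(PQ)$ depends only on $\iota(P),\iota(Q)$, and $\iota(\lambda x.N)=s(\iota(N))$ depends only on $\iota(N)$. Hence $\iota$ respects $=_\beta$.

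\textbf{Morphism of semiclosed operads.} Equivariance $\iota(M\sigma)=\iota(M)\sigma$ holds by construction, using naturality of $r,s$ in $n$. The identity is preserved since $\iota(x_1)=1$. For composition we must prove
\[
\iota(M[N_1/x_1,\ldots,N_n/x_n])=\iota(M)(\iota(N_1),\ldots,\iota(N_n)).
\]
This follows from Lemma \ref{lem:subst}, applied once per variable, combined with equivariance and operad associativity, since a simultaneous substitution is an iterated single substitution in the last variable up to permutation. Alternatively one can argue directly by induction on $M$, using \eqref{eq:comp2} in the abstraction case.

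Compatibility with $s$ is the definition: $\iota(s_L(M))=\iota(\lambda x_{n+1}.M)=s(\iota(M))$. For $r$, we compute
\[
\iota(r_L(M))=\iota(Mx_{n+1})=\app(\iota(M),1)=r(1)(\iota(M),1)=r(\iota(M)),
\]
where the last equality is \eqref{eq:comp1} applied with $f=1\in O(1)$.

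\textbf{Uniqueness.} Let $\phi:L\to O$ be any morphism of semiclosed operads. Then $\phi(x)=1$, and
\[
\phi(\app)=\phi(r_L(x_1))=r(1)=\app .
\]
Consequently $\phi(PQ)=\phi(\app(P,Q))=\app(\phi P,\phi Q)$ and $\phi(\lambda x.N)=\phi(s_L N)=s(\phi N)$. By induction on the term, $\phi=\iota$.

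\textbf{Main obstacle.} The delicate point is the bookkeeping of contexts and permutations in the application case, together with the passage from Lemma \ref{lem:subst} (single substitution in the last variable) to full operadic composition. I would handle this by first proving the single-variable version at an arbitrary position via the symmetric action, and then iterating it.
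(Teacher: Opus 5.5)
Your proposal is correct and follows the paper's proof. The shared steps are: well-definedness via Lemma~\ref{lem:subst} and $r\circ s=\id$, compatibility with $r$ via \eqref{eq:comp1} at $f=1$, compatibility with $s$ by definition, and uniqueness from $\phi(\app)=r(1)$ together with $\phi\circ s_L=s\circ\phi$. You also spell out bookkeeping the paper leaves as a ``simple generalisation'' of Lemma~\ref{lem:subst}, namely the handling of context permutations and the reduction of full composition to single substitutions at arbitrary positions.
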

\begin{proof}
    Let $O$ be a semiclosed operad with retraction $r$ and section $s$.  
    We have just proved that there is a well-defined map $\iota : L \to O$, i.e., one that is invariant under $=_{\beta}$. 
    The fact that $\iota$ is a morphism of operads follows from a simple generalisation of Lemma \ref{lem:subst}. 
    Moreover, $\iota \circ r_L = r \circ \iota$: if $x_1, \ldots, x_n \vdash M \in L(n)$, 
        $\iota(r_L(M)) = \iota(Mx_{n+1}) = \app(\iota(M), 1) = r(1)(\iota(M),1) = r(\iota(M)) \text{.}$
    The proof that $\iota \circ s_L = s \circ \iota$ is similar. 
    Finally, if $\phi: L \to O$ is a morphism of semiclosed operads, then $r(\phi(M)) = \phi(xy) (\phi(M),1)$ for every $x_1, \ldots, x_n \vdash M$. 
    When $M=x$, this implies that $\phi(x_1 x_2)=r(1)$, i.e., $\phi(x_1 x_2)=\app$. 
    Moreover, $s(\phi(M))=\phi(\lambda x_{n+1} . M)$. 
    Therefore, inductively, for every linear $\lambda$-term $M$, $\phi(M)=\iota(M)$. 
    This proves that $\iota : L \to O$ is the unique morphism of semiclosed operads between $L$ and $O$, i.e., $L$ is initial. 
\end{proof}

\subsection{Scott's representation theorem}

Let $O$ be an operad, and let $\cat O$ be the associated PROP. 
Day's closed structure on the category $\psh(\cat O)$ (see Section \ref{sec:monoidal}), 
gives a symmetric closed monoidal category $(\psh  (\cat O), \otimes,  \multimap, J)$.
Recall that the exponentials are given by 
\begin{equation}
    (F \multimap G)(c) = \int_{x \in \cat C} \mathbf{Set}(F(x),G(c \otimes x))
\end{equation}
for every $c \in \cat C$. 
Specialising this formula to $F= \cat C(-,a)$, we obtain $  (\cat C(-,a) \multimap G)(c) = G(c \otimes a)$. 
Specialising in addition to $G=\cat C (-,b)$ we get 
\begin{equation}
    \label{eq:multi}
     (\cat C(-,a) \multimap \cat C(-,b))(c) = \cat C(c \otimes a,b). 
\end{equation}

\begin{theorem}
    \label{thm:semi-linear}
    Let $O$ be a semiclosed operad and let $\cat O$ be its associated \emph{PROP}. 
    Then $\cat O(-,1)$ is a linear reflexive object in the closed monoidal category $\psh (\cat O)$.
\end{theorem}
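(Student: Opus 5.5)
Write $y:=\cat O(-,1)$. The plan is to compute $y\multimap y$ explicitly via \eqref{eq:multi}, and then obtain $r$ and $s$ as natural transformations built from the semiclosed structure. By \eqref{eq:multi}, for every $c\in\cat O$,
\[
(y\multimap y)(c)\cong\cat O(c\otimes 1,1)=\cat O(c+1,1)=O(c+1),
\]
where the last step uses the description of $\cat O(n,1)$ from the definition of the PROP: for $m=1$ the coproduct reduces to $O(n)\times_{S_n}S_n\cong O(n)$. Similarly, $y(c)=\cat O(c,1)=O(c)$. Under these identifications, the presheaf action of a morphism $g\in\cat O(c',c)$ on $y$ is operadic precomposition $f\mapsto f\circ g$. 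On $y\multimap y$ it is precomposition with $g\otimes\id_1$, that is, $f\mapsto f\circ(g\otimes 1)$, with the identity $1\in O(1)$ in the last slot.

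We define components $r_c:y(c)=O(c)\to O(c+1)\cong(y\multimap y)(c)$ and $s_c:O(c+1)\to O(c)$ to be the semiclosed maps $r$ and $s$ at level $c$. Then $r\circ s=\id$ holds componentwise by the semiclosed axiom, so $y\multimap y$ is a retract of $y$ as soon as $r$ and $s$ are natural transformations. Equivalently, by Yoneda, $r$ corresponds to the element $r(1)=\app\in O(2)=(y\multimap y)(1)$, and $s$ corresponds to an element of $(y\multimap y\Rightarrow y)$ in $\psh(\cat O)$.

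The main obstacle is naturality with respect to all morphisms of $\cat O$. A general morphism $g\in\cat O(c',c)$ is, up to the symmetric group actions, a tuple $(g_1,\dots,g_c)$ with $g_i\in O(k_i)$ followed by a permutation in $S_{c'}$. I will therefore check naturality separately on permutations and on tuples of operations.

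For tuples, naturality of $r$ says $r(f\circ(g_1,\dots,g_c))=r(f)\circ(g_1,\dots,g_c,1)$, which is exactly \eqref{eq:comp1}. Naturality of $s$ says $s(h\circ(g_1,\dots,g_c,1))=s(h)\circ(g_1,\dots,g_c)$, which is exactly \eqref{eq:comp2}, including the abuse of notation defining the extra slot by $1$.

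For permutations $\sigma\in S_c$, the action on $(y\multimap y)(c)=O(c+1)$ is by the permutation $\sigma\oplus\id_1\in S_{c+1}$. Compatibility with $r$ and $s$ is precisely the requirement that they be ``natural in $n$'' in the definition of semiclosed operad, read as equivariance. One must also check that the identification $\cat O(c\otimes 1,1)\cong O(c+1)$ from \eqref{eq:multi} transports the precomposition action as described above. This is Lemma~\ref{lem:endo} together with the Yoneda lemma, since $y\otimes y\cong\cat O(-,2)$ and $\cat O(c+1,1)=O(c+1)$.

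Since every morphism of $\cat O$ factors as a permutation followed by a tensor of operations, these two checks give full naturality. Hence $r:y\to y\multimap y$ and $s:y\multimap y\to y$ are morphisms of $\psh(\cat O)$ with $r\circ s=\id$, so $y$ is linear reflexive.
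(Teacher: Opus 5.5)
Your proposal is correct and follows the paper's route: identify $(\cat O(-,1)\multimap\cat O(-,1))(n)\cong\cat O(n+1,1)=O(n+1)$ via \eqref{eq:multi} and Yoneda, then take the semiclosed retraction $O(n+1)\triangleleft O(n)$ levelwise. Your explicit check that $r$ and $s$ are natural transformations of presheaves fills in a step the paper's proof leaves implicit: tuples of operations are handled by \eqref{eq:comp1} and \eqref{eq:comp2}, and permutations by the $S_n$-equivariance encoded in ``natural in $n$''.
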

\begin{proof}
    Applying \eqref{eq:multi} to $\cat C = \cat O$, $a=b=1$ and $c=n$ we obtain: 
    \begin{align*}
        (\cat O(-,1) \multimap \cat O(-,1))(n) & = \int_{x } \mathbf{Set} (\cat O(x,1), \cat O (x+n,1)) \\
        & \simeq \psh (\cat O) (\cat O(-,1), \cat O (-+n,1)) \\
        & \simeq \cat O(n+1,1). 
    \end{align*}
As $O$ is semiclosed, for every $n \in \mathbb N$, $(\cat O(-,1) \multimap \cat O(-,1))(n) = O(n+1) \triangleleft O(n) = \cat O (n,1)$. 
\end{proof}

\begin{remark}
    \label{rem:linear-semi}
    Let $ (\cat C, \otimes, \multimap, I)$ be a symmetric closed monoidal category. 
Let $x \in \cat C$ be a linear reflexive object. 
    Then the endomorphism operad (see Example \ref{ex:endo}) $O$ on $x$ is semiclosed: 
\begin{equation*}
    O(n+1)  =  \cat C(x^{\otimes n} \otimes x, x) 
     \simeq  \cat C(x^{\otimes n} , x \multimap x) 
     \triangleleft \cat C(x^{\otimes n}, x) 
    = O(n).
\end{equation*}
\end{remark}

Putting Lemma \ref{lem:endo}, Remark \ref{rem:linear-semi} and Theorem \ref{thm:semi-linear} together, we obtain the following. 
\begin{theorem} [cf. {\cite[Theorem 3.10]{H17}}]
    \label{thm:scott}
    Let $O$ be an operad, with associated \emph{PROP} $\cat O$. 
    Then $O$ is isomorphic to the endomorphism operad of $\cat O(-,1)$ in the monoidal closed category $\psh  (\cat O)$.
    Moreover, $O$ is semiclosed iff $\cat O(-,1)$ is linear reflexive in $\psh  (\cat O)$.
\end{theorem}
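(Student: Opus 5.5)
The statement has two parts: the isomorphism of operads, and the equivalence between semiclosedness of $O$ and linear reflexivity of $\cat O(-,1)$. I will treat them in that order.

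\emph{First claim.} This is exactly Lemma \ref{lem:endo}. I would recall its content: the Yoneda embedding $y:\cat O \to \psh(\cat O)$ is strong monoidal for Day convolution, so $y(1)^{\otimes n} \simeq y(n)$. By Yoneda, $\psh(\cat O)(y(1)^{\otimes n}, y(1)) \simeq \cat O(n,1)$. From the definition of the associated PROP, $\cat O(n,1) = O(n)\times_{S_n} S_n \simeq O(n)$. These bijections respect identities, composition and the symmetric actions, because composition in the endomorphism operad is computed by tensoring and composing maps between representables. The functor $y$ does exactly this, being strong monoidal and symmetric.

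\emph{Second claim, ($\Rightarrow$).} This is Theorem \ref{thm:semi-linear}, with one point to make precise: the levelwise retractions $O(n+1)\triangleleft O(n)$ must assemble into morphisms of presheaves
$$r:\cat O(-,1)\to \cat O(-,1)\multimap\cat O(-,1), \qquad s:\cat O(-,1)\multimap\cat O(-,1)\to\cat O(-,1).$$
Via \eqref{eq:multi}, the domain and codomain evaluate at $c$ to $\cat O(c,1)$ and $\cat O(c+1,1)$. The presheaf action is precomposition by maps $c'\to c$ in $\cat O$, and such maps are built from tuples of operations and permutations. So I need two facts:
\begin{itemize}
\item Naturality of $r,s$ in $n$ gives compatibility with permutations.
\item The compatibility squares \eqref{eq:comp1} and \eqref{eq:comp2} give compatibility with substitution $f\mapsto f(g_1,\dots,g_n)$, where the extra last variable is fed $1$.
\end{itemize}
Together these give compatibility with precomposition by arbitrary morphisms of $\cat O$. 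Then $r\circ s=\id$ holds levelwise, hence as presheaf maps.

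\emph{Second claim, ($\Leftarrow$).} If $x=\cat O(-,1)$ is linear reflexive, Remark \ref{rem:linear-semi} makes its endomorphism operad semiclosed. Semiclosedness transfers along the operad isomorphism of the first part, so $O$ is semiclosed. I would check that the retraction in Remark \ref{rem:linear-semi} is given by postcomposition with $r$ and $s$ under the closure adjunction. It is then natural in $n$ for permutations, and it satisfies \eqref{eq:comp1} and \eqref{eq:comp2}: substituting $g_i$ in the first $n$ slots corresponds to precomposing with $g_1\otimes\cdots\otimes g_n\otimes \id_x$, and the adjunction isomorphism is natural in the first variable.

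The main obstacle is the naturality verification in the ($\Rightarrow$) direction. Semiclosedness is phrased with compositions that place $1$ in the last slot, and one must check that this matches precomposition with $f\otimes\id_1$ in $\cat O$ under the identification \eqref{eq:multi}. I would handle it by describing a general morphism $c'\to c$ of $\cat O$ as a permutation followed by a tuple of operations, and treating these two cases separately.
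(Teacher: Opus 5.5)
Your proposal is correct and follows the paper's own argument. The paper obtains the theorem by combining Lemma~\ref{lem:endo} for the first claim, Theorem~\ref{thm:semi-linear} for the forward direction, and Remark~\ref{rem:linear-semi} transported along the isomorphism of Lemma~\ref{lem:endo} for the converse. Your extra checks correctly supply details the paper leaves implicit: naturality in $c$ of the presheaf maps, via $S_n$-equivariance and \eqref{eq:comp1}, \eqref{eq:comp2}, after factoring a morphism of the PROP as a permutation followed by a tuple of operations; and the semiclosed axioms for the retraction of Remark~\ref{rem:linear-semi}, via naturality of currying in the first variable.
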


Thus semiclosed operads correspond to reflexive objects in closed monoidal categories. 
Since semiclosed operads are one of the possible equivalent ways to describe a model of the linear $\lambda$-calculus (see the next section), 
the last result is Scott's representation theorem for the linear $\lambda$-calculus.

\section{A triple equivalence}

We have seen that the algebras for the operad $L$ can be equivalently described as a subvariety of $\mathbf{BCI}$-algebras. 
In this section we prove that these two categories are in fact equivalent to the category of semiclosed operads. 
The argument proceeds as follows: 
\begin{itemize}
    \item we define a functor from semiclosed operads to $\alg(L)$;
    \item we define a functor from $\mathbf{LinAlg}$ to the catgeory of semiclosed operads; 
    \item we prove that they are inverses modulo the equivalence of Proposition \ref{prop:equivalence}.  
\end{itemize}

Firstly, we associate a $L$-algebra to any semiclosed operad $C$. 
Then $C(0)$ is a $C$-algebra with action 
    $\cdot : C(n) \times C(0)^n \to C(0)$ given by $\circ: C(n) \times C(0)^n \to C(0)\text{.}$
Since $L$ is the initial semiclosed operad by Proposition \ref{prop:initial}, there is a unique morphism of operads $\iota: L \to C$. 
We obtain an action $\cdot : L(n) \times C(0)^n \to C(0)$ by 
\begin{equation*}
    L(n) \times C(0)^n \xrightarrow{\iota_n \times \id} C(n) \times C(0)^n \xrightarrow{\cdot} C(0)\text{.}
\end{equation*}

\begin{proposition}
    The map $C \mapsto C(0)$ is a functor from the category of semiclosed operads to the category of $L$-algebras. 
\end{proposition}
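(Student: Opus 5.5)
We prove this in two parts: first that $C(0)$ with the action above is an $L$-algebra (the object part), then that a morphism of semiclosed operads $\phi: C \to C'$ restricts to a morphism of $L$-algebras $\phi_0 : C(0) \to C'(0)$, functorially.

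For the object part, we first observe that for any operad $C$, the set $C(0)$ with action given by composition $C(n) \times C(0)^n \to C(0)$ is a $C$-algebra. Conditions (1)--(3) of Definition \ref{def:algebra} are exactly instances of the operad axioms of Definition \ref{def:operad}: associativity (3) with all $h$'s nullary gives condition (1), the unit axiom (2) gives $1 \cdot a = a$, and equivariance (4) with nullary $g_i$ (so $\hat\sigma$ is the trivial permutation of the empty set) gives condition (3). We then use the general fact that restriction of scalars along an operad morphism preserves algebras. If $\iota: L \to C$ is a morphism of operads, then pulling back the action along $\iota_n \times \id$ yields an $L$-algebra, because $\iota$ preserves composition, identity and the symmetric action. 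Concretely, we check
\[
\iota(M(N_1,\ldots,N_n)) \cdot \vec a = \iota(M)(\iota(N_1),\ldots,\iota(N_n)) \cdot \vec a = \iota(M)\cdot(\iota(N_1)\cdot \vec a^1,\ldots),
\]
and similarly for the unit and the permutations. Here $\iota$ is the unique morphism given by Proposition \ref{prop:initial}.

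For the morphism part, let $\phi: C \to C'$ be a morphism of semiclosed operads, and take $\phi_0 : C(0)\to C'(0)$. By uniqueness in Proposition \ref{prop:initial}, $\phi \circ \iota_C$ is a morphism of semiclosed operads $L \to C'$: it is a composite of operad morphisms commuting with $r$ and $s$. Hence $\phi\circ\iota_C = \iota_{C'}$. Then for $M \in L(n)$ and $a_i \in C(0)$,
\[
\phi_0(\iota_C(M)(a_1,\ldots,a_n)) = \phi(\iota_C(M))(\phi_0 a_1,\ldots,\phi_0 a_n) = \iota_{C'}(M)(\phi_0 a_1,\ldots,\phi_0 a_n),
\]
using that $\phi$ preserves composition. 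This is exactly the statement that $\phi_0$ is an $L$-algebra morphism. Preservation of identities and composites is immediate, since $(\psi\circ\phi)_0 = \psi_0\circ\phi_0$ and $(\id_C)_0 = \id$.

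The only step requiring care is the uniqueness argument $\phi\circ\iota_C = \iota_{C'}$, which is where the semiclosed structure of $\phi$ is essential. Without it, a mere operad morphism need not intertwine the two $L$-actions, since the image of $\lambda$-abstraction is $s$, which must be preserved. Everything else consists of routine verification of operad axioms restricted to arity zero.
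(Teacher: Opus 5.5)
Your proof is correct and takes the same approach as the paper. The paper's commutative diagram for $\phi_0$ has a left square that commutes because $\phi\circ\iota_C=\iota_D$ by the initiality of $L$ (you make this explicit via uniqueness), and a right square that expresses $\phi$ preserving composition; you additionally spell out the object part (restriction of scalars along $\iota$) and functoriality, which the paper leaves implicit.
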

\begin{proof}
    If $\phi: C \to D$ is a morphism of semiclosed operads, then, as the following diagram is commutative,
    \begin{equation*}
        \begin{tikzcd}
            L(n) \times C(0)^n \arrow[r, "\iota \times \id"] \arrow[d, "\id \times \phi^n"] & C(n)\times C(0)^n \arrow[r, "\circ"] \arrow[d, "\phi \times \phi^n"] & C(0) \arrow[d, "\phi"] \\
            L(n) \times D(0)^n \arrow[r, "\iota \times \id"]  & D(n)\times D(0)^n \arrow[r, "\circ"]  & D(0) 
        \end{tikzcd}
    \end{equation*}
    the map $\phi_0 : C(0) \to D(0)$ is a morphism of $L$-algebras. 
\end{proof}

We now associate to every linear $\lambda$-algebra $A$ a semiclosed operad $O_A$. 
For every $n \in \mathbb{N}$, let $O_A(n):=\{a \in A : \mathbf{1}_n a = a\}$ where by abuse we employ the notation 
    $\mathbf{1}_n := \lambda^* x x_1 \ldots x_n. x x_1 \cdots x_n \in A$
to denote the interpretation of the combinator $\mathbf{1}_n$ in $A$. 
Observe that if $n=0$, then $O_A(0)=\{a \in A : \ii a =a\}=A$. 
We endow the sequence $\{O_A(n) : n \in \mathbb{N}\}$ with the following structure: 
\begin{itemize}
    \item we define the identity to be $\ii \in O_A(1)$, recalling that in a linear $\lambda$-algebra $1 \ii = \ii$ holds; 
    \item we define a composition $\circ: O_A(n) \times O_A(k_1) \times \cdots \times O_A(k_n) \to O_A(k_1 + \cdots + k_n)$ by 
    $$a \circ (a_1, \ldots, a_n) := \lambda^* x^1_1\ldots x^n_{k_n}. a (a_1x^1_1\cdots x^1_{k_1}) \cdots (a_n x^n_1 \cdots x^n_{k_n}) \text{;}$$
    \item for every $a \in A$ and $\sigma \in S_n$, we define 
    $a \sigma := \lambda^* x_1 \ldots x_n . a x_{\sigma 1} \cdots x_{\sigma n} \text{.}$
\end{itemize}

First of all observe that the composition and the action of $S_n$ are well-defined as $\mathbf{1}_n(a \sigma) = a \sigma$ and 
$\mathbf{1}_{k_1 + \cdots + k_n}(a\circ(a_1, \ldots, a_n))=a \circ (a_1, \ldots, a_n)\text{.}$

\begin{lemma}
    \label{lem:oa-operad}
    The above structure makes the sequence of sets $O_A:=\{O_A(n) : n \in \mathbb{N}\}$ an operad. 
    Moreover, the operad $O_A$ is semiclosed. 
\end{lemma}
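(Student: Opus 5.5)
The strategy is to move every verification into the linear $\lambda$-calculus. Since $A$ is a linear $\lambda$-algebra, Remark~\ref{rem:smart} gives a well-defined $\lambda^*$ on $A_{lin}[x_1,\ldots,x_n,x]$. By Proposition~\ref{prop:equivalence}, $A$ is also an $L$-algebra with action $M\cdot(a_1,\ldots,a_n)=\llbracket M\rrbracket[a_1/x_1,\ldots,a_n/x_n]$. The key consequence is that any identity between closed expressions built from elements of $A$ by application and $\lambda^*$ holds in $A$ as soon as the corresponding linear $\lambda$-terms, with the $a_i$ treated as free variables, are $\beta$-equal. This is exactly condition~\eqref{eq:def-linear} applied after substitution.

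The first thing to establish is a characterisation of $O_A(n)$: an element $a$ lies in $O_A(n)$ if and only if $a=\lambda^*x_1\ldots x_n.t$ for some $t\in A_{lin}[x_1,\ldots,x_n]$.
\begin{itemize}
\item For one direction, note that $\mathbf{1}_n(\lambda^*\vec x.t)=_\beta\lambda^*\vec x.t$, since $\lambda y\vec x.y\vec x$ applied to $\lambda\vec x.M$ $\beta$-reduces to $\lambda\vec x.M$. This is $\beta$ only, not $\eta$.
\item For the converse, $\mathbf{1}_na=\lambda^*\vec x.a\vec x$ by $\beta$.
\end{itemize}
This shows that composition and the $S_n$-action land in the right $O_A(\cdot)$, as the paper already observes. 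It also gives the useful normal form $a=\lambda^*\vec x.a\vec x$ for every $a\in O_A(n)$.

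With this normal form, each operad axiom of Definition~\ref{def:operad} reduces to a $\beta$-equality of linear terms in free variables $a,a_i,h^i_j$.
\begin{itemize}
\item Axiom (1) reads $a\circ(\ii,\ldots,\ii)=\lambda^*\vec x.a(\ii x_1)\cdots(\ii x_n)=\lambda^*\vec x.a\vec x=a$, using the normal form.
\item Axiom (2) reads $\ii\circ(a)=\lambda^*\vec x.\ii(a\vec x)=a$.
\item Associativity (3) follows because both sides equal $\lambda^*\vec y.\,a(a_1(h\vec y)\cdots)\cdots$ after $\beta$-reducing the inner abstractions $\lambda^*$ that are applied to variables.
\item The equivariance axioms (4) and (5) are similar reindexing computations; the action $a\sigma$ is a right action because $(a\sigma)\tau=\lambda^*\vec x.(\lambda^*\vec y.a y_{\sigma1}\cdots)x_{\tau1}\cdots$.
\end{itemize}
The one point requiring care is the convention on $\sigma$ versus $\sigma^{-1}$ in axiom (4). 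I expect the index bookkeeping there to be the main (if tedious) obstacle, and I would fix the conventions by checking the case $n=2$ first.

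For semiclosedness, define $r:O_A(n)\to O_A(n+1)$ by $r(a)=\lambda^*x_1\ldots x_{n+1}.a x_1\cdots x_n x_{n+1}=\mathbf{1}_{n+1}a$. Define $s:O_A(n+1)\to O_A(n)$ by $s(a)=\lambda^*x_1\ldots x_n.a x_1\cdots x_n=\mathbf{1}_na$.
\begin{itemize}
\item Both maps land correctly, since $\lambda^*\vec x.\lambda^*y.ay\cdots$ is again of normal form shape. Concretely, $s(a)\in O_A(n)$ by the characterisation above. In fact, for $a\in O_A(n+1)$ one has $s(a)=a$ as an element of $A$, and similarly $r(a)=\mathbf{1}_{n+1}a$. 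Thus $r\circ s=\id$ amounts to $\mathbf{1}_{n+1}\mathbf{1}_na=\mathbf{1}_{n+1}a=a$ for $a\in O_A(n+1)$, which is a $\beta$-computation.
\item Naturality in $n$ (compatibility with $\sigma\in S_n$, extended by fixing the last variable) is again a $\beta$-equality.
\item For the compatibility square~\eqref{eq:comp1}, both paths give $\lambda^*\vec y z.a(a_1\vec y^1)\cdots(a_n\vec y^n)z$.
\item For~\eqref{eq:comp2}, both paths give $\lambda^*\vec y.\,\lambda^*z.\,a(a_1\vec y^1)\cdots(a_n\vec y^n)z$, with the extra slot filled by $\ii$.
\end{itemize}
Both squares therefore reduce to the same kind of $\beta$-equality as the operad axioms, which is conceptually why $O_A$ inherits semiclosedness from $L$.
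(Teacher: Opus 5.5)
\textbf{Overall.} Your route is the paper's: every check becomes a $\beta$-computation with $\lambda^*$, justified by soundness of $\llbracket-\rrbracket$. The normal form $a=\mathbf{1}_na=\lambda^*\vec x.a\vec x$ handles axioms (1)--(3). The semiclosed structure is $r=\mathbf{1}_{n+1}(-)$ with $s$ the inclusion $O_A(n+1)\subseteq O_A(n)$; your $s=\mathbf{1}_n(-)$ is that inclusion, as you note. Your characterisation $O_A(n)=\{\lambda^*x_1\ldots x_n.t\}$ is a tidy addition that the paper leaves implicit. It gives $\ii\in O_A(1)$, closure under composition, and both compatibility squares at once. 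All of that is correct.

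\textbf{The gap: equivariance.} You flag this step and defer it, but with the given definition $a\sigma:=\lambda^*\vec x.a\,x_{\sigma1}\cdots x_{\sigma n}$ it actually fails.
\begin{itemize}
\item By $\beta$, $(a\sigma)u_1\cdots u_n=a\,u_{\sigma1}\cdots u_{\sigma n}$.
\item Take all $g_i\in O_A(0)$, so that $\hat\sigma$ is trivial. Axiom (4) then demands $a\,g_{\sigma1}\cdots g_{\sigma n}=a\,g_{\sigma^{-1}1}\cdots g_{\sigma^{-1}n}$.
\item This fails for a $3$-cycle $\sigma$ in the closed term model $L(0)$. Take, e.g., $a=\lambda xyz.\lambda p.pxyz$ and $g_1,g_2,g_3=\mathbf I,\mathbf B,\mathbf C$; the two sides have distinct $\beta$-normal forms.
\item Likewise, finishing your right-action computation gives $(a\sigma)\tau=a(\tau\circ\sigma)$. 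The action on $L$, by contrast, satisfies $(M\sigma)\tau=M(\sigma\circ\tau)$, so the two cannot both be right actions under one composition convention.
\end{itemize}

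\textbf{The repair.} Define $a\sigma:=\lambda^*\vec x.a\,x_{\sigma^{-1}1}\cdots x_{\sigma^{-1}n}$. This is also what the correspondence $M\mapsto\lambda\vec x.M$ with $L$ forces. With it, (4) and (5) really are the $\beta$-reindexings you intended, with $\hat\sigma$ moving the block of $g_i$ to slot $\sigma(i)$.

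\textbf{Your proposed check.} Settling conventions at $n=2$ cannot detect this, since $S_2$ is abelian and every element is its own inverse. A $3$-cycle is the first informative case.

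The paper's appendix passes over the same point with ``the others are similar''. So the slip lies in the definition you inherited rather than in your method, but as written that step of your proof does not hold.
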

\begin{proof}
    The verification of the conditions of Definition \ref{def:operad} is addressed in Appendix \ref{app:proofs}. 
    To show that $O_A$ is semiclosed, we define two maps $s: O_A(n+1) \to O_A(n)$ and $r: O_A(n) \to O_A(n+1)$. 
    Since $\mathbf{1}_{n}(\mathbf{1}_{n+1}a)=a$ for every $a \in O_A(n+1)$, we can define $s$ as the inclusion $O_A(n+1) \subseteq O_A(n)$. 
    Moreover, we define $r(a):=\mathbf{1}_{n+1} a$ for every $a \in O_A(n)$. 
    The map $r$ is well-defined as $\mathbf{1}_{n+1}$ is idempotent. 
    Moreover, $r \circ s= \id$ since $r(s(a))=r(a)=\mathbf{1}_{n+1} a=a$ for every $a \in O_A(n+1)$. 
    Finally, the verification of conditions \eqref{eq:comp1} and \eqref{eq:comp2} is straightforward. 
\end{proof}

\begin{proposition}
    The map $A \mapsto O_A$ is a functor from the category of linear $\lambda$-algebras to the category of semiclosed operads. 
\end{proposition}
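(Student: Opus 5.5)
We need to define, for a homomorphism $f: A \to B$ of linear $\lambda$-algebras, a morphism of semiclosed operads $O_f: O_A \to O_B$, and check functoriality. The natural choice is to take $(O_f)_n$ to be the restriction of $f$ to $O_A(n)$. Functoriality is then immediate: restriction of $\id_A$ is the identity and restriction commutes with composition of maps. So the substance is that $O_f$ is a well-defined morphism of semiclosed operads.

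First, well-definedness. The element $\mathbf{1}_n = \lambda^* x x_1 \ldots x_n. x x_1 \cdots x_n$ is the interpretation of a closed linear combinator. By Remark \ref{rem:smart}, $f(\lambda^* x.t) = \lambda^* x. f(t)$, and $f$ fixes $\bb,\cc,\ii$. Hence $f(\mathbf{1}_n^A) = \mathbf{1}_n^B$. Then for $a \in O_A(n)$ we have $\mathbf{1}_n f(a) = f(\mathbf{1}_n a) = f(a)$, so $f(a) \in O_B(n)$.

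Next, $O_f$ preserves the operad structure. The identity is preserved since $f(\ii)=\ii$. For composition and the symmetric action, both sides are given by $\lambda^*$-abstractions of linear polynomials, namely
\[
a \circ (a_1, \ldots, a_n) = \lambda^* \vec{x}.\, a (a_1 \vec{x}^1) \cdots (a_n \vec{x}^n), \qquad a\sigma = \lambda^* x_1 \ldots x_n.\, a x_{\sigma 1} \cdots x_{\sigma n}.
\]
The induced map $f: A_{lin}[\vec x] \to B_{lin}[\vec x]$ preserves application, constants and variables. Applying Remark \ref{rem:smart} repeatedly, once per abstracted variable and by induction on their number, gives
\[
f(a \circ (a_1,\ldots,a_n)) = f(a) \circ (f(a_1), \ldots, f(a_n)), \qquad f(a\sigma) = f(a)\sigma .
\]

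Finally, $O_f$ preserves the semiclosed structure. The section $s$ is an inclusion, so $f \circ s = s \circ f$ trivially. For $r(a) = \mathbf{1}_{n+1} a$ we get $f(r(a)) = f(\mathbf{1}_{n+1}) f(a) = \mathbf{1}_{n+1} f(a) = r(f(a))$, again using $f(\mathbf{1}_{n+1}^A)=\mathbf{1}_{n+1}^B$.

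The only point needing care is that Remark \ref{rem:smart} applies at each stage of the iterated abstraction. This requires that $\lambda^*$ on $A_{lin}[x_1,\ldots,x_n,x]$ commutes with the extension of $f$ to polynomial algebras. That follows by a direct induction on the defining clauses of $\lambda^*$: each clause uses only $\ii$, $\bb$, $\cc$ and application, all of which $f$ preserves, and $f$ does not change which variables occur in a term, so the same case of the definition applies on both sides. I expect no real obstacle beyond this bookkeeping.
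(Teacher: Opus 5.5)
Your proposal is correct and follows the paper's proof essentially step by step. You restrict $f$ levelwise, use $\mathbf{1}_n f(a) = f(\mathbf{1}_n a) = f(a)$ for well-definedness and for compatibility with $r$, note that $s$ is an inclusion, and push $f$ through the iterated $\lambda^*$-abstraction via Remark~\ref{rem:smart} for composition. If anything you are slightly more complete than the paper, which does not explicitly check preservation of the identity and of the symmetric action, nor functoriality of $f \mapsto O_f$.
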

\begin{proof}
    Let $f:A \to B$ be a homomorphism of linear $\lambda$-algebras. 
    We define a map $\phi: O_A \to O_B$ by letting, for each $n \in \mathbb{N}$, $\phi_n : O_A(n) \to O_B(n)$ be given by $\phi_n(a) = f(a)$. 
    We show that $\phi$ is a morphism of semiclosed operads. 
    First of all, $\phi_n$ is a well-defined function as $\mathbf{1}_{n} f(a)=f(\mathbf{1}_n a)=f(a)$. 
    Moreover,  
    for every $a \in O_A(n)$ and $a_i \in O_A(k_i)$ for $i=1, \ldots, n$: 
    \begin{align*}
        \phi_n (a(a_1, \ldots, a_n)) & = \phi_n(\lambda^* \vec{x}. a (a_1 \vec{x}_1) \cdots (a_n \vec{x}_n)) & \text{definition of $\circ$} \\
        & = f(\lambda^* \vec{x}. a (a_1 \vec{x}_1) \cdots (a_n \vec{x}_n)) & \text{definition of $\phi$} \\
        & = \lambda^* \vec{x}. f(a) (f(a_1)\vec{x}_1) \cdots (f(a_n) \vec{x}_n) & \text{$f$ is homomorphism} \\
        & = \lambda^* \vec{x}. \phi_n(a) (\phi_{k_1}(a_1)\vec{x}_1) \cdots (\phi_{k_n}(a_n) \vec{x}_n) & \text{definition of $\phi$} \\
        & =  \phi_n (a) (\phi_{k_1} (a_1), \ldots, \phi_{k_n} (a_n)) & \text{definition of $\circ$.} 
    \end{align*}
   Finally, the map $\phi$ respects the semiclosed structure. 
   Let $r_A, s_A$ be the retraction and section of $O_A$ and $r_B, s_B$ be the retraction and section of $O_B$, respectively. 
   Then, for every $a \in O_A(n)$, we see that 
    $\phi(r_A(a))=f(\mathbf{1}_{n+1}a)=\mathbf{1}_{n+1}f(a)=r_B(\phi(a))$
   and similarly we see that $\phi \circ s_A = s_B \circ \phi$. 
\end{proof}

\begin{proposition}
    \label{prop:oaisof}
     Let $A$ be a linear $\lambda$-algebra. Then $O_A(0) \simeq F(A)$ as $L$-algebras. 
\end{proposition}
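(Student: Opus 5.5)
Both sides have underlying set $A$, since $O_A(0)=\{a : \ii a = a\}=A$. I will show that the identity map $A\to A$ is an isomorphism of $L$-algebras, i.e. that the two $L$-actions agree. On $O_A(0)$ the action is $M\cdot(a_1,\dots,a_n)=\iota(M)\circ(a_1,\dots,a_n)$, where $\iota:L\to O_A$ is the unique morphism of semiclosed operads given by Proposition~\ref{prop:initial}. On $F(A)$ it is $\llbracket M\rrbracket[a_1/x_1,\dots,a_n/x_n]$. Since both actions are $L$-algebra structures, and by Proposition~\ref{prop:ff} a map between $L$-algebras is a morphism as soon as it preserves application and the constants $\bb,\cc,\ii$, one option is simply to compare these four pieces of data. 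I will nevertheless aim for the stronger pointwise claim
\[
\iota(M) = \lambda^* x_1\ldots x_n.\llbracket M\rrbracket \quad \text{in } O_A(n),
\]
and prove it by induction on $M$, following the inductive definition of $\iota$. From it the proposition follows at once: by definition of $\circ$ in $O_A$ with all $k_i=0$, we get $\iota(M)\circ(a_1,\dots,a_n)=(\lambda^*\vec x.\llbracket M\rrbracket)a_1\cdots a_n$, and by the $\beta$-rule for $\lambda^*$ this equals $\llbracket M\rrbracket[\vec a/\vec x]$.

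The induction runs as follows. For $M=x$ we have $\iota(x)=\ii$, the identity of $O_A$, and $\lambda^*x.x=\ii$. For $M=\lambda x_{n+1}.N$ we have $\iota(M)=s(\iota(N))$; since $s$ is an inclusion, this equals $\lambda^*x_1\ldots x_{n+1}.\llbracket N\rrbracket=\lambda^*x_1\ldots x_n.\llbracket \lambda x_{n+1}.N\rrbracket$, using $\llbracket\lambda x.N\rrbracket=\lambda^*x.\llbracket N\rrbracket$. For $M=PQ$ we have $\iota(PQ)=\app(\iota P,\iota Q)$ with $\app=r(\ii)=\mathbf 1_2\ii$. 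By the definition of composition in $O_A$ this is $\lambda^*\vec y\vec z.\,\app\,(\iota(P)\vec y)(\iota(Q)\vec z)$. I first check $\app\,uv=uv$ in $A$, since $\mathbf 1_2\ii uv=\ii uv=uv$. The induction hypothesis gives $\iota(P)\vec y=\llbracket P\rrbracket$ and $\iota(Q)\vec z=\llbracket Q\rrbracket$ in $A_{lin}[\vec y,\vec z]$, and Remark~\ref{rem:smart} (well-definedness of $\lambda^*$ on equal polynomials) then yields $\lambda^*\vec y\vec z.\llbracket P\rrbracket\llbracket Q\rrbracket$. The general case, where the variables of $P$ and $Q$ are interleaved, reduces to this one through the symmetric action, which is itself defined by a $\lambda^*$-expression.

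The main obstacle is bookkeeping in $A_{lin}[\vec x]$. The identities I need, such as $(\lambda^*\vec x.t)\vec x=t$ and the extensionality fact $\lambda^*\vec x.(\lambda^*\vec x.t)\vec x=\lambda^*\vec x.t$, must hold as equalities of polynomials and not merely pointwise on $A$. The first follows from the $\beta$-rule for $\lambda^*$; the second follows from Lemma~\ref{lem:abstr} applied to the first. Independence of the chosen representative in the $\beta$-class of $M$ is not a concern here, because both sides are already known to be well defined (condition~\eqref{eq:def-linear} and Proposition~\ref{prop:initial}).
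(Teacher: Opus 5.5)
Your proof is correct. It shares the paper's skeleton: induction on $M$ along the definition of $\iota$, using that $s$ is an inclusion, that $\app=r(\ii)$ acts as application, the $\beta$-rule for $\lambda^*$, and Lemma~\ref{lem:abstr}. The difference lies in the induction invariant.

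The paper proves the pointwise statement $\iota(M)\circ(a_1,\ldots,a_n)=\llbracket M\rrbracket[a_1/x_1,\ldots,a_n/x_n]$ for $a_i\in A$. You prove $\iota(M)=\lambda^*\vec x.\llbracket M\rrbracket$ instead. Since $\mathbf 1_n\iota(M)=\iota(M)$, this amounts to the polynomial identity $\iota(M)\vec x=\llbracket M\rrbracket$ in $A_{lin}[\vec x]$. That is a stronger induction hypothesis, because agreement for all values in $A$ does not in general give agreement in $A_{lin}[\vec x]$, as the paper itself remarks after Lemma~\ref{lem:abstr}.

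With your invariant the abstraction case is immediate. The work moves to the application case, where you pass to polynomials via $\beta$ and re-abstract.

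This buys rigour exactly where the paper is loose. In its abstraction case, the paper rewrites $\lambda^*x.\iota(N)a_1\cdots a_nx$ as $\lambda^*x.\llbracket N\rrbracket[\vec a/\vec x]$ ``by the inductive step''. That step needs the equation in $A_{lin}[x]$ with $x$ an indeterminate, not merely for all $x\in A$. Within the paper's formulation one could repair this by running the induction over all linear $\lambda$-algebras and instantiating at $A[x]$. Your formulation avoids that detour. The paper's version, in turn, targets the commutativity of the action diagram directly, at the cost of that hidden strengthening.

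Two small remarks on your write-up. First, the check $\app\,uv=uv$ is needed for polynomials $u,v$, not just for elements of $A$. This is fine, since the computation uses only the $\mathbf{BCI}$ axioms. Second, your treatment of interleaved contexts through the equivariance of $\iota$ covers a case the paper tacitly assumes away.
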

   \begin{proof}
    First of all observe that at the level of the underlying sets, $O_A(0)=A$. 
We show that the identity function $\id: O_A(0) \to A$ is a $L$-algebra morphism, i.e. that for every $n \in \mathbb{N}$ the following diagram of sets
\begin{equation}
    \label{dia:action}
    \begin{tikzcd}
        L(n) \times O_A(0)^n \arrow[r, "\iota \times \id"] \arrow[d, "\id"] & O_A(n)  \times O_A(0)^n \arrow[r, "\circ"] & O_A(0) \arrow[d, "\id"] \\
        L(n) \times A^n \arrow[rr, "\cdot"] & & A 
    \end{tikzcd}
\end{equation}
is commutative. 
Let $M \in L(n)$. 
We prove the claim by induction on the structure of $M$. 
If $n=1$ and $M$ is the variable $x$, then $\ii \circ a = \ii a = a = \llbracket x \rrbracket [a/x]$ as desired. 
Let $M=PQ$, with $M \in L(n)$, $P \in L(i)$, $Q \in L(j)$ and $i+j=n$. 
Observe that $\app = \mathbf{1} \in O_A(2)$ so that $\app(\iota(P), \iota(Q)) =  \lambda^* x_1 \ldots x_n . \mathbf{1} (\iota(P)x_1 \cdots x_i)(\iota(Q)x_{i+1} \cdots x_n)$. 
Therefore the top row in Diagram \ref{dia:action} becomes 
$(PQ, a_1,\ldots, a_n) \mapsto \lambda^* x_1 \ldots x_n . \mathbf{1} (\iota(P)x_1 \cdots x_i)(\iota(Q)x_{i+1} \cdots x_n)a_1 \cdots a_n\text{.}$
Consequently, 
\begin{align*}
    & \lambda^* x_1 \ldots x_n . \mathbf{1} (\iota(P)x_1 \cdots x_i)(\iota(Q)x_{i+1} \cdots x_n)a_1 \cdots a_n\\
    & \hspace*{3cm}  = \mathbf{1} (\iota(P)a_1 \cdots a_i)(\iota(Q)a_{i+1} \cdots a_n) \\
    & \hspace*{3cm}  = (\iota(P) \circ (a_1, \ldots, a_i))(\iota(Q) \circ (a_{i+1}, \ldots, a_n))\\
    & \hspace*{3cm}  = \llbracket P \rrbracket [a_1/x_1, \ldots, a_i/x_i] \llbracket Q \rrbracket [a_{i+1}/x_{i+1}, \ldots, a_n/x_n] \\
    & \hspace*{3cm}  = \llbracket PQ \rrbracket [a_1/x_1, \ldots, a_n/x_n] \text{,}
\end{align*}
where in the third equality we used the inductive assumption. 
Let $M=\lambda x.N \in L(n)$ with $x_1, \ldots, x_n, x \vdash N \in L(n+1)$. 
Then we have 
$$\iota(\lambda x. N) \circ (a_1, \ldots, a_n) = s(\iota(N)) \circ (a_1, \ldots, a_n) = \iota(N)a_1 \cdots a_n$$ 
by the definition of $s$. 
As $\iota(N) \in O_A(n+1)$, we have 
\begin{align*}
    \iota(N)a_1 \cdots a_n & = \lambda^* x. \iota(N)a_1 \cdots a_n x \\
    & = \lambda^* x. \llbracket N \rrbracket [a_1/x_1, \ldots, a_n/x_n, x/x] \\
    & = \lambda^* x. \llbracket N \rrbracket [a_1/x_1, \ldots, a_n/x_n] \\
    & = (\lambda^* x. \llbracket N \rrbracket) [a_1/x_1, \ldots, a_n/x_n] \\
    & = \llbracket \lambda x.  N \rrbracket [a_1/x_1, \ldots, a_n/x_n]
\end{align*}
where the second equality is the inductive step. 
   \end{proof}

   \begin{proposition}
    \label{prop:oaisoc}
    Let $C$ be a semiclosed operad and let $A$ be the linear $\lambda$-algebra $A:= G(C(0))$.
    Then $O_{A} \simeq C$ in the category of semiclosed operads. 
   \end{proposition}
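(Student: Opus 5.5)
We construct an explicit isomorphism $\psi: C \to O_A$ of semiclosed operads, using the semiclosed structure of $C$ to turn $n$-ary operations into constants. Write $r_C, s_C$ for the retraction and section of $C$, and $s^n: C(n) \to C(0)$ for the $n$-fold iterate of $s_C$. Define $\psi_n(f) := s^n(f) \in C(0) = A$ for $f \in C(n)$.

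The first step is to check that $\psi_n(f)$ lies in $O_A(n)$, that is, $\mathbf{1}_n\, s^n(f) = s^n(f)$ in $A$. Recall from Proposition \ref{prop:initial} that application in $A$ is $a b = \app\circ(a,b)$ with $\app = r_C(1)$. By \eqref{eq:comp1} we have $r_C(g) = \app(g,1)$. By iteration, $r^n(s^n f) = f$, and this says that $f$ is "$s^n f$ applied to $n$ variables". More precisely, for $a_1,\dots,a_n \in C(0)$,
\[ s^n(f)\, a_1 \cdots a_n = f\circ(a_1,\dots,a_n). \tag{$\ast$}\]
I would prove $(\ast)$ by induction on $n$, using \eqref{eq:comp2} to commute $s$ with substitution of the first $n$ arguments, together with $r\circ s = \id$.

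Next I compute the interpretation of $\mathbf{1}_n$ in $A$. Since $\mathbf{1}_n \in L(0)$, its element of $A$ is $\iota(\mathbf{1}_n) = s^{n+1}\iota(x x_1\cdots x_n)$. Composing it with $s^n(f)$, again by \eqref{eq:comp2} and Lemma \ref{lem:subst}, gives $s^n$ applied to $\iota(x x_1\cdots x_n)(s^n f,1,\dots,1) = r^n s^n f = f$. So $\mathbf{1}_n s^n(f) = s^n(f)$.

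For injectivity of $\psi_n$, note that $r^n s^n = \id$. For surjectivity, given $a \in O_A(n)$, put $f := r^n(a) \in C(n)$. Then $s^n r^n(a) = \mathbf{1}_n a$ by the same computation as above, since $s^n r^n$ is precisely $\lambda^* x_1\dots x_n.\, a x_1\cdots x_n = \mathbf{1}_n a$. This equals $a$. So each $\psi_n$ is a bijection with inverse $r^n$.

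Then I check that $\psi$ preserves structure:
\begin{itemize}
\item The identity: $\psi_1(1) = s(1) = \iota(\lambda x.x) = \ii$.
\item Composition: $\psi(f(g_1,\dots,g_n))$ and $\lambda^*\vec x.\, \psi(f)(\psi(g_1)\vec x_1)\cdots(\psi(g_n)\vec x_n)$ are both elements of $O_A(k_1+\cdots+k_n)$. Elements $b, b'$ of $O_A(m)$ are equal as soon as $b\vec a = b'\vec a$ for all $\vec a$ in the free extension, because $b = \mathbf{1}_m b = \lambda^*\vec x.\, b\vec x$ and Lemma \ref{lem:abstr} applies. By $(\ast)$, both sides applied to the variables equal $f(g_1,\dots,g_n)$ applied to them.
\item The symmetric action is handled the same way.
\item Semiclosed structure: $\psi$ commutes with $s$ by definition, since $s$ in $O_A$ is the inclusion. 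It commutes with $r$ because $\psi(r f) = s^{n+1} r f = s^n f$ when $f$ is in the image of $s$, and in general this is $\mathbf{1}_{n+1}\psi(f)$ by the computation above.
\end{itemize}

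I expect the main obstacle to be $(\ast)$ and its "open term" version. There one must carefully identify $\lambda^*$ in $A$ with $s_C$ and application with $\app$. I would first establish $\iota(M)\circ(a_1,\dots,a_n) = \llbracket M\rrbracket[\vec a/\vec x]$, which is exactly what Proposition \ref{prop:oaisof} gives when applied through Proposition \ref{prop:equivalence}. I would then phrase every needed identity as an instance of it with $M$ a linear term, using the $L$-algebra structure of $C(0)$, so that the verifications reduce to $\beta$-equalities in $L$.
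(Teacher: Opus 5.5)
Your maps are exactly the paper's: your $\psi_n=s^n$ is its $\phi_n$, and your inverse $r^n$ is its $\psi_n$. Your bijectivity argument via $r^ns^n=\id$ and $s^nr^na=\mathbf{1}_na$ is the paper's use of Lemma \ref{lem:last}, and your well-definedness, identity and semiclosedness checks are fine.

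The gap is in the substantive step, compatibility with composition and with the $S_n$-action (the paper's Lemma \ref{lem:tec}). Your criterion that $b,b'\in O_A(m)$ coincide once $b\vec x=b'\vec x$ in $A_{lin}[\vec x]$ is valid, but your check of its hypothesis does not go through. $(\ast)$ only concerns arguments $a_i\in C(0)$, and ``$f(g_1,\ldots,g_n)$ applied to the variables'' has no meaning, since nothing lets an element of $C(m)$ act on indeterminates of $A_{lin}[\vec x]$. Nor can you fall back on closed arguments, because a linear $\lambda$-algebra need not be extensional on its elements. In the term model ($C=L$, $A=L(0)$), both $\mathbf{I}$ and $\mathbf{1}$ lie in $O_A(1)$, and $\mathbf{I}a=_\beta\mathbf{1}a$ for every closed linear $a$ (closed $\beta$-normal forms are abstractions), yet $\mathbf{I}\neq_\beta\mathbf{1}$. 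You rightly call this ``open-term version'' the main obstacle, but you do not supply it. Your sketched remedy is also mis-described: the verification cannot reduce to $\beta$-equalities in $L$, since $f$ and the $g_i$ are arbitrary elements of $C$, not in the image of $\iota$. They can only be absorbed through \eqref{eq:comp1}, \eqref{eq:comp2} and $r\circ s=\id$ in $C$.

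The repair is to drop extensionality and redo, with a different combinator, the computation you already did for $\mathbf{1}_n$. Put $b=s^nf$, $c_i=s^{k_i}g_i$, $m=k_1+\cdots+k_n$, and $M=\lambda\vec x.\,N$ with $N=y(y_1\vec x_1)\cdots(y_n\vec x_n)$.
\begin{itemize}
\item Proposition \ref{prop:equivalence}, applied to the $L$-algebra $C(0)$, gives $\psi(f)\circ_{O_A}(\psi(g_1),\ldots,\psi(g_n))=\llbracket M\rrbracket[b/y,\vec c/\vec y]=\iota(M)\circ(b,c_1,\ldots,c_n)$.
\item Since $\iota(M)=s^m(\iota(N))$, iterating \eqref{eq:comp2} turns this into $s^m(\iota(N)\circ(b,c_1,\ldots,c_n,1,\ldots,1))$.
\item Now $\iota(yz_1\cdots z_k)=r^k(1)$ (by \eqref{eq:comp1}, as in Lemma \ref{lem:last}), and \eqref{eq:comp1} gives $r^k(1)\circ(c,1,\ldots,1)=r^k(c)$. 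So, after the reindexing permutation, the inner element is $r^n(b)\circ(r^{k_1}c_1,\ldots,r^{k_n}c_n)=f(g_1,\ldots,g_n)$.
\item Hence both sides equal $s^m(f(g_1,\ldots,g_n))$.
\end{itemize}
The $S_n$-action is handled identically with $M=\lambda\vec x.\,yx_{\sigma1}\cdots x_{\sigma n}$.

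Completed this way, your route is a direct closed-form verification. The paper instead proves Lemma \ref{lem:tec} by induction on $n$. It first treats $f(g_1,\ldots,g_n,1)$ using \eqref{eq:comp2} and the recursion $\phi_{n+1}=\phi_n\circ s_n$, then reduces the general case to that one by a transposition.
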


The proof unravels through different steps. 
We define two morphisms of semiclosed operads $\phi: C \to O_{A}$ and $\psi: O_{A} \to C$ inverses to each other. 
First, we define two functions $\phi_n: C(n) \to O_{A}(n)$ and $\psi_n: O_{A}(n) \to C(n)$ by induction on $n \in \mathbb{N}$.  
If $n=0$, $\phi_0$ and $\psi_0$ are the identity of $C(0)$.  
Since $C$ is semiclosed, there is a pair of maps $s_n : C(n+1) \to C(n)$ and $r_n : C(n) \to C(n+1)$ such that $r_n \circ s_n = \id_{C(n+1)}$. 
Assuming $n >0$, we let 
\begin{equation}
    \label{eq:def-phi-psi}
    \phi_{n+1}(f) = (\phi_n \circ s_n)(f) \quad \text{and} \quad \psi_{n+1}(a) = (r_n \circ \psi_n)(a)
\end{equation}
for all $f \in C(n+1)$ and all $a \in O_{A}(n+1)=\{a \in A : \mathbf{1}_{n+1} a = a\}$.
First of all, $\phi$ and $\psi$ are well-defined. 
Moreover, inductively, we have: 
$$\psi_{n+1} \circ \phi_{n+1} = (r_n \circ \psi_n) \circ (\phi_n \circ s_n) = r_n \circ (\psi_n \circ \phi_n) \circ s_n = r_n \circ  s_n = \id_{C(n+1)}\text{.}$$
Now, recall that by Proposition \ref{prop:initial} we have a unique morphism $\iota : L \to C$. 
A simple induction on $n \ge 1$ proves the following lemma. 
\begin{lemma}
    \label{lem:last}
    Let $M$ be a linear $\lambda$-term. Then: 
    \begin{enumerate}
        \item $\iota(Mx_1 \cdots x_n) = (r_{n-1} \circ \cdots \circ r_0)(\iota(M))$; 
        \item $\iota(\lambda x_1 \ldots x_n. M) = (s_0 \circ \ldots \circ s_{n-1})(\iota(M))$ 
    \end{enumerate}
    and, consequently, 
\begin{enumerate}
    \setcounter{enumi}{2}
    \item $\iota(\lambda x_1 \ldots x_n. M x_1 \cdots x_n) = (s_0 \circ \ldots \circ s_{n-1} \circ r_{n-1} \circ \cdots \circ r_0)(\iota(M))$. 
\end{enumerate}
\end{lemma}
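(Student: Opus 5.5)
Throughout, $M$ is a judgement $\Gamma \vdash M$ with $\Gamma$ of length $k$, so $\iota(M)\in C(k)$. The notation $\iota(Mx_1\cdots x_n)$ means the term in context $\Gamma, x_1,\ldots,x_n$, living in $C(k+n)$, and $r_j$, $s_j$ stand for $r_{k+j}$, $s_{k+j}$. I would prove items 1 and 2 by induction on $n\ge 1$ and then derive item 3 by composing them.

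\textbf{Item 1.} For the base case $n=1$, the defining clause of $\iota$ on applications gives $\iota(Mx_1)=\app(\iota(M),\iota(x_1))=\app(\iota(M),1)=r(1)(\iota(M),1)$. The compatibility condition \eqref{eq:comp1}, applied with $1\in C(1)$ and $g_1=\iota(M)$, gives $r(1)(\iota(M),1)=r(1\circ\iota(M))=r(\iota(M))$; this is exactly the computation already used in the proof of Proposition \ref{prop:initial}. For the inductive step, write $Mx_1\cdots x_{n+1}=(Mx_1\cdots x_n)x_{n+1}$ and apply the base case to the term $Mx_1\cdots x_n$, whose context now has length $k+n$. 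This gives $\iota(Mx_1\cdots x_{n+1})=r_n(\iota(Mx_1\cdots x_n))$, and the induction hypothesis finishes the step.

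\textbf{Item 2.} This is immediate from the defining clause $\iota(\lambda x.N)=s(\iota(N))$. Write $\lambda x_1\ldots x_{n+1}.M=\lambda x_1.(\lambda x_2\ldots x_{n+1}.M)$. Peeling off the outermost abstraction applies $s_0$, because the context drops from $k+1$ to $k$. The remaining abstractions then apply $s_1,\ldots,s_n$ in order from the outside in, which is the stated composite $s_0\circ\cdots\circ s_{n}$.

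\textbf{Item 3.} Apply item 2 to the term $N:=Mx_1\cdots x_n$, whose context is $\Gamma,x_1,\ldots,x_n$, and then substitute item 1 for $\iota(N)$.

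The only point needing care is the indexing: making sure each $r_j$ and $s_j$ acts on the correct arity, since each is natural in the context length. Correctness also relies on $\iota$ being well defined on $\beta$-classes, which was established before Proposition \ref{prop:initial}. Apart from this bookkeeping, I expect no real obstacle.
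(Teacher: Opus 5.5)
Your proof is correct and follows the same route as the paper, which only says that a simple induction on $n \ge 1$ suffices. Your argument carries out that induction, using $r(\iota(M))=\app(\iota(M),1)$ (from \eqref{eq:comp1}) for item 1 and the clause $\iota(\lambda x.N)=s(\iota(N))$ for item 2. One small tidy-up: in item 2 the term $M$ must live in context $\Gamma,x_1,\ldots,x_n$, so that $\iota(M)\in C(k+n)$, not in $\Gamma$ as your opening convention says. Your computation (and your use of item 2 in item 3) already treats it that way.
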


Therefore, by Lemma \ref{lem:last}, by the definition of the action on $C(0)$ and by the equivalence of Proposition \ref{prop:equivalence}, 
$(\phi_{n+1} \circ \psi_{n+1})(a) = (s_0 \circ \cdots \circ s_n \circ r_n \circ \cdots \circ r_0)(a) = \mathbf{1}_{n+1} a=a$ for every $a \in A$. 

\begin{lemma}
    \label{lem:tec}
    The maps $\phi$ and $\psi$ defined in \eqref{eq:def-phi-psi} are morphisms of semiclosed operads. 
\end{lemma}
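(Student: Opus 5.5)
Since $\psi$ is a two-sided inverse of $\phi$ (both composites are identities, as shown just before the statement), it is enough to prove that $\phi$ is a morphism of semiclosed operads. The inverse of a bijective morphism of operads that commutes with $r$ and $s$ automatically preserves identity, composition, symmetric action, $r$ and $s$. The plan is to describe $\phi_n$ explicitly and then check each structure map against it.

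\textbf{Step 1: a closed formula for $\phi_n$.} Unfolding \eqref{eq:def-phi-psi} gives $\phi_n = s_0\circ\cdots\circ s_{n-1}$, viewed as a map $C(n)\to C(0)=A$. By \eqref{eq:comp2}, each $s$ commutes with precomposition. Together with Lemma \ref{lem:last}, this gives
\[
\phi_n(f) = \iota(\lambda x_1\ldots x_n.\, y\,x_1\cdots x_n)\circ(\ldots)
\]
in a form I can evaluate in $A$. Concretely, I will prove by induction on $n$ that for $f\in C(n)$ and $a_1,\ldots,a_n\in A$,
\[
\phi_n(f)\,a_1\cdots a_n = f\circ(a_1,\ldots,a_n).
\]
Here application in $A=G(C(0))$ is $\iota(\app)\circ(-,-)=r_0(1)\circ(-,-)$. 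The inductive step uses $\app(s(g),a)=r(s(g))(1,\ldots,1,a)=g(1,\ldots,1,a)$, which is the computation that precedes Proposition \ref{prop:initial}, applied after substituting the $a_i$ with the help of \eqref{eq:comp2}. Since $\phi_n(f)\in O_A(n)$ by well-definedness, $\phi_n(f)=\mathbf 1_n\phi_n(f)=\lambda^*\vec x.\,\phi_n(f)\vec x$. So $\phi_n(f)$ is the unique element $a\in O_A(n)$ with $a\vec x = f\circ(\vec x)$ in $A_{lin}[\vec x]$, where $f\circ(\vec x)$ is read through the $L$-algebra structure.

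\textbf{Step 2: operad structure.} Identity: $\phi_1(1)=s_0(1)=\iota(\lambda x.x)=\ii$. Composition: by the definition of $\circ$ in $O_A$,
\[
\phi_n(f)\circ(\phi_{k_1}(g_1),\ldots) = \lambda^*\vec x.\,\phi_n(f)(\phi_{k_1}(g_1)\vec x_1)\cdots.
\]
Applying the characterization of Step 1 twice, together with associativity in $C$, this equals $\lambda^*\vec x.\,(f(g_1,\ldots,g_n))\circ(\vec x)=\phi(f(g_1,\ldots,g_n))$. The symmetric action is handled the same way, using axiom 4 of Definition \ref{def:operad}.

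The main obstacle is making ``$f\circ(\vec x)$ with indeterminates'' rigorous, since $C$ only acts on actual elements of $C(0)$. My first attempt will stay inside $C$: express everything as $s$-iterates and compositions, and compare $s_0\cdots s_{m-1}$ applied to both sides, using \eqref{eq:comp1}, \eqref{eq:comp2} and naturality of $r,s$ to commute $s$ past compositions in the variables being abstracted. This avoids free extensions altogether. If that gets too tangled, the fallback is to use Lemma \ref{lem:abstr} and extensional reasoning in $A_{lin}[\vec x]$, based on the identities $\lambda^*x.\,ax = 1a$ and $1a=a$ for $a\in O_A(1)$.

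\textbf{Step 3: semiclosed structure.} $s_{O_A}$ is an inclusion and $\phi_{n}\circ s_n=\phi_{n+1}$ by definition, so $\phi$ commutes with $s$. For $r$, write
\[
\phi_{n+1}(r_n f)=\phi_n(s_n r_n f)=\phi_n(f).
\]
We need this to equal $\mathbf 1_{n+1}\phi_n(f)=r_{O_A}(\phi_n(f))$. Using Step 1, $\mathbf 1_{n+1}\phi_n(f)$ is $\lambda^*\vec x\,y.\,\phi_n(f)\vec x\,y$. The element $\phi_n(f)\vec x\,y$ is the application of $f\circ(\vec x)$ to $y$, which equals $r_n(f)\circ(\vec x,y)$ by \eqref{eq:comp1}. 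By the characterization in Step 1, this is exactly $\phi_{n+1}(r_nf)$.
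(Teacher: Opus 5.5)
\paragraph{The gap: Step 1 is only proved pointwise.}
Your Steps 2 and 3 both need Step 1 at the level of indeterminates: that $\phi_n(f)\,x_1\cdots x_n$ equals ``$f\circ(\vec x)$'' in $A_{lin}[\vec x]$. You need this so you can substitute the polynomials $\phi_{k_i}(g_i)\vec x_i$, invoke associativity of $C$, and then abstract with $\lambda^*$.

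What you actually prove is only the pointwise identity $\phi_n(f)\,a_1\cdots a_n=f\circ(a_1,\ldots,a_n)$ for $a_i\in A$. That is not enough. From $a\vec c=b\vec c$ for all $\vec c\in A^n$, with $a,b\in O_A(n)$, you cannot conclude $a=b$, because a linear $\lambda$-algebra need not be (weakly) extensional. The paper explicitly notes that an equation satisfied by $A$ need not hold in $A_{lin}[\vec x]$.

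So preservation of composition, of the symmetric action, and of $r$ are all left unproved. You flag this obstacle yourself but do not resolve it. Your second fallback (extensional reasoning) fails for exactly this reason. The first fallback, working inside $C$ with \eqref{eq:comp1} and \eqref{eq:comp2}, is the paper's route, but you only announce it. The paper proceeds as follows:
\begin{itemize}
\item induction on the arity of $f$;
\item using \eqref{eq:comp2} together with $\phi_{n+1}=\phi_n\circ s_n$ to remove a last argument equal to $1$;
\item a transposition to move a general argument into last position;
\item $r_n(f)=\app(f,1)$ for the $r$-compatibility.
\end{itemize}

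\paragraph{A way to make your Step 1 rigorous.}
Interpret polynomials inside $C$. Define $\Psi\colon A_{lin}[x_1,\ldots,x_m]\to C(m)$ on polynomials in which each $x_i$ occurs exactly once:
\begin{itemize}
\item a constant goes to itself in $C(0)$;
\item $x_i$ goes to $1$ in input $i$;
\item $pq$ goes to $\app(\Psi p,\Psi q)$, suitably permuted.
\end{itemize}
This map has the following properties.
\begin{itemize}
\item It is well defined, because each $\mathbf{BCI}$ axiom is the image under $\iota$ of a $\beta$-equality, e.g.\ $\iota(\mathbf{B}y_1y_2y_3)=\iota(y_1(y_2y_3))$.
\item It turns substitution into operadic composition.
\item Since $r(g)=\app(g,1)$ by \eqref{eq:comp1}, you get $\Psi(a\vec x)=r_{m-1}\cdots r_0(a)=\psi_m(a)$.
\end{itemize}
Hence $\Psi(t)=\Psi((\lambda^*\vec x.t)\vec x)=\psi_m(\lambda^*\vec x.t)$. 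So $\Psi$ is injective, and $\Psi(\phi_m(h)\vec x)=h$. You can therefore define ``$f\circ(\vec x)$'' as $\Psi^{-1}(f)$. Your Step 2 then goes through: compare $\Psi$-images, then abstract using Lemma \ref{lem:abstr}.

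\paragraph{Step 3 contains a false equation.}
The equation $\phi_n(s_nr_nf)=\phi_n(f)$ would require $s_nr_n=\id$, but only $r_ns_n=\id$ holds. For example, in $L$ we have $s(r(x_1))=\lambda y.x_1y\neq_\beta x_1$. Also, $\phi_n(f)$ need not lie in $O_A(n+1)$.

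In fact you do not need Step 1 for $r$ at all. The identity $\mathbf{1}_{n+1}a=s_0\cdots s_nr_n\cdots r_0(a)$ is the one used to show $\phi\psi=\id$. Combined with $r_{n-1}\cdots r_0s_0\cdots s_{n-1}=\id$, it gives
\[
r^A(\phi_nf)=s_0\cdots s_nr_nf=\phi_{n+1}(r_nf).
\]

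A minor slip: application in $A$ is $r_1(1)\circ(-,-)$, not $r_0(1)\circ(-,-)$.
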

\begin{proof}
The proof is technical and can be found in Appendix \ref{app:proofs}. 
\end{proof}

This concludes the proof of Proposition \ref{prop:oaisoc}. 


Keeping Proposition \ref{prop:equivalence} in mind, as a consequence of Propositions \ref{prop:oaisoc} and \ref{prop:oaisof}, we have thus proved the following. 

\begin{theorem}
    \label{thm:final}
    The pair of functors $O_{(-)} \circ G$ and $(-)(0)$ is an adjoint equivalence between $\alg(L)$ and the category of semiclosed operads. 
\end{theorem}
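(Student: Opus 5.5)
Our plan is to assemble the equivalence from the pieces already established. Write $\Phi := O_{(-)} \circ G : \alg(L) \to \mathbf{SCOp}$, where $\mathbf{SCOp}$ denotes the category of semiclosed operads, and $\Psi := (-)(0) : \mathbf{SCOp} \to \alg(L)$. Both are functors by the propositions of the preceding section, together with Proposition \ref{prop:equivalence} for $G$. We will exhibit natural isomorphisms $\eta : \id_{\alg(L)} \Rightarrow \Psi\Phi$ and $\varepsilon : \Phi\Psi \Rightarrow \id_{\mathbf{SCOp}}$. Any equivalence can then be upgraded to an adjoint equivalence by modifying one of the two isomorphisms, for instance replacing $\varepsilon$ by the standard corrected counit. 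So it suffices to produce the two natural isomorphisms.

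\emph{Unit.} Let $A$ be an $L$-algebra. By Proposition \ref{prop:oaisof} applied to the linear $\lambda$-algebra $G(A)$, we have $\Psi\Phi(A) = O_{G(A)}(0) \simeq F(G(A))$, and this isomorphism is the identity on underlying sets. By Proposition \ref{prop:equivalence}, $FG(A) \simeq A$, again via the identity function. Composing gives $\eta_A$, which is the identity map on the set $A$. Naturality is then immediate. On morphisms, $\Psi\Phi(f) = (O_{G f})_0 = f$, because $O_{(-)}$ acts by $\phi_n(a) = f(a)$ and $G$ acts as the identity on arrows. Hence every naturality square has identical legs.

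\emph{Counit.} Let $C$ be a semiclosed operad. Proposition \ref{prop:oaisoc} supplies mutually inverse morphisms of semiclosed operads $\phi : C \to O_{G(C(0))}$ and $\psi : O_{G(C(0))} \to C$; the fact that they are morphisms is Lemma \ref{lem:tec}. We set $\varepsilon_C := \psi$. What remains, and what we expect to be the main obstacle, is naturality in $C$. Let $h : C \to D$ be a morphism of semiclosed operads. We must show $h_n \circ \psi^C_n = \psi^D_n \circ (O_{G h_0})_n$ on $O_{G(C(0))}(n) \subseteq C(0)$.

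We will prove this by induction on $n$, using the recursive definition \eqref{eq:def-phi-psi}. For $n=0$ both sides equal $h_0$, because $\psi_0$ is the identity. For the inductive step, $\psi_{n+1} = r_n \circ \psi_n$ on $O_A(n+1) \subseteq O_A(n)$. Since $h$ commutes with $r$ and $(O_{Gh_0})_{n+1}$ is the restriction of $(O_{Gh_0})_n$, we get
\[h_{n+1} \psi^C_{n+1}(a) = r^D_n h_n \psi^C_n(a) = r^D_n \psi^D_n h_0(a) = \psi^D_{n+1} h_0(a).\]
One subtlety must be checked: the recursion \eqref{eq:def-phi-psi} is stated for $n>0$, and the base case $\psi_1 = r_0 \circ \psi_0$ has to be read consistently with it. Once the induction goes through, naturality of $\phi = \psi^{-1}$ follows automatically.

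This gives $\Phi\Psi \simeq \id$ and $\id \simeq \Psi\Phi$, so $\Phi$ and $\Psi$ form an equivalence and hence an adjoint equivalence. If one prefers to avoid the abstract upgrade, one can instead verify a triangle identity directly. At level $0$, $\psi_0$ and $\eta$ are both identity functions, so $\Psi\varepsilon \circ \eta\Psi = \id$ holds on underlying sets, and the other triangle identity follows because $\Psi$ and $\Phi$ are mutually inverse equivalences.
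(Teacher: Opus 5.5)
Your proposal is correct and follows the same route as the paper, which obtains the theorem directly by combining Proposition~\ref{prop:equivalence} with Propositions~\ref{prop:oaisof} and~\ref{prop:oaisoc}: the unit is given by identity functions and the counit by $\psi$. Two of your steps supply details the paper leaves implicit. The first is your inductive check that $\psi$ is natural in $C$, which uses that morphisms of semiclosed operads commute with $r$ and that $O_{Gh_0}$ acts as $h_0$ at every level. The second is your treatment of the upgrade to an adjoint equivalence via the triangle identities.
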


\subparagraph*{Further work}
We conclude by pointing out two directions for future research.
The first is to extend the present framework to the $\lambda I$-calculus, whose semantics has recently attracted renewed interest \cite{Remy}. 
We aim to identify an operad-like structure capturing the abstract categorical semantics of the $\lambda I$-calculus (in which duplication of variables is allowed) in the spirit of Fiore, Plotkin, and Turi's approach to abstract syntax \cite{FPT}
and to use it as a foundation for studying $\lambda I$-algebras (based on the combinators $\mathbf{B}$, $\mathbf{C}$, $\mathbf{I}$, and $\mathbf{S}$), together with the corresponding notion of semiclosed structure. 
The category $\mathbf{Surj}$ of finite sets and surjective maps provide the appropriate combinatorial setting, playing for the $\lambda I$-calculus the same role that finite sets and bijections plays for the linear one.
The second direction is more conceptual. 
Jacobs' categorical treatment of type theory \cite{J92} and \cite[Section 2.5]{JacobsCLTT} suggests a close connection between the semantics of the typed and untyped $\lambda$-calculi. 
It is interesting to instantiate this view in the linear setting, which appears particularly promising since every linear $\lambda$-term admits a principal simple type, suggesting that the boundary between typed and untyped semantics is considerably thinner than in the classical cartesian setting.
\enlargethispage{\baselineskip}

\bibliography{csl27.bib}

\appendix 
\section{Missing proofs}
\label{app:proofs}

\subparagraph*{Proof of the claim at the beginning of Section \ref{sec:cat}.}
    We prove that the action defined makes $F(A)$ a $L$-algebra. 
    For the sake of an example, we check that, if $M \in L(n)$ and $N_i \in L(k_i)$ for $1 \le i \le n$, then 
\begin{align*}
    M \cdot (N_1, \ldots, N_n) \cdot (\vec{a}_1, \ldots, \vec{a}_n) & = \llbracket M[N_1/x_1, \ldots, N_n/x_n] \rrbracket[\vec{a}_1/\vec{y}_1, \ldots, \vec{a}_n/\vec{y}_n] \\
    & = \llbracket M \rrbracket [\llbracket N_1 \rrbracket/x_1, \ldots, \llbracket N_n \rrbracket/x_n][\vec{a}_1/\vec{y}_1, \ldots, \vec{a}_n/\vec{y}_n]        \\
    & = \llbracket M \rrbracket [\llbracket N_1 \rrbracket[\vec{a}_1/\vec{y}_1]/x_1, \ldots, \llbracket N_n \rrbracket[\vec{a}_n/\vec{y}_n]/x_n] \\
    & = \llbracket M \rrbracket [\llbracket N_1 \rrbracket \cdot \vec{a}_1 /x_1, \ldots, \llbracket N_n \rrbracket \cdot \vec{a}_n/x_n]\\ 
    & = M \cdot (N_1  \cdot \vec{a}_1, \ldots, N_n \cdot \vec{a}_n) 
\end{align*}
for every $(\vec{a}_1, \ldots, \vec{a}_n) \in A^{k_1 + \cdots + k_n}$, as desired. \qed

\subparagraph*{Proof of Lemma \ref{lem:functorf}.}
    If $M$ is a variable, the claim is trivial. 
    If $M=PQ$ with $P \in L(i)$ and $Q \in L(j)$, then 
    \begin{align*}
        f(\llbracket PQ \rrbracket [a_1/x_1, \ldots, a_n/x_n]) & = f((\llbracket P \rrbracket [a_1/x_1, \ldots, a_i/x_i])(\llbracket Q \rrbracket [a_1/x_{i+1}, \ldots, a_n/x_n])) \\
        & = f(\llbracket P \rrbracket [a_1/x_1, \ldots, a_i/x_i])f(\llbracket Q \rrbracket [fa_{i+1}/x_{i+1}, \ldots, fa/x_n]) \\
        & = (\llbracket P \rrbracket [fa_1/x_1, \ldots, fa_i/x_i])(\llbracket Q \rrbracket [fa_{i+1}/x_{i+1}, \ldots, fa_n/x_n])\\
        & = \llbracket PQ \rrbracket [fa_1/x_1, \ldots, fa_n/x_n] \text{.}
    \end{align*}
    If $M= \lambda x. N$, then 
    \begin{align*}
        f(\llbracket \lambda x.N \rrbracket [a_1/x_1, \ldots, a_n/x_n]) & = f((\lambda^*.\llbracket N \rrbracket)[a_1/x_1, \ldots, a_n/x_n])\\
        & = f(\lambda^*.\llbracket N \rrbracket[a_1/x_1, \ldots, a_n/x_n]) \\
        & = \lambda^* x.f(\llbracket N \rrbracket[a_1/x_1, \ldots, a_n/x_n]) \\
        & = \lambda^* x.\llbracket N \rrbracket[fa_1/x_1, \ldots, fa_n/x_n] \\ 
        & =  (\lambda^* x.\llbracket N \rrbracket)[fa_1/x_1, \ldots, fa_n/x_n] \\
        & = (\llbracket \lambda x.N \rrbracket)[fa_1/x_1, \ldots, fa_n/x_n]\text{,}
    \end{align*}
    where we have used Remark \ref{rem:smart}. \qed

    \subparagraph*{Missing part of Proposition \ref{prop:equivalence}.}
    It remains to prove that 
    for every $x_1, \ldots, x_n \vdash M \in L(n)$ and every $a_1, \ldots, a_n \in A$ 
    $ \llbracket M \rrbracket [a_1/x_1, \ldots, a_n/x_n] = M \cdot (a_1,\ldots, a_n)$
    by induction on the structure of $M$. 
    For the sake of an example, we only detail the inductive step of $\lambda$-abstraction and consider the case $M= \lambda x. PQ$ with $x$ free in $Q$, $P \in L(i)$, $Q \in L(j)$ and $i+j=n$.
    Then 
    \begin{align*}
        \llbracket \lambda x. PQ \rrbracket [a_1/x_1, \ldots, a_n/x_n] & = \llbracket \mathbf{C}(\lambda x. P)Q \rrbracket [a_1/x_1, \ldots, a_n/x_n] \\
        & = \llbracket \mathbf{C} \rrbracket (\llbracket\lambda x. P  \rrbracket [a_1/x_1, \ldots, a_i/x_i]) (\llbracket  Q \rrbracket [a_{i+1}/x_{i+1}, \ldots, a_n/x_n]) \\
        & = (\mathbf{C} \cdot ()) ((\lambda x. P) \cdot (a_1, \ldots, a_i)) ( Q \cdot (a_{i+1}, \ldots, a_n)) \\
        & = (\lambda x. PQ) \cdot (a_1, \ldots, a_n)\text{.}
    \end{align*}
    The case when $x$ is free in $Q$ is analogous. \qed

\subparagraph*{Proof of Lemma \ref{lem:subst}.}
    The proof is by induction on the structure of $M$. 
    If $n=1$ and $M=x_1$, then the claim is trivially true. 
    If $M=PQ$, with, say, $x_n$ free in $Q$, then 
    \begin{align*}
        \iota(M[N/x_n]) & = \iota(PQ[N/x_n]) \\
        & = \app(\iota(P), \iota(Q[N/x_n])) & \text{by definition of $\iota$}\\
        & = \app(\iota(P),\iota(Q)(1, \ldots, 1,\iota(N))) & \text{by induction}\\
        & =  \app(\iota(P), \iota(Q))(1, \ldots, 1,\iota(N)) & \text{by associativity}\\
        & = \iota(M)(1, \ldots, 1,\iota(N)) & \text{by definition of $\iota$}
    \end{align*}
    If $M=\lambda x. M'$, then 
    \begin{align*}
        \iota(M[N/x_n]) & = \iota((\lambda x. M')[N/x_n]) \\
        & = \iota(\lambda x. M'[N/x_n]) \\ 
        & = s(\iota(M')[N/x_n]) & \text{by definition of $\iota$} \\
        & = s(\iota(M')(1, \ldots, 1,\iota(N))) & \text{by induction} \\
        & = s(\iota(M'))(1, \ldots, 1,\iota(N)) & \text{by \eqref{eq:comp2}} \\
        & = \iota(M)(1, \ldots, 1,\iota(N)) & \text{by definition of $\iota$.}  
    \end{align*}
    This proves the desired claim. \qed

\subparagraph*{Missing part of the proof of Lemma \ref{lem:oa-operad}.}
We check some conditions:  
    \begin{align*}
        a \circ (\ii, \ldots, \ii) & = \lambda^* x_1 \ldots x_n. a(\ii x_1) \cdots (\ii x_n) \\
        & = \lambda^* x_1 \ldots x_n. ax_1 \cdots x_n \\
        & = \mathbf{1}_n a \\
        & = a;
    \end{align*}
    \begin{equation*}
        \ii \circ a = \lambda^* x_1 . \ii(ax_1) = \lambda^* x_1. ax_1 = \mathbf{1} a = a; 
    \end{equation*}
    \begin{align*}
        & a (a_1 (c^1_1, \ldots, c^1_{k_1}), \ldots, a_n(c^n_1, \ldots, c^n_{k_n})) \\
        & \hspace*{2cm} = a \circ (\lambda^* \vec{x}^1.a_1 (c^1_1 \vec{x}^1_1) \cdots (c^1_{k_1} \vec{x}^1_{k_1}), \ldots, \lambda^* \vec{x}^n.a_n (c^n_1 \vec{x}^n_1) \cdots (c^n_{k_n} \vec{x}^n_{k_n})) \\
        & \hspace*{2cm} = \lambda^* \vec{y}. a(a_1 (c^1_1 \vec{y}^1_1) \cdots (c^1_{k_1} \vec{y}^1_{k_1}) \cdots a_n (c^n_1 \vec{y}^n_1) \cdots (c^n_{k_n} \vec{y}^n_{k_n}) )\\
        & \hspace*{2cm} = \lambda^* \vec{x}. a((a_1 \vec{x}_1) \cdots( a_n \vec{x}_n) ) \circ (c^1_1, \ldots, c^k_{k_n}) \\
        &  \hspace*{2cm} = a(a_1, \ldots, a_n) \circ (c^1_1, \ldots, c^k_{k_n}).
    \end{align*}
The others are similar.  \qed

\subparagraph*{Proof of Lemma \ref{lem:tec}.}
We show that $\phi$ is a morphism; 
that $\psi$ is a morphism can be shown with a similar computation. 
We leave to the reader to prove that $\phi_1(1)=\ii$ and that $\phi_n(f \sigma)=\phi_n(f)\sigma$ for every $f \in C(n)$ and $\sigma \in S_n$.
We prove that 
\begin{equation}
    \label{eq:inductive}
    \phi_{k_1 + \cdots + k_n}(f(g_1, \ldots, g_n))=\phi_n(f)(\phi_{k_1}(g_1), \ldots, \phi_{k_n}(g_n))
\end{equation}
by induction on $n \in \mathbb{N}$. 
The base case is trivial. 
Now, assume \eqref{eq:inductive}. 
Before treating the general case, we claim that 
\begin{equation}
    \label{eq:claim2}
    \phi_{k_1 + \cdots + k_{n+1}}(f(g_1, \ldots, g_n,1))=\phi_{n+1}(f)(\phi_{k_1}(g_1), \ldots, \phi_{k_n}(g_n),1) \text{.}
\end{equation}
Indeed, 
\begin{align*}
    \phi_{k_1 + \cdots + k_{n+1}}(f(g_1, \ldots, g_n,1)) & = \phi_{k_1 + \cdots + k_{n}+1}(f(g_1, \ldots, g_n,1)) \\
    & = \phi_{k_1 + \cdots + k_{n}}(s_{k_1 + \cdots + k_{n}} f(g_1, \ldots, g_n,1)) & \text{by \eqref{eq:def-phi-psi}} \\
    & = \phi_{k_1 + \cdots + k_{n}}(s_n(f)(g_1, \ldots, g_n)) & \text{by \eqref{eq:comp2}} \\
    & = \phi_n(s_n(f))(\phi_{k_1}(g_1), \ldots, \phi_{k_n}(g_n)) & \text{by induction} \\
    & = \phi_{n+1}(f) (\phi_{k_1}(g_1), \ldots, \phi_{k_n}(g_n), 1) & \text{by \eqref{eq:def-phi-psi}}.
\end{align*}
Finally, let $\tau =(1,n+1) \in S_{n+1}$ and observe that $\tau^{-1}=\tau$. We have: 
\begin{align*}
    & \phi_{k_1 + \cdots + k_{n+1}} (f(g_1, \ldots, g_{n+1}))  \\
    & \hspace*{1cm} = \phi_{1+ \cdots + 1 +k_{n+1}} (f(g_1, \ldots, g_n,1) \circ (1, \ldots, 1, g_{n+1})) &\\
    & \hspace*{1cm} = \phi_{k_{n+1}+1+\cdots+1} ((f(g_1, \ldots, g_n,1) \tau \circ (g_{n+1},1 \ldots, 1))\hat{\tau}) & \\
    & \hspace*{1cm} = (\phi_{k_1 + \cdots + k_{n}+1}(f(g_1, \ldots, g_n,1) \tau) \circ (\phi_{k_{n+1}}(g_{n+1}),1, \ldots, 1)) \hat{\tau} & \text{by \eqref{eq:claim2}} \\
    & \hspace*{1cm} = \phi_{k_1 + \cdots + k_{n}+1}(f(g_1, \ldots, g_n,1)) \circ (1, \ldots, 1, \phi_{k_{n+1}}(g_{n+1})) & \\
    & \hspace*{1cm} = \phi_{n+1}(f)(\phi_{k_1}(g_1), \ldots, \phi_{k_n}(g_n),1) \circ (1, \ldots, 1, \phi_{k_{n+1}}(g_{n+1}))  & \text{by \eqref{eq:claim2}} \\
    & \hspace*{1cm} = \phi_{n+1}(f)(\phi_{k_1}(g_1), \ldots, \phi_{k_{n+1}}(g_{n+1}))
\end{align*}
as we wanted to show. 
Finally, $\phi$ and $\psi$ preserve the semiclosed structure, i.e., if $s^A$ and $r^A$ denote the section and retraction of the operad $O_A$: 
\begin{equation}
    \label{eq:phi}
    \phi_n \circ s_n = s_n^A \circ \phi_{n+1}\text{,} \quad \phi_{n+1} \circ r_n = r^A_n \circ \phi_n\text{,} \quad s_n \circ \psi_{n+1} = \psi_n \circ s^A_n\text{,} \quad \psi_{n+1} \circ r^A_n = r_n \circ \psi_n \text{.}
\end{equation} 
First of all observe that the second two relations in \eqref{eq:phi} can be obtained from the first two by precomposing with $\psi_n$ and postcomposing with $\psi_{n+1}$. 
The first equation in \eqref{eq:phi} is obvious since $s^A_n$ is the inclusion. 
For the second, observe that by \eqref{eq:comp1} $r_n(f)= \app (f,1)$ for every $f \in C(n)$; therefore,   
    $$\phi_{n+1}(r_n(f))=\phi_{n+1}(\app(f,1)) = \phi_2(\app)(\phi_n(f), \ii) = \mathbf{1}\phi_n(f)\ii = \phi_n(f)\ii=r_n^A(\phi_n(f))$$
as desired. \qed

\section{Axioms defining linear lambda algebras}
\label{app:linear}
The $\lambda^*$-closure of the equations of Table \ref{tab:linear} are the following, in left-right, top-bottom order:

\begin{itemize}
  \item $\bb \ii \bb = \bb$;
  \item $\bb \ii (\bb \bb \ii) = \bb \bb \ii$;
  \item $\bb (\bb \ii)(\cc (\bb \bb (\bb \bb \ii)) \ii ) = \cc (\bb \bb (\bb \bb \ii)) \ii$;
  \item $\bb \ii \cc = \cc$;
  \item $\bb \ii (\bb \cc \ii) = \bb \cc \ii$; 
  \item $\bb (\bb \ii)  (\cc (\bb \bb (\bb \cc \ii)) \ii ) = \cc (\bb \bb (\bb \cc \ii)) \ii$; 
  \item $\bb \ii \ii = \ii$; 
  \item $\bb (\bb \ii) \ii = \cc (\bb \bb \ii) \ii$;
  \item $\cc (\bb \cc (\bb (\bb \bb) (\bb (\bb \cc) (\cc (\bb \bb (\bb \cc (\bb (\bb \bb) \ii))) \ii))) ) \ii
        =\cc (\bb \bb (\bb \bb (\bb \cc \ii))) (\cc (\bb \bb \ii) \ii)$;
  \item $\cc (\bb \cc (\bb (\bb \bb) (\bb (\bb \cc) (\cc (\bb \bb (\bb \bb (\bb \bb \ii))) \ii))) ) \ii
        =\cc (\bb \bb (\bb \bb (\bb (\bb \cc) \ii))) (\cc (\bb \bb (\bb \cc \ii)) \ii)$;
  \item $\cc (\bb \cc (\bb (\bb \bb) (\bb (\bb \bb) (\cc (\bb \bb (\bb \bb \ii)) \ii))) ) \ii
        =\cc (\bb \bb (\bb \bb (\bb \bb \ii))) (\cc (\bb \bb (\bb \bb \ii)) \ii)$;
  \item $\cc (\bb \cc (\bb (\bb \bb) (\bb (\bb \cc) (\cc (\bb \bb (\bb \cc (\bb (\bb \cc) \ii))) \ii))) ) \ii
        =\cc  (\bb \bb (\bb \cc (\bb (\bb \cc) (\cc (\bb \bb (\bb \cc \ii)) \ii))) ) \ii$;
  \item $\cc (\bb \cc (\bb (\bb \bb) (\bb (\bb \cc) (\cc (\bb \bb (\bb \bb (\bb \cc \ii))) \ii))) ) \ii
        =\cc  (\bb \bb (\bb \cc (\bb (\bb \bb) (\cc (\bb \bb \ii) \ii))) ) \ii$;
  \item $\cc (\bb \cc (\bb (\bb \bb) (\bb (\bb \bb) (\cc (\bb \bb (\bb \cc \ii)) \ii))) ) \ii
        =\cc  (\bb \bb (\bb \cc (\bb (\bb \cc) (\cc (\bb \bb (\bb \bb \ii)) \ii)))) \ii$. 
\end{itemize}

They can be computed mechanising the operator $\lambda^*$ through, e.g., a Python script. 

\end{document}